\keywords{Graph rewriting, Termination, Weighted Type Graphs, DPO}
\colorlet{dblue}{blue!40!black}
\setlist{leftmargin=1cm}
\newcommand{\card}[1]{\##1}
\newcommand{\pbpostrong}{PBPO$^{+}$\xspace}
\newcommand{\catname}[1]{{\normalfont\textbf{#1}}\xspace}
\newcommand{\Set}{\catname{Set}}
\newcommand{\Graph}{\catname{Graph}}
\newcommand{\SimpleGraph}{\catname{SGraph}}
\newcommand{\CC}{\catname{C}}
\let\emptyset\varnothing
\newcommand{\mono}{\rightarrowtail}
\newcommand{\epi}{\twoheadrightarrow}
\newcommand{\iso}{\cong}
\newcommand{\domain}[1]{\mathit{dom}(#1)}
\tikzset{default/.style={
  thick,
  every node/.style={circle},
  level distance=12mm, 
  inner sep=.5mm}}
\tikzset{smallCircle/.style={circle,fill=black,inner sep=0mm,outer sep=1mm,minimum size=1mm}}
\tikzset{paint/.style={very thick,draw=#1!50!black,fill=#1,opacity=.4}}
\tikzset{paintopaque/.style={very thick,draw=#1!50!black!60,fill=#1!60}}
\tikzset{loop/.style={out=-30+#1,in=30+#1,distance=2.7em,pos=0.5}}
\tikzset{thinloop/.style={out=-20+#1,in=20+#1,distance=2.7em,pos=0.5}}
\tikzset{emptystep/.style={-,dotted,line cap=round,dash pattern=on 0 off 3.00000}}
\tikzset{b/.style={anchor=north,at=(#1.south)}}
\tikzset{br/.style={anchor=north west,at=(#1.south east)}}
\tikzset{bl/.style={anchor=north east,at=(#1.south west)}}
\tikzset{bw/.style={anchor=north west,at=(#1.south west)}}
\tikzset{be/.style={anchor=north east,at=(#1.south east)}}
\tikzset{a/.style={anchor=south,at=(#1.north)}}
\tikzset{ar/.style={anchor=south west,at=(#1.north east)}}
\tikzset{al/.style={anchor=south east,at=(#1.north west)}}
\tikzset{aw/.style={anchor=south west,at=(#1.north west)}}
\tikzset{ae/.style={anchor=south east,at=(#1.north east)}}
\tikzset{r/.style={anchor=west,at=(#1.east)}}
\tikzset{l/.style={anchor=east,at=(#1.west)}}
\tikzset{rn/.style={anchor=north west,at=(#1.north east)}}
\tikzset{eta/.style={very thick,->,cblue!90!black}}
\tikzset{beta/.style={very thick,->,corange!80!white!90!black}}
\tikzset{devcirc/.style={circle,draw,fill=white,inner sep=0,minimum size=4\pgflinewidth}}
\tikzset{dev/.style={postaction={decorate},decoration={
  markings,
  mark=at position .5 with \node [devcirc] {};}}
}
\tikzset{medium tree/.style={
    level 1/.style={sibling distance=17mm},
    level 2/.style={sibling distance=9mm},
    level 3/.style={sibling distance=5mm},
    level 4/.style={sibling distance=4mm},
  }}
\tikzset{startAt/.style={inner sep=0mm,r=#1,xshift=-1.2mm,yshift=.8mm}}
\tikzset{graphNode/.style={circle,draw=black,inner sep=.5mm,outer sep=.5mm}}
\tikzset{morph/.style={-{>[width=1.7mm]}}}
\tikzset{mono/.style={{>[width=1.7mm]}-{>[width=1.7mm]}}}
\tikzset{regmono/.style={right hook-{>[width=1.7mm]}}}
\tikzset{edge/.style={line width=0.2mm,-{Triangle[width=1.4mm]}}}
\newcommand{\arrowTriangle}[2]{
  \pgfpathmoveto{\pgfpoint{-0.01#1+#2}{.6#1}}
  \pgfpathlineto{\pgfpoint{#1+#2}{0}}
  \pgfpathlineto{\pgfpoint{-0.01#1+#2}{-.6#1}}
  \pgfusepathqfill
}
\newdimen\prearrowsize
\newdimen\arrowsize
\newdimen\temparrowsize
\newcommand{\arrowscale}{5}
\newcommand{\setarrowsize}{
  \arrowsize=0.000000001pt
  \prearrowsize=\arrowscale\pgflinewidth
  \normalizearrowsize
}
\newcommand{\normalizearrowsize}{
  \ifdim\prearrowsize>2mm
    \addtolength{\arrowsize}{2mm}
    \addtolength{\prearrowsize}{-2mm}
    \temparrowsize=0.5\prearrowsize
    \prearrowsize=\temparrowsize
  \else
  \fi

  \addtolength{\arrowsize}{\prearrowsize}
}
  \arrowTriangle{\arrowsize}{-0.1\arrowsize}
  \arrowTriangle{\arrowsize}{-0.1\arrowsize}
  \arrowTriangle{\arrowsize}{-0.1\arrowsize+.8\arrowsize}
  \arrowTriangle{\arrowsize}{-0.1\arrowsize}
  \arrowTriangle{\arrowsize}{-0.1\arrowsize+.8\arrowsize}
  \arrowTriangle{\arrowsize}{-0.1\arrowsize+1.6\arrowsize}
\tikzstyle{gyellow}=[draw=black!80,top color=white!50,bottom color=black!20]
\tikzstyle{gblue}=[draw=blue!50,top color=white,bottom color=blue!60]
\tikzstyle{gred}=[draw=red!50,top color=white,bottom color=red!60]
\tikzstyle{ggreen}=[draw=blue!80!green!90!black,top color=white,bottom color=blue!80!green!60]
\tikzstyle{roundNode}=[gyellow,thick,circle,minimum size=4mm,inner sep=0.5mm]
\definecolor{cblue}{rgb}{0,0.4,0.7}
\definecolor{clighterblue}{rgb}{0,0.6,1.0}
\colorlet{cred}{red}
\colorlet{cgreen}{green!80!black}
\colorlet{corange}{orange!70!red}
\colorlet{cpureorange}{orange}
\colorlet{cpurple}{clighterblue!50!cred}
\colorlet{clightblue}{clighterblue!50!cblue!40}%
\colorlet{clightred}{cred!40}
\colorlet{clightgreen}{cgreen!80!cblue!40}
\colorlet{clightyellow}{corange!40!yellow!50}
\colorlet{clightorange}{cred!50!orange!40}
\colorlet{clightpurple}{clighterblue!50!cred!50}
\colorlet{cdarkred}{cred!70!black}
\colorlet{cdarkgreen}{cgreen!60!black}
\colorlet{cdarkblue}{cblue!60!black}
\colorlet{chighlight}{orange!50!yellow!60}
\tikzset{pgnode/.style={smallCircle,fill=white,draw=black,minimum size=1.3mm,outer sep=0.5mm}}
\tikzset{pgnodecolor/.style={pgnode,minimum size=4mm,scale=0.9}}
\tikzset{pgnodebig/.style={roundNode,gyellow,outer sep=1mm}}
\tikzset{pgrelation/.style={ultra thick,cblue!80!black,decorate,decoration={snake,amplitude=.4mm,segment length=4mm}}}
\tikzset{exi/.style={densely dotted}}
\tikzset{decweak/.style={-,green!50!black,very thick,opacity=0.7}}
\tikzset{decstrict/.style={->,orange,very thick,opacity=0.7}}
\tikzset{graphNode/.style={circle,draw=black,inner sep=.5mm,outer sep=1mm}}
\tikzset{sloop/.style={looseness=7}}
\tikzset{interconnect/.style={dotted,cred}}
\tikzset{match/.style={cdarkgreen,very thick}}
\tikzset{jigsaw/.style={circle,draw=black,minimum size=2mm,fill=white,minimum size=3.5mm}}
\colorlet{myblue}{blue!80!black}
\colorlet{mygreen}{cdarkgreen}
\colorlet{myred}{cred}
\colorlet{myorange}{corange}
\colorlet{mypurple}{blue!40!cred}
\tikzset{smalljigsaw/.style={rectangle,rounded corners=3mm,inner sep=1mm}}
\tikzset{epattern/.style={}}
\tikzset{eset/.style={draw=black!50}}
\tikzset{npattern/.style={rectangle,rounded corners=2mm,draw=black,inner sep=0.4mm,outer sep=.5mm,minimum size=4.5mm}}
\tikzset{nset/.style={npattern,draw=black!50,fill=white}}
\tikzset{label/.style={rectangle,scale=0.85,inner sep=0.5mm,outer sep=0.5mm,fill=white}}
\tikzset{tlabel/.style={rectangle,scale=0.85,inner sep=0.5mm,outer sep=0.5mm}}
\tikzset{short/.style={node distance=10mm}}
\newcommand{\annotate}[2]{
  \node at (#1.north east) [anchor=south west,xshift=-.75mm,yshift=-.75mm,label,outer sep=0mm] {#2};
}
\newcommand{\graphbox}[8]{
  \begin{scope}[xshift=#2,yshift=#3]
    \draw [rounded corners=2mm] (0,0) rectangle (#4,-#5);
    \node at (0,0mm) [anchor=north west,inner sep=1mm] {#1};
    \begin{scope}[xshift=#4/2+#6,yshift=#7] 
    #8
    \end{scope}
  \end{scope}
}
\newcommand{\vertex}[2]{%
  \begin{tikzpicture}[baseline=-1ex]%
    \node [rectangle,rounded corners=1.5mm,inner sep=0mm,minimum size=3.5mm,fill=#2] {$#1$};%
  \end{tikzpicture}%
}
\tikzset{styleNode/.style={graphNode,rectangle,rounded corners=2mm,fill=white,inner sep=1.5mm,outer sep=0}}
\tikzset{styleEdge/.style={->,shorten <= 2mm,shorten >= 2mm}}
\tikzset{styleEdgeNode/.style={rectangle,rounded corners=2mm,fill=white,draw=none,outer sep=0,inner sep=1mm,pos=0.45}}
\DeclarePairedDelimiterX\setvbar[2]{\{}{\}}{#1 \nonscript\;\delimsize \vert \nonscript\; #2}
\DeclarePairedDelimiterX\setcolon[2]{\{}{\}}{#1 : #2}
\renewcommand{\geq}{\geqslant}
\newcommand{\doublearrowtail@inner}[2]{%
  \vcenter{\offinterlineskip
    \halign{%
      ##\cr
      $\m@th#1\rightarrowtail$\cr
      \makebox[\widthof{$\m@th#1\rightarrowtail$}][s]{%
        $\m@th#1\leftarrowtail$%
      }\cr
    }%
  }%
}
\DeclareRobustCommand{\doublearrowtail}{%
  \mathrel{\mathpalette\doublearrowtail@inner\relax}%
}
\newcommand{\bigsub}[1]{{\thickmuskip=2mu plus 1mu minus 1mu\scriptsize\begin{array}{@{}c@{}}#1\end{array}}}
\newcommand{\elements}{\mathbb{E}}
\newcommand{\awtg}{\mathcal{T}}
\newcommand{\wtg}{(T, \elements, S, \ww)}
\newcommand{\DPOstep}[4]{{#1} \Rightarrow_\mathrm{DPO}^{#3, #4} {#2}}
\newcommand{\leftto}{\leftarrow}
\tikzset{graphNode/.style={circle,draw=black,inner sep=.5mm,outer sep=1mm}}
\renewcommand{\emptyset}{\varnothing}
\newcommand{\set}[1]{\{\,#1\,\}}
\newcommand{\tuple}[1]{\langle\,#1\,\rangle}
\newcommand{\categoryc}{\catname{C}}
\newcommand{\Hom}{\mathrm{Hom}}
\newcommand{\homset}[2]{\Hom(#1, #2)}
\newcommand{\ob}[1]{\mathit{Ob}(#1)}
\newcommand{\compi}[2]{#2 \circ #1}
\newcommand{\counting}[2]{\#\set{ #2 \circ {-} = #1 }}
\newcommand{\countingmin}[4]{
    \# \set{ \tau \in \set{ #2 \circ {-} = #1 } \mid \not\exists #4.\ \tau = {#3 \circ #4} }
}
\newcommand{\ww}{\mathbf{w}}
\newcommand{\weight}[2][]{\ww_{#1}(#2)}
\newcommand{\weighti}[3][]{\ww_{#1}(\set{{-} \circ #3 = #2})}
\newcommand{\step}[2][]{\Rightarrow^{#1}_{#2}}
\newcommand{\arule}{\rho}
\newcommand{\aframework}{\mathfrak{F}}
\newcommand{\asystem}{\mathfrak{S}}
\newcommand{\leftsq}[1]{\textit{left}(#1)}
\newcommand{\rightsq}[1]{\textit{right}(#1)}
\newcommand{\posquare}[8]{%
  \langle\, 
  #5 \stackrel{#4}{\leftarrow} #1 \stackrel{#2}{\to} #3;
  \;
  #5 \stackrel{#6}{\to} #8 \stackrel{#7}{\leftarrow} #3
  \,\rangle
}
\newcommand{\ozero}{\mathbf{0}}
\newcommand{\oone}{\mathbf{1}}
\newcommand{\ostrict}{\prec}
\newcommand{\ostricti}{\succ}
\newcommand{\oweak}{\le}
\newcommand{\oweaki}{\ge}
\newcommand{\wf}{well-founded}
\newcommand{\SN}{\mathrm{SN}}
\newcommand{\ring}{\tuple{S,\oplus,\otimes,\ozero,\oone}}
\newcommand{\wfring}{\tuple{S,\oplus,\otimes,\ozero,\oone,\ostrict,\oweak}}
\newcommand{\edgeGraph}[1][]{
  \,\begin{tikzpicture}[baseline=-.6ex]
    \node (x) [circle,inner sep=0,outer sep=1mm,minimum size=0.7mm,fill=black] {};
    \node (y) [circle,inner sep=0,outer sep=1mm,minimum size=0.7mm,fill=black, right of=x, xshift=-3mm] {};
    \draw [->] (x) to node [above] {#1} (y);
  \end{tikzpicture}\,
}
\newcommand{\loopGraph}{
  \,\begin{tikzpicture}[baseline=-.6ex]
    \node (x) [circle,inner sep=0,outer sep=1mm,minimum size=0.7mm,fill=black] {};
    \draw [->] (x) to[loop=0,distance=1em] (x);
  \end{tikzpicture}\,
}
\newcommand{\nodeGraph}{
  \,\begin{tikzpicture}[baseline=-.6ex]
    \node (x) [circle,inner sep=0,outer sep=1mm,minimum size=0.7mm,fill=black] {};
  \end{tikzpicture}\,
}
\newcommand{\nat}{\mathbb{N}}
\newcommand{\osum}{\bigoplus}
\newcommand{\oprod}{\bigodot}
\renewcommand{\otimes}{\odot}
\newcommand{\flower}{\mathfrak{c}}
\tikzset{xmonic/.style={Turned Square[open]->}}
\theoremstyle{plain}
\newtheorem{notation}[thm]{Notation}
\newtheorem{open}[thm]{Open Question}
\newcommand{\envalias}[2]{\newenvironment{#1}{\begin{#2}}{\end{#2}}}
\colorlet{linkcolor}{red!60!black}
\begin{document}

\title[Generalized Weighted Type Graphs for Termination]{Termination of Graph Transformation Systems\\via Generalized Weighted Type Graphs}

\author{J\"{o}rg Endrullis\orcidlink{0000-0002-2554-8270}}
\author{Roy Overbeek\orcidlink{0000-0003-0569-0947}}

\address{Vrije Universiteit Amsterdam, Amsterdam, The Netherlands}
\email{j.endrullis@vu.nl, overbeek.research@outlook.com} 

\let\oldemph\emph
\renewcommand{\emph}[1]{{\normalfont\textbf{#1}}}

\begin{abstract}
   We refine the weighted type graph technique for proving termination of double pushout (DPO) graph transformation systems.
   We increase the power of the approach for graphs,
   we generalize the technique to other categories, and we allow for variations of DPO that occur in the literature.
\end{abstract}

\maketitle

\section{Introduction}
\label{sec:introduction}

Termination is a central aspect of program correctness.
There has been extensive research on termination~\cite{SVCOMP22,lee2001,cook2006,cook2011,apt1993logic,aproveHaskell} of imperative, functional, and logic programs. Moreover, there exists a wealth
of powerful termination techniques for term rewriting systems~\cite{terese:2003:zantema}. 
Term rewriting has proven useful as an intermediary formalism to reason about C, Java, Haskell, and Prolog programs~\cite{competition,aprove}.

There are limitations to the adequacy of terms, however. Terms are usually crude representations of program states, and termination might be lost in translation. 
This holds especially for programs that operate with graph structures, and for programs that make use of non-trivial pointer or reference structures.

Nonetheless, such programs can be elegantly represented by graphs. This motivates the use of graph transformation systems as a unifying formalism for reasoning about programs. Correspondingly, there is a need for automatable techniques for reasoning about  graph transformation systems, in particular for (dis)proving properties such as termination and confluence. %

Unfortunately, there are not many techniques for proving termination of graph transformation systems. A major obstacle is the large variety of graph-like structures. 
Indeed, the termination techniques that do exist are usually defined for rather specific notions of graphs. 
This is in stark contrast to the general philosophy of algebraic graph transformation~\cite{ehrig2006fundamentals}, the predominant approach to graph transformation, which uses the language of category theory to define and study graph transformations in a graph-agnostic manner.

In this paper, we propose a powerful graph-agnostic technique for proving termination of graph transformation systems based on weighted type graphs. We base ourselves on work by
Bruggink et al.~\cite{bruggink2015}.
Their method was developed specifically for edge-labelled multigraphs, using double pushout (DPO) graph transformation, and the graph rewrite rules were required to have a discrete interface, i.e., an interface without edges. We generalize and
improve
their approach in several  ways:

\begin{enumerate}[label=(\alph*)]
\item
    We strengthen the weighted type graphs technique for graph rewriting with monic matching. 
    In the DPO literature, matches are usually required to be monic since this increases expressivity~\cite{habel2001doublerevisited}.
    The method of Bruggink et al.~\cite{bruggink2015} is not sensitive to such constraints. It always proves the stronger property of termination with respect to unrestricted matching. 
    Consequently, it is bound to fail whenever termination depends on monic matching.
    Our method, by contrast, is sensitive to such constraints.

\item
    We generalize weighted type graphs to arbitrary categories, and we propose the notion of `traceable (sub)objects' as a generalization of nodes and edges to arbitrary categories.

\item
    We formulate our technique on an abstract level, by describing minimal assumptions on the behaviour of pushouts involved in the rewrite steps. This makes our technique applicable to the many variants of DPO, such as those that restrict to specific pushout complements (see Remark~\ref{remark:dpo:variants}).

\end{enumerate}

\subsection*{Outline}

In Section~\ref{sec:overview}, we give a gentle introduction to our termination technique. 
Section~\ref{sec:preliminaries} contains preliminaries on rewriting, termination, and semirings.
We introduce a notion of weighted type graphs for general categories in Section~\ref{sec:typegraphs}.
We then investigate traceability and monicity criteria that enable us to decompose the weight of pushout objects in~Section~\ref{sec:weighing}.
In Section~\ref{sec:termination}, we propose a technique for proving termination of graph transformation systems.
We demonstrate our technique on a number of examples in Section~\ref{sec:examples}. 
In Section~\ref{sec:benchmarks}, we provide some benchmarks from a Scala implementation.
We conclude with a discussion of related work in Section~\ref{sec:related}, followed by directions for future research in Section~\ref{sec:future}.

\section{Overview of the Technique}
\label{sec:overview}

\newcommand{\basestepleft}{
  \node (I) {$K$};
  \node (L) [left of=I,leftsq] {$L$};
  \node (G) [below of=L,leftsq] {$G$};
  \node (C) [below of=I] {$C$};
  
  \draw [->,leftsq] (I) to node [label] {$l$} (L);
  \draw [->,leftsq] (L) to node [label,pos=0.4] {$m$} (G);
  \draw [->] (I) to node [label,pos=0.4] {$u$} (C);
  \draw [->,leftsq] (C) to node [label] {$l'$} (G);

  \node [at=($(I)!.5!(G)$),leftsq] {\normalfont PO};
}
\newcommand{\basestep}{
  \basestepleft
  \node (R) [right of=I,rightsq] {$R$};
  \node (H) [below of=R,rightsq] {$H$};
  
  \draw [->,rightsq] (I) to node [label] {$r$} (R);
  \draw [->,rightsq] (R) to node [label,pos=0.4] {$w$} (H);
  \draw [->,rightsq] (C) to node [label] {$r'$} (H);

  \node [at=($(I)!.5!(H)$),rightsq] {\normalfont PO};
}

In this section, we give a gentle introduction to our termination technique and explain the key properties that we employ to weigh pushouts.

The general idea behind our technique is as follows.
DPO rewrite systems induce rewrite steps $G \Rightarrow H$ on objects if there exists a diagram of the form
\begin{equation}
        \begin{tikzpicture}[node distance=14mm,leftsq/.style={},rightsq/.style={},baseline=-7mm]
            \basestep
        \end{tikzpicture}
        \label{rewritestep}
\end{equation}
The top span of the diagram is a rule of the rewrite system. We define a decreasing measure~$\ww$ on objects, and show that $\weight{G} > \weight{H}$ for any such step generated by the system. We want to do this in a general categorical setting. This means that we cannot make assumptions about what the objects and morphisms represent. For simplicity, assume for now that the system contains a single rule $\rho: L \leftto K \to R$.

A preliminary measure $\ww$ is $\ww = \card{\homset{\,\cdot\,}{T}}$ for some distinguished object $T$. Given such a $T$, a termination proof then consists of finding bounds $B_L$ and $B_R$ such that for all steps $G \Rightarrow H$ generated by $\rho$, 
\begin{align*}
    \card{\homset{G}{T}} \geq B_L > B_R \geq \card{\homset{H}{T}} \;.  
\end{align*}
These bounds would have to be constructed using only $\rho$, $T$, and generic facts and assumptions about pushout squares. The fact that the two pushout squares of a DPO step share the vertical morphism $u: K \to C$ is useful here, because naively, $G = {L \cup_K C}$ and $H = {R \cup_K C}$, meaning that $K$ and $C$ are factors common to $G$ and $H$.

In the remainder of this section, we refine the preliminary measure into something more powerful (Section~\ref{sec:weighing:morphisms:into:a:type:graph}), explain which bounds we want to consider (Section~\ref{sec:bounds}), and explain how to construct these bounds from pushout properties (Section~\ref{sec:assumptions:about:pushout:squares}).

\subsection{Weighing morphisms into a type graph}
\label{sec:weighing:morphisms:into:a:type:graph}
We equip $T$ with a \emph{weighted set of $T$-valued elements $(\elements, \ww)$}, where $\elements$ contains morphisms $e$ with codomain $T$, and $w : \elements \to \mathcal{S}$ is a weight assignment into a well-founded commutative semiring $\mathcal{S}$. This gives rise to a notion of \emph{weighted type graphs} (see Definition~\ref{def:wtg}).

Morphisms $\phi : G \to T$ are assigned a weight depending on how they relate to these weighted elements. 
A priori there are different ways of doing this, but we will count for each $e \in \elements$ the number $n(e)$ of commuting triangles of the form
\begin{equation}
    \label{eq:commuting:triangle}
    \hspace{-1mm}
    \begin{tikzpicture}[baseline=-1.5ex,node distance=10mm,nodes={rectangle,inner sep=1mm,outer sep=0}]
        \node at (0,-1mm) {$=$};
        \node (G) at ($(30:7mm) + (3mm,0)$) {$G$};
        \node (T) at (-90:8mm) {$T$};
        \node (Ti) at ($(150:7mm) + (-3mm,0)$) {$\domain{e}$};
        \draw [->] (G) to node [label,pos=0.4] {$\phi$} (T);
        \draw [->] (Ti) to node [label,pos=0.4] {$\alpha$} (G); 
        \draw [->] (Ti) to node [label,pos=0.4] {$e$} (T); 
    \end{tikzpicture}
\end{equation}
We then compute the \emph{weight of a morphism $\phi : G \to T$} by
\begin{align}
    \ww(\phi) = \prod_{e \in \elements} \ww(e)^{n(e)} 
\end{align}
Finally, the \emph{weight of an object $G$} is defined as the sum of the weights of the  morphisms in $\homset{G}{T}$ (see further Definition~\ref{def:weighing}). 

\subsection{Lower and upper bounds for pushouts}
\label{sec:bounds}
Our goal is to establish that the rewrite step is decreasing, that is, $\weight{G} > \weight{H}$, by considering only the rule $\rho: L \leftto K \to R$.
For this purpose, we aim to decompose the weights of $G$ and $H$ as follows:
    \begin{align}
      \weight{G} \;\ge\; \weight{C} \cdot \weight{L} 
      &&
      \weight{C} \cdot \weight{R} \;\ge\; \weight{H}
      \label{eq:bounds:lose}
    \end{align}
The decomposition gives us a lower bound on $\weight{G}$ and an upper bound on $\weight{H}$.
Due to the common factor $\weight{C}$, for proving $\weight{G} > \weight{H}$, it suffices to show that $\weight{L} > \weight{R}$.
Thus, we only need to analyze the rule and not all rewrite steps (see Definition~\ref{def:decreasing:rule}).

Although this gives the general idea, the bounds of Equation~\eqref{eq:bounds:lose}  actually overestimate the weight. The weight of (an image of) the interface $\weight{K}$ appears as a part of $\weight{C}$, $\weight{L}$, and $\weight{R}$. As a consequence, $\weight{K}$ is counted twice in $\weight{C} \cdot \weight{L}$ and twice in $\weight{C} \cdot \weight{R}$. To avoid this overcounting, we improve the bounds as follows:
    \begin{align}
      \weight{G} \;\ge\; \weight{C - u} \cdot \weight{L} 
      &&
      \weight{C - u} \cdot \weight{R} \;\ge\; \weight{H}
      \label{eq:bounds:better}
    \end{align}
Here, $\weight{C - u}$ is the weight of $C$ excluding those morphisms $e : \domain{e} \to C$, for $e \in \elements$, that factor through $u : K \to C$ (see further Definition~\ref{def:weighing}).

\subsection{Assumptions about pushout squares}
\label{sec:assumptions:about:pushout:squares}
Finally, we need to identify assumptions about pushout squares that allow us to decompose the weights of $G$ and $H$ as in Equation~\eqref{eq:bounds:better}.
For simplicity, assume that $T$ has a single weighted element $e : X \to T$.

First, consider the right pushout square of the rewrite step together with a morphism $\alpha: X \to H$. 
The morphism $\alpha$ could potentially give rise to a commuting triangle
\begin{equation}
    \hspace{-1mm}
    \begin{tikzpicture}[baseline=-1.5ex,node distance=10mm,nodes={rectangle,inner sep=1mm,outer sep=0}]
        \node at (0,-1mm) {$=$};
        \node (G) at ($(30:7mm) + (3mm,0)$) {$H$};
        \node (T) at (-90:8mm) {$T$};
        \node (Ti) at ($(150:7mm) + (-3mm,0)$) {$X$};
        \draw [->] (G) to node [label,pos=0.4] {$\phi$} (T);
        \draw [->] (Ti) to node [label,pos=0.4] {$\alpha$} (G); 
        \draw [->] (Ti) to node [label,pos=0.4] {$e$} (T); 
    \end{tikzpicture}\;,
\end{equation}
and thereby contribute to the weight $\weight{H}$ of $H$.
For $\weight{C - u} \cdot \weight{R}$ to be an upper bound for $\weight{H}$, the morphism $\alpha$ needs to be \emph{traceable} back to $C$ or $R$. This means that, in the following diagram, at least one of the dotted morphisms must exist and form a commuting triangle (see further Definition~\ref{def:traceability}):
\begin{equation}
        \begin{tikzpicture}[node distance=14mm,leftsq/.style={black!50},rightsq/.style={opacity=1},baseline=-7mm]
          \basestep
          \node (X) [below right of=H] {$X$};
          \draw [->] (X) to node [label,pos=0.4] {$\alpha$} (H);
          \draw [->,dotted,red,line width=.3mm] (X) to[bend left=30] (C);
          \node at ($(C)!.5!(X)$) {$=$};
          \draw [->,dotted,red,line width=.3mm] (X) to[bend right=30] (R);
          \node at ($(R)!.5!(X)$) {$=$};
        \end{tikzpicture}
\end{equation}
For general pushouts in $\Graph$, for instance, any node or edge in the pushout object $H$ traces back to a node or edge in $R$ or $C$ -- the same is not true for loops in $H$, which can be created by the pushout. For pushouts along monomorphisms in $\Graph$, however, these ``traceable elements'' do include loops. 
This highlights that traceability depends on
\begin{enumerate}[label=(\alph*)]
    \item what elements we are tracing (namely, $\domain{e}$ for $e \in \elements$), as well as
    \item restrictions on the pushout squares in rewrite steps (e.g., monic $r$ in the rule).
\end{enumerate}
We will show that traceability suffices to obtain the upper bound on $\weight{H}$ (Definition~\ref{def:weighable} and Lemma~\ref{lem:decompose}).

Second, consider the left pushout square of the rewrite step together with a morphism $\alpha: X \to G$. 
For $\weight{C - u} \cdot \weight{L}$ to be a lower bound for $\weight{G}$, we must avoid overcounting in $\weight{C - u}$ and $\weight{L}$.
The problem of overcounting occurs whenever $\alpha$ traces back to more than one morphism in $L$ or $C-u$.
There are three possible scenarios, as shown in Figure~\ref{fig:overcounting:left}.
\begin{figure}[h!]
    \centering
     \subcaptionbox{Overcounting in $L$.\label{fig:overcounting:l}}[0.25\textwidth]{
     \begin{tikzpicture}[node distance=14mm,leftsq/.style={},rightsq/.style={},baseline=-7mm]
      \basestepleft
      
      \node (X) [below left of=G] {$X$};
      \draw [->] (X) to node [label,pos=0.4] {$\alpha$} (G);
      \draw [->,dotted,red,line width=.3mm] (X) to[bend left=30] (L.-140);
      \draw [->,dotted,red,line width=.3mm] (X) to[bend left=40] (L.170);
      \node at ($(L)!.5!(X)$) {$=$};
    \end{tikzpicture}
    }
    \subcaptionbox{Overcounting in $C$.\label{fig:overcounting:c}}[0.25\textwidth]{
    \begin{tikzpicture}[node distance=14mm,leftsq/.style={},rightsq/.style={},baseline=-7mm]
      \basestepleft
      
      \node (X) [below left of=G] {$X$};
      \draw [->] (X) to node [label,pos=0.4] {$\alpha$} (G);
      \draw [->,dotted,red,line width=.3mm] (X) to[bend right=30] (C.-130);
      \draw [->,dotted,red,line width=.3mm] (X) to[bend right=40] (C.-90);
      \node at ($(C)!.5!(X)$) {$=$};
    \end{tikzpicture}
    }
    \subcaptionbox{Overcounting between $L$ and $C$.\label{fig:overcounting:lc}}[0.4\textwidth]{
    \begin{tikzpicture}[node distance=14mm,leftsq/.style={},rightsq/.style={},baseline=-7mm]
      \basestepleft
      
      \node (X) [below left of=G] {$X$};
      \draw [->] (X) to node [label,pos=0.4] {$\alpha$} (G);
      \draw [->,dotted,red,line width=.3mm] (X) to[bend right=30] node [label] {$\alpha_2$} (C);
      \node at ($(C)!.5!(X)$) {$=$};
      \draw [->,dotted,red,line width=.3mm] (X) to[bend left=30] node [label] {$\alpha_1$} (L);
      \node at ($(L)!.5!(X)$) {$=$};
    \end{tikzpicture}
    }
    \caption{Causes of $\weight{C - u} \cdot \weight{L}$ overestimating the weight $\weight{G}$ of $G$.}
    \label{fig:overcounting:left}
\end{figure}

For the dotted morphisms with codomain $C$, we are interested only in those that do not factor through $u$. 
The morphisms that factor through $u$ do not contribute to overcounting since we exclude them in $\weight{C - u}$.
The three causes of overcounting can be avoided by using the following assumptions:
\begin{enumerate}[label=(\alph*)]
    \item 
        Overcounting in $L$ can be prevented by requiring the match morphism $m$ to be $X$-monic, that is, monic for morphisms with domain $X$ (see Definition~\ref{def:relative:monicity}).
        For instance, think of $X$ as an edge and of $m$ as preserving distinctness of edges. 
    \item 
        Overcounting in $C$ can be prevented by requiring the morphism $l'$ to be $X$-monic outside of $u$, that is, monic for those morphisms with domain $X$ that do not factor through $u$ (see Definition~\ref{def:relative:monicity}). 
    \item 
        Overcounting between $L$ and $C$ can be excluded by requiring vertical strong traceability (for the left pushout squares of rewrite steps), that is, the morphism $\alpha_2: X \to C$ must trace back to the interface $K$ as shown in Figure~\ref{fig:strong:traceability} (see Definition~\ref{def:traceability}).
        Then $\alpha_2$ factors through $u$ and is thus not counted in $\weight{C - u}$.
        \begin{figure}[h!]
            \centering
            \begin{tikzpicture}[node distance=14mm,leftsq/.style={},rightsq/.style={},baseline=-7mm]
              \basestepleft
              
              \node (X) [below left of=G] {$X$};
              \useasboundingbox (X.south west) rectangle (I.north east);

              \draw [->] (X) to node [label,pos=0.4] {$\alpha$} (G);
              \draw [->,dotted,red,line width=.3mm] (X) to[bend right=30] node [label] {$\alpha_2$} (C);
              \node at ($(C)!.5!(X)$) {$=$};
              \draw [->,dotted,red,line width=.3mm] (X) to[bend left=30] node [label] {$\alpha_1$} (L);
              \node at ($(L)!.5!(X)$) {$=$};

              \draw [->,dotted,red,line width=.3mm] (X) to[bend right=70,looseness=1.7] node [label] {$\alpha_2'$} (I);
              \node at ($(C)+(-2mm,-8mm)$) {$=$};

            \end{tikzpicture}
            \caption{Vertical strong traceability.}
            \label{fig:strong:traceability}
        \end{figure}
\end{enumerate}
Given the above assumptions, the weight of $G$ can be decomposed as $\weight{G} = \weight{C - u} \cdot \weight{L}$  (see further Definition~\ref{def:weighable} and Lemma~\ref{lem:decompose}). 
    
\paragraph{Remainder of the proof}
In Section~\ref{sec:weighing}, we investigate the properties of (strong) traceability in Definition~\ref{def:traceability} and relative monicity in Definition~\ref{def:relative:monicity}.
We show how to weigh morphisms rooted in pushout objects (Lemma~\ref{lem:decompose}), and then extend this to weighing pushout objects (Lemma~\ref{lem:one:side}).
In Section~\ref{sec:termination}, we analyze termination of DPO graph transformation systems. 
We propose the concept of context-closures (Definition~\ref{def:context:closure}) to ensure that every rewritable object (not in normal form) admits some morphism into the type graph $T$. 
We then introduce decreasing rules (Definition~\ref{def:decreasing:rule}) and show that these give rise to decreasing steps (Theorem~\ref{thm:steps}). This paves the way for our termination technique (Theorem~\ref{thm:termination}).

\begin{remark}[Adhesivity]
    We note that the termination technique discussed in this paper does not require general categorical properties like $\mathcal{M}$-adhesivity~\cite[Definition 2.4]{ehrig2010categorical}.
    The properties that are crucial for our technique, such as (strong) traceability and $X$-monicity, do not depend on the category only. They depend on a combination of
    \begin{enumerate}[label=(\alph*)]
        \item the category,
        \item the elements being traced/weighed,
        \item the rewrite rules, and
        \item the double pushout framework (restrictions on the pushout squares in rewrite steps).
    \end{enumerate}
    However, $\mathcal{M}$-adhesion can be used to establish strong traceability; see the paragraph before Proposition~\ref{prop:traceable:implies:strongly:traceable}.
\end{remark}

\section{Preliminaries}
\label{sec:preliminaries}

\newcommand{\dpospan}{%
        \begin{tikzpicture}[node distance=10mm,baseline=(I.base),outer sep=0,inner sep=.5mm]
          \node (I) {$K$};
          \node (L) [left of=I] {$L$};
          \node (R) [right of=I] {$R$};
          
          \draw [->] (I) to node [label] {$l$} (L);
          \draw [->] (I) to node [label] {$r$} (R);
        \end{tikzpicture}%
}

We assume an understanding of basic category theory, in particular of pushouts, pullbacks, and monomorphisms $\mono$. Throughout this paper, we work in a fixed category $\CC$.
For an introduction to DPO graph transformation, see~\cite{konig2018tutorial,ehrig1973graph,corradini1997}.

We write \Graph{} for the category of \textit{finite} multigraphs, and \SimpleGraph{} for the category of \textit{finite} simple graphs (i.e., without parallel edges). The graphs can be edge-labelled and/or node-labelled over a fixed label set.
\vspace{2ex}

\begin{definition}[DPO Rewriting~\cite{ehrig1973graph}]
    \label{def:double:pushout:rewriting}
    A \emph{DPO rewrite rule} $\rho$ is a span \dpospan. A diagram of the form
    \begin{center}
        \begin{tikzpicture}[node distance=14mm]
          \node (I) {$K$};
          \node (L) [left of=I] {$L$};
          \node (R) [right of=I] {$R$};
          \node (G) [below of=L] {$G$};
          \node (C) [below of=I] {$C$};
          \node (H) [below of=R] {$H$};
          
          \draw [->] (I) to node [label] {$l$} (L);
          \draw [->] (I) to node [label] {$r$} (R);
          \draw [->] (L) to node [label] {$m$} (G);
          \draw [->] (I) to (C);
          \draw [->] (R) to (H);
          \draw [->] (C) to (G);
          \draw [->] (C) to (H);
    
          \node [at=($(I)!.5!(G)$)] {\normalfont PO};
          \node [at=($(I)!.5!(H)$)] {\normalfont PO};
        \end{tikzpicture}
    \end{center}
    defines a \emph{DPO rewrite step} $\DPOstep{G}{H}{\rho}{m}$, i.e., a step from $G$ to $H$ using rule $\rho$ and match morphism $m : L \to G$.
\end{definition}

\begin{remark}
    \label{remark:dpo:variants}
    Usually Definition~\ref{def:double:pushout:rewriting} is accompanied with constraints. Most commonly, $l$ and $m$ are required to be (regular) monic. Sometimes also the left pushout square is restricted, e.g., it may be required that the pushout complement is \emph{minimal} or \emph{initial}~\cite{braatz2011how,behr2022fundamentals,behr2021concurrency}. Because such restrictions affect whether a DPO rewrite system is terminating, we will define our approach in a way that is parametric in the variation of DPO under consideration (see Definition~\ref{def:double:pushout:framework}).
\end{remark}

In this paper, we investigate techniques for proving \emph{strong} normalization (termination), meaning that every rewrite sequence eventually ends in a normal form (a term that can no longer be rewritten). This is in contrast to the concept of \emph{weak} normalization which only requires the existence of some rewrite sequence to a normal form.

\begin{definition}[Termination]
    Let $R,S$ be binary relations on $\ob{\categoryc}$.
    Relation $R$ is \emph{terminating (or strongly normalizing) relative to $S$}, denoted $\SN(R/S)$, if every (finite or infinite) sequence 
    \begin{align*}
        o_1 \mathrel{(R \cup S)} 
        o_2 \mathrel{(R \cup S)} 
        o_3 \mathrel{(R \cup S)} 
        o_4 \ldots &&
        \text{where $o_1,o_2,\ldots \in \ob{\categoryc}$}
    \end{align*}
    contains only a finite number of $R$ steps. 
    The relation $R$ is \emph{terminating (or strongly normalizing)}, denoted $\SN(R)$, if $R$ terminates relative to the empty relation $\varnothing$.
    
    These concepts carry over to sets of rules via the induced rewrite relations.
\end{definition}

We will interpret weights into a well-founded semiring as defined below.

\begin{definition}
    A \emph{monoid} $\tuple{S,\cdot,e}$ is a set $S$
    with an operation $\cdot : S \times S \to S$ 
    and an \emph{identity element} $e$ such that for all $a,b,c \in S$:
   \begin{align*}
       (a \cdot b) \cdot c &= a \cdot (b \cdot c) 
       &\text{and}&&  %
       a \cdot e &= a = e \cdot a
   \end{align*}
    The monoid is \emph{commutative} if 
    for all $a,b \in S$: 
    $a \cdot b = b \cdot a$.
\end{definition}

\begin{definition}
    A \emph{semiring} is a tuple $\mathcal{S} = \ring$ where 
    $S$ is a set and all of the following conditions hold:
    \begin{enumerate}[label=(\roman*)]
        \item $\tuple{S,\oplus,\ozero}$ is a commutative monoid;
        \item $\tuple{S,\otimes,\oone}$ is a monoid;
        \item $\ozero$ is an annihilator for $\otimes$: for all $a \in S$ we have
          \begin{align*}
             \ozero \otimes a &= \ozero = a \otimes \ozero \tag{annihilation}
         \end{align*}
       \item $\otimes$ distributes over $\oplus$: for all $a,b,x \in S$ we have
         \begin{align*}
             (a \oplus b) \otimes x &= (a \otimes x) \oplus (b \otimes x) \tag{right-distributivity}\\
             x \otimes (a \oplus b) &= (x \otimes a) \oplus (x \otimes b)
             \tag{left-distributivity}
         \end{align*}
    \end{enumerate}
    The semiring $\mathcal{S}$ is \emph{commutative} if $\tuple{S,\otimes,\oone}$ is commutative.
    We will often denote a semiring by its carrier set, writing $S$ for $\mathcal{S}$.
\end{definition}

\begin{definition}
    \label{def:wf:semiring}
    A \emph{\wf{} semiring} $\wfring$ 
    consists of 
    \begin{enumerate}[label=(\roman*)]
        \item 
            a semiring $\tuple{S,\oplus,\otimes,\ozero,\oone}$; and
        \item
            non-empty orders
            ${\ostrict},\, {\oweak} \subseteq S \times S$
            for which
            ${\ostrict} \subseteq {\oweak}$ and $\oweak$ is reflexive;
    \end{enumerate}
    such that $\SN({\ostricti} / {\oweaki})$, $\ozero \ne \oone$,
    and for all $x,x',y,y' \in S$ we have
    \begin{align}
      x \oweak x' \ \wedge \ y \oweak  y' &\quad\implies\quad x \oplus y \ \oweak \  x' \oplus y' \tag{S1}\label{wfring:oplus:oweak}\\
      x \ostrict x' \ \wedge \ y \ostrict y' &\quad\implies\quad x \oplus y \ \ostrict \ x' \oplus y' \tag{S2}\label{wfring:oplus:ostrict} \\
      x \oweak x' \ \wedge \ \oone \oweak y &\quad\implies\quad 
      x \otimes y \ \oweak \ x' \otimes y 
      \text{ \;and\; }
      y \otimes x \ \oweak \ y \otimes x' 
      \tag{S3}\label{wfring:otimes:oweak} \\
      x \ostrict x' \ \wedge \ \oone \oweak y \ne \ozero &\quad\implies\quad 
      x \otimes y \ \ostrict \ x' \otimes y 
      \text{ \;and\; }
      y \otimes x \ \ostrict \ y \otimes x' 
      \tag{S4}\label{wfring:otimes:ostrict} 
    \end{align}
    The semiring is \emph{strictly monotonic} if additionally 
    \begin{align}
      x \ostrict x' \ \wedge \ y \oweak y' &\quad\implies\quad x \oplus y \ \ostrict \ x' \oplus y' \tag{S5}\label{wfring:strict} 
    \end{align}
\end{definition}

\begin{proposition}
    \label{prop:wf:semiring}
    Let $\wfring$ be a \wf{} semiring, then
    \begin{align}
        \oone \oweak x,y \ \wedge \  x,y \ne \ozero &\quad\implies\quad \oone \ \oweak \ (x \otimes y) \ \ne \ \ozero \tag{S6}\label{wfring:otimes:one:zero} \\
        \oone \oweak a,b \ \wedge \ x \oweak y &\quad\implies\quad (a \oplus b) \otimes x \ \oweak \ (a \oplus b) \otimes y  \tag{S7}\label{wfring:oplus:otimes:oweak} \\
        \oone \oweak a,b \ \wedge \  a,b \ne \ozero \ \wedge \ x \ostrict y &\quad\implies\quad (a \oplus b) \otimes x \ \ostrict \ (a \oplus b) \otimes y  \tag{S8}\label{wfring:oplus:otimes:ostrict}
    \end{align}
    for all $x,y,a,b \in S$. 
\end{proposition}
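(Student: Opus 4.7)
The plan is to prove the three statements (S6), (S7), (S8) in the order that reuses the least machinery.

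For (S7), I would first apply right-distributivity to rewrite both sides of the desired inequality as $(a \otimes x) \oplus (b \otimes x) \oweak (a \otimes y) \oplus (b \otimes y)$. Then, since $\oone \oweak a$ and $x \oweak y$, axiom (S3) (in its second form $y \otimes x \oweak y \otimes x'$) yields $a \otimes x \oweak a \otimes y$; analogously with $b$ in place of $a$. Axiom (S1) then combines these two weak inequalities via $\oplus$, and right-distributivity delivers the conclusion. The proof of (S8) is structurally identical, substituting (S4) for (S3) and (S2) for (S1); here the hypothesis $a, b \ne \ozero$ is exactly what allows the premise of (S4) to be met, upgrading the weak inequality to a strict one.

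The more delicate statement is (S6). For the weak inequality $\oone \oweak x \otimes y$, I would apply (S3) with the substitution $\oone \oweak x$ and multiplier $y$ satisfying $\oone \oweak y$; this yields $\oone \otimes y \oweak x \otimes y$, i.e., $y \oweak x \otimes y$. Chaining with $\oone \oweak y$ via transitivity of the order $\oweak$ then gives $\oone \oweak x \otimes y$. For the separation $x \otimes y \ne \ozero$, I would argue by contradiction: assume $x \otimes y = \ozero$. Since $\ostrict$ is non-empty, fix some $p \ostrict q$. Two successive applications of (S4), legal because $\oone \oweak x \ne \ozero$ and $\oone \oweak y \ne \ozero$, give
\begin{align*}
  p \otimes x \;\ostrict\; q \otimes x,
  \qquad
  (p \otimes x) \otimes y \;\ostrict\; (q \otimes x) \otimes y.
\end{align*}
Using associativity of $\otimes$ and the assumption $x \otimes y = \ozero$, both sides collapse by annihilation to $\ozero$, yielding $\ozero \ostrict \ozero$. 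A self-loop in $\ostrict$ is incompatible with $\SN(\ostricti/\oweaki)$ (it produces an infinite $\ostricti$-sequence), so $\ostrict$ is irreflexive and we have our contradiction.

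The main obstacle is precisely the non-triviality step in (S6): none of the semiring axioms directly forbids $\oone \oweak \ozero$, so we cannot simply argue from the weak bound. The key insight is that one must combine the non-emptiness of $\ostrict$, two instances of (S4), associativity, annihilation, and well-foundedness (for irreflexivity) in a single chain. The remaining parts, by contrast, are routine distributivity and componentwise monotonicity arguments.
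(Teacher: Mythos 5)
Your proposal is correct and follows essentially the same route as the paper's own proof: distributivity plus the componentwise monotonicity axioms for (S7) and (S8), and for (S6) the chain $\oone \oweak y = \oone \otimes y \oweak x \otimes y$ together with a contradiction from a non-empty $\ostrict$, two applications of (S4), annihilation, and well-foundedness. (If anything, you are slightly more careful than the paper, which in the (S7) case cites the strict combination axiom (S2) where the weak one (S1) is the one actually needed.)
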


\begin{proof}%
    For~\eqref{wfring:otimes:one:zero}, assume that $\oone \oweak x,y$ and $x,y \ne \ozero$.
    We get $\oone \oweak y = \oone \otimes y \oweak x \otimes y$ from~\eqref{wfring:otimes:oweak}.
    By definition, there exist $c,d \in S$ with $c \ostrict d$,
    and hence $x \otimes y = \ozero$ would contradict
    $c \otimes x \otimes y \ostrict d \otimes x \otimes y$ by~\eqref{wfring:otimes:ostrict}.

    For~\eqref{wfring:oplus:otimes:oweak}, let $\oone \oweak a,b$ and $x \oweak y$. 
    We have 
    \begin{align*}
      (a \oplus b) \otimes x 
      = (a \otimes x) \oplus (b \otimes x) 
      \oweak (a \otimes y) \oplus (b \otimes y)
      = (a \oplus b) \otimes y
    \end{align*}
    using distributivity and~\eqref{wfring:oplus:ostrict} since $a \otimes x \oweak a \otimes y$ and $b \otimes x \oweak b \otimes y$ by \eqref{wfring:otimes:oweak}.
    
    For~\eqref{wfring:oplus:otimes:ostrict}, let $\oone \oweak a,b$, $a,b \ne \ozero$ and $x \ostrict y$. We have 
    \begin{align*}
      (a \oplus b) \otimes x 
      = (a \otimes x) \oplus (b \otimes x) 
      \ostrict (a \otimes y) \oplus (b \otimes y)
      = (a \oplus b) \otimes y
    \end{align*}
    using distributivity and~\eqref{wfring:oplus:ostrict} since $a \otimes x \ostrict a \otimes y$ and $b \otimes x \ostrict b \otimes y$ by \eqref{wfring:otimes:ostrict}.
\end{proof}

\begin{example}[Arithmetic semiring]
\label{ex:natural}
    The \emph{arithmetic semiring} $\tuple{\nat,\oplus,\otimes,\ozero,\oone,\ostrict,\oweak}$, over the natural numbers $\nat$, is a strictly monotonic, well-founded semiring  where $\oplus = +$, $\otimes = \cdot$, $\ozero = 0$, $\oone = 1$, and ${\ostrict} = {<}$ and $\oweak$ are the usual orders on $\nat$.
\end{example}

\begin{example}[Tropical semiring]
\label{ex:tropical}
    The \emph{tropical semiring} $\tuple{S,\oplus,\otimes,\ozero,\oone}$ has carrier set 
    $S = \nat \cup \{\infty\}$.
    The usual order $\oweak$ on $\nat$ is extended to $S$ by letting $a \oweak \infty$ for all $a \in S$.
    The operation $+$ on $\nat$ is extended to $S$ by $a + \infty = \infty$ for all $a \in S$.
    Then the operations and unit elements of the tropical semiring are given by:
    \begin{align*}
      \oplus(a,b) &= \min \set{a,b} &
      \otimes(a,b) &= a + b & \ozero &= \infty & \oone &= 0
    \end{align*}
    The tropical semiring is a well-founded semiring 
    $\tuple{S,\oplus,\otimes,\ozero,\oone,\ostrict,\oweak}$,
    where $\ostrict$ is the strict part of~$\oweak$.
    It is not strictly monotonic because
    $2 \oplus 2 \not\ostrict 3 \oplus 2$.
\end{example}

\begin{example}[Arctic semiring]
\label{ex:arctic}
    The \emph{arctic semiring} $\tuple{S,\oplus,\otimes,\ozero,\oone}$ has carrier set 
    $S = \nat \cup \{-\infty\}$.
    The usual order $\oweak$ on $\nat$ is extended to $S$ by $-\infty \oweak a$ for all $a \in S$.
    The operation $+$ on $\nat$ is extended to $S$ by $a + (-\infty) = -\infty$ for all $a \in S$.
    The operations and unit elements of the arctic semiring are:
    \begin{align*}
      \oplus(a,b) &= \max \set{a,b} &
      \otimes(a,b) &= a + b & \ozero &= -\infty & \oone &= 0
    \end{align*}
    The arctic semiring is a well-founded semiring 
    $\tuple{S,\oplus,\otimes,\ozero,\oone,\ostrict,\oweak}$
    where $\ostrict$ is the strict part of $\oweak$.
    It is not strictly monotonic because
    $2 \oplus 3 \not\ostrict 3 \oplus 3$.
\end{example}

\section{Weighted Type Graphs}
\label{sec:typegraphs}

In this section, we introduce a notion of weighted type graphs for arbitrary categories, and we employ them to weigh morphisms (into the type graph) and objects in the category.

\begin{definition}[Weighted type graphs]\label{def:wtg}
    A \emph{weighted type graph} 
    $\awtg = \wtg$
    consists~of:
    \begin{enumerate}[label=(\roman*)]
        \item an object~$T \in \ob{\categoryc}$, called a \emph{type graph},
        \item a set $\elements$ of \emph{$T$-valued elements} $e : \domain{e} \to T$,
        \item a commutative semiring $\ring$, and
        \item a \emph{weight function} $\ww : \elements \to S$ such that, for all $e \in \elements$,\ $ \ozero \ne \ww(e) \ge \oone$.
    \end{enumerate}
    The weighted type graph~$\mathcal{T}$ is \emph{finitary} if, for every $e \in \elements$ and every $G \in \ob{\categoryc}$, 
    $\homset{\domain{e}}{G}$ and 
    $\homset{G}{T}$ are finite sets.
\end{definition}

\begin{remark}\label{rem:bruggink:gt:zero}
    Bruggink et al.~\cite{bruggink2015} introduce weighted type graphs for the category $\Graph$ of edge-labelled multigraphs.
    We generalize this concept to  arbitrary categories.    Important technical issues aside (discussed below), their notion is obtained by instantiating our notion of weighted type graphs for $\Graph$ with $\smash{\elements = \bigcup\; \set{ \homset{\edgeGraph[$x$]}{T} \mid x \text{ an edge label} }}$. 

    The technical differences are as follows. We require $\ww(e) \ge \oone$ and $\ww(e) \ne \ozero$ 
    for every $e \in \elements$.
    The work~\cite{bruggink2015} requires 
    \begin{itemize}
        \item $\ww(e) > \ozero$ only for loops $e \in \elements$ on a distinguished `flower node'. 
    \end{itemize}
    This condition is insufficient, causing \cite[Theorem 2]{bruggink2015} to fail as illustrated by \cite[Example A.1]{gwtg2024arxiv}.  
    In the revised version~\cite{bruggink2023arxiv}, this problem has been fixed.

    Moreover, the condition $\ww(e) > \ozero$ rules out the use of the tropical semiring as $\ozero = +\infty$. 
    Observe that $\ww(e) \ge \oone$ is often a weaker condition than $\ww(e) > \ozero$. In particular, our condition enables the use of the tropical semiring (for instance, see Example~\ref{ex:monic:simple}). 
\end{remark}

\begin{notation}
    We use the following notation:
    \begin{enumerate}[label=(\alph*)]
    \item
        For $\alpha : A \to B$, $x : A \to C$, let 
        \(
           \set{ {-} \circ \alpha = x } \;=\; \set{ \phi : B \to C \mid \phi \circ \alpha = x } 
        \);
    \item        
        For $\alpha : B \to C$, $x : A \to C$, let 
        \(
           \set{ \alpha \circ {-} = x } \;=\; \set{ \phi : A \to B \mid \alpha \circ \phi = x } 
        \);
    \item
        For $\alpha : B \to C$ and a
        set $S$ %
        of morphisms $\phi: A \to B$, we define
        \(
            \alpha \circ S \;=\; \set{ \alpha \circ \phi \mid \phi \in S }
        \).
    \end{enumerate}
\end{notation}

\begin{definition}[Weighing morphisms and objects]\label{def:weighing}
    Let $\awtg = \wtg$ be a finitary weighted type graph and let $G,A \in \ob{\categoryc}$.
    \begin{enumerate}[label=(\roman*),itemsep=1ex]
        \item
            The \emph{weight of a morphism $\phi : G \to T$} is\,
            $\displaystyle\weight[\mathcal{T}]{\phi} = \oprod_{e \in \elements} \ww(e)^{\counting{e}{\phi}}$.\\
            This is equivalent to
            \begin{math}\displaystyle
                \weight[\mathcal{T}]{\phi} %
                \;= \hspace{-5mm}\oprod_{\bigsub{e \in \elements\\ \alpha \in \set{ {\phi \circ {-}} = e }}} \hspace{-5mm}\ww(e)
            \end{math};
            visually
            \begin{tikzpicture}[baseline=(Ti.base),node distance=11mm,nodes={rectangle,inner sep=1mm,outer sep=0}]
                \node at (0,0) {$=$};
                \node (G) at ($(30:7mm) + (3mm,0)$) {$G$};
                \node (T) at (-90:5mm) {$T$};
                \node (Ti) at ($(150:7mm) + (-3mm,0)$) {$\domain{e}$};
                \draw [->] (G) to node [label,pos=0.4] {$\phi$} (T);
                \draw [->] (Ti) to node [label,pos=0.4] {$\alpha$} (G); 
                \draw [->] (Ti) to node [label,pos=0.4] {$e$} (T); 
            \end{tikzpicture}.

        \item
            The \emph{weight of finite $\Psi \subseteq \set{ \phi \mid \phi : G \to T}$} is 
                $\weight[\mathcal{T}]{\Psi} = \osum_{\phi \in \Psi} \weight[\mathcal{T}]{\phi}$.
        \item
            The \emph{weight of an object $G \in \ob{\categoryc}$} is 
                $\weight[\mathcal{T}]{G} = 
                \weight[\mathcal{T}]{\homset{G}{T}}$.
        \item
            The \emph{weight of $\phi : G \to T$ excluding $(\alpha \circ {-})$}, for $\alpha : A \to G$, is 
            \begin{align*}
                \weight[\mathcal{T}]{\phi - (\alpha \circ {-})} 
                &\;=\; \oprod_{e \in \elements} \ww(e)^{\countingmin{e}{\phi}{\alpha}{\zeta}}
            \end{align*}
            In other words, we count all $\tau$ such that for all $\zeta : \domain{e} \to A$, we have the following:
            \begin{center}
                \begin{tikzpicture}[baseline=1ex,node distance=13mm,nodes={rectangle,inner sep=1mm,outer sep=0}]
                    \node (G) {$G$};
                    \node (T) [left of=G,yshift=-9mm] {$T$};
                    \node (domX) [left of=G,node distance=26mm] {$\domain{e}$};
                    \node (A) [left of=G,yshift=9mm] {$A$};
                    \node [left of=G,yshift=4mm] {$\neq$};
                    \node [left of=G,yshift=-4mm] {$=$};
                    \draw [->] (G) to node [label,pos=0.4] {$\phi$} (T);
                    \draw [->] (A) to node [label,pos=0.4] {$\alpha$} (G); 
                    \draw [->] (domX) to node [label,pos=0.4] {$e$} (T); 
                    \draw [->] (domX) to node [label,pos=0.5] {$\tau$} (G); 
                    \draw [->] (domX) to node [label,pos=0.5]{$\zeta$} (A); 
                \end{tikzpicture}
            \end{center}

    \end{enumerate}
\end{definition}

Observe the analogy to weighted automata where the weight of a word is the sum of the weights of all runs, and the weight of a run is the product of the edge weights.

\begin{lemma}\label{lem:weight:one}
    Let $\awtg = \wtg$ be a finitary weighted type graph. Then
    \begin{enumerate}
        \item for every $\phi : G \to T$: $\oone \oweak \weight[\mathcal{T}]{\phi} \ne \ozero$;
        \item for every $\phi : G \to T$ and $\alpha : A \to G$: $\oone \oweak \weight[\mathcal{T}]{\phi - (\alpha \circ {-})} \ne \ozero$;
    \end{enumerate}
\end{lemma}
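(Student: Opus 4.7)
The plan is to reduce both parts to repeated application of property \eqref{wfring:otimes:one:zero} from Proposition~\ref{prop:wf:semiring}, which says that if $\oone \oweak x, y$ and $x, y \ne \ozero$, then $\oone \oweak x \otimes y \ne \ozero$. This is exactly the closure property needed to propagate the claim through the product structure of $\ww$.

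For part (1), by Definition~\ref{def:weighing}(i) the weight is
\[
    \weight[\awtg]{\phi} \;=\; \oprod_{e \in \elements} \ww(e)^{n_e}, \qquad \text{where } n_e = \counting{e}{\phi}.
\]
By finitariness, each $n_e$ is a finite non-negative integer (since each $n_e \le \card{\homset{\domain{e}}{G}} < \infty$). The definition of weighted type graph furthermore gives $\oone \oweak \ww(e) \ne \ozero$ for every $e \in \elements$. A short induction on $n_e$ establishes that $\oone \oweak \ww(e)^{n_e} \ne \ozero$: the base case $n_e = 0$ gives the empty product $\oone$, for which $\oone \oweak \oone$ (by reflexivity of $\oweak$) and $\oone \ne \ozero$ (by the definition of a well-founded semiring); the inductive step applies \eqref{wfring:otimes:one:zero} with $x = \ww(e)^{n_e - 1}$ and $y = \ww(e)$. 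A second induction, of the same shape, aggregates these factors over $e \in \elements$ into the whole product, again using \eqref{wfring:otimes:one:zero} at each step (with the empty product handled exactly as before).

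For part (2), the formula in Definition~\ref{def:weighing}(iv) has the same multiplicative shape,
\[
    \weight[\awtg]{\phi - (\alpha \circ {-})} \;=\; \oprod_{e \in \elements} \ww(e)^{m_e}, \qquad m_e = \countingmin{e}{\phi}{\alpha}{\zeta},
\]
where the only change is that the exponent $m_e$ counts a subset of what $\counting{e}{\phi}$ counts; in particular $m_e$ is still a finite non-negative integer. The two-level induction from part (1) therefore carries over verbatim.

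The proof is essentially bookkeeping with the semiring axioms, so I do not expect a significant obstacle. The only subtle points are (a) treating the empty product as $\oone$, which is legitimate because $\ozero \ne \oone$ is an axiom of Definition~\ref{def:wf:semiring}, and (b) confirming that finitariness of $\awtg$ makes the exponents $n_e$ and $m_e$ finite so that the inductive step of \eqref{wfring:otimes:one:zero} applies. Both are immediate from the definitions.
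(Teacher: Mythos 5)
Your proof is correct and follows essentially the same route as the paper, whose entire argument is the one-line observation that the claim ``follows from $\oone \oweak \ww(e) \ne \ozero$ for every $e \in \elements$ and Proposition~\ref{prop:wf:semiring}''; you have simply made explicit the inductive use of \eqref{wfring:otimes:one:zero} and the treatment of the empty product that the paper leaves implicit.
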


\begin{proof}%
    Follows from $\oone \oweak \ww(e) \ne \ozero$ for every $e \in \elements$ and Proposition~\ref{prop:wf:semiring}.
\end{proof}

\section{Weighing Pushouts}
\label{sec:weighing}

In this section, we introduce techniques for determining exact and upper bounds on the weights of pushout objects in terms of the corresponding pushout span. For these techniques to work, we need certain assumptions on the pushout squares.
This concerns, in particular, traceability of elements (Section~\ref{sec:traceability}) and relative monicity conditions (Section~\ref{sec:monicity}), which are used to define the notion of weighable pushout squares (Section~\ref{sec:weighable}). Using the notion of a weighable pushout square, we can bound weights of morphisms out of the pushout object (Section~\ref{sec:weighing:morphisms}), and consequently bound the weight of the pushout object itself (Section~\ref{sec:sec:weighing:objects}).

The following definition introduces an orientation for pushout squares, distinguishing between horizontal and vertical morphisms. A double pushout rewrite step with respect to a rule consists of two oriented pushout squares. For both squares, $\beta$ is the shared morphism, and $\alpha$ will be $l$ or $r$ of the rule span, respectively. For the left square, $\beta'$ is the match morphism. 
The distinction between horizontal and vertical morphisms will be used in Definitions~\ref{def:traceability} and~\ref{def:weighable} below to impose different conditions on these morphisms.

\begin{definition}[Oriented pushout squares]\label{def:oriented:pushout}
    An \emph{oriented pushout square} $\tau$ is a pushout square with a fixed orientation:
      \begin{center}
        \begin{tikzpicture}[node distance=14mm]
          \node (I) {$A$};
          \node (R) [right of=I] {$B$};
          \node (C) [below of=I] {$C$};
          \node (H) [below of=R] {$D$};
          
          \draw [->] (I) to node [label] {$\alpha$} (R);
          \draw [->] (I) to node [label] {$\beta$} (C);
          \draw [->] (R) to node [label] {$\beta'$} (H);
          \draw [->] (C) to node [label] {$\alpha'$} (H);
    
          \node [at=($(I)!.5!(H)$)] {\normalfont PO};
        \end{tikzpicture}
      \end{center}
    We distinguish the morphisms into horizontal and vertical. Morphisms $\beta$ and $\beta'$ are \emph{vertical} and $\alpha$ and $\alpha'$ are \emph{horizontal}.
    
    We denote the pushout square by 
    $\posquare{A}{\alpha}{B}{\beta}{C}{\alpha'}{\beta'}{D}$ where the order of the morphisms in the span is vertical before horizontal, and the order of the morphisms in the cospan is horizontal before vertical.
\end{definition}

\subsection{Traceability}
\label{sec:traceability}

We introduce the concept of traceability of elements $X \in \ob{\categoryc}$ along pushout squares. Roughly speaking, traceability guarantees that the pushout cannot create occurrences of $X$ out of thin air. 

\begin{definition}[Traceability]
    \label{def:traceability}
    Let $\Delta$ be a class of oriented pushout squares.
    An object $X \in \ob{\categoryc}$ is called \emph{traceable along $\Delta$} if: whenever we have a configuration of the form
    \begin{center} 
        \begin{tikzpicture}[node distance=14mm,baseline=(D)]
          \node (A) {$A$};
          \node [right of=A] (B) {$B$}; 
          \node [below of=A] (C) {$C$}; 
          \node [below of=B] (D) {$D$}; 
          \node [below right of=D] (X) {$X$};
          
          \begin{scope}[nodes=rectangle]          
          \draw [->] (A) to node [label,pos=0.5] {$\alpha$} (B);
          \draw [->] (A) to node [label,pos=0.5] {$\beta$} (C);
          \draw [->] (B) to node [label,pos=0.45] {$\beta'$} (D); 
          \draw [->] (C) to node [label,pos=0.45] {$\alpha'$} (D);
          \draw [->] (X) to node [label,pos=0.4] {$f$} (D);
          \end{scope}
          
          \node at ($(A)!.5!(D)$) {$\Delta$};
        \end{tikzpicture}\;, 
    \end{center}
    such that the displayed square is from $\Delta$, then 
    \begin{enumerate}[label=(\alph*)]
        \item\label{traceable:a} there exists a $g : X \to B$ such that $f = \beta'g$, or
        \item\label{traceable:b} there exists an $h : X \to C$ such that $f = \alpha'h$.
    \end{enumerate}
    Moreover we say that 
    \begin{enumerate}[label=(\alph*),resume]
        \item $X$ is \emph{vertically strongly traceable along $\Delta$} if:\\ whenever \ref{traceable:a} and \ref{traceable:b} hold together, then there exists $h' : X \to A$ such that $h = \beta h'$; \footnote{For our  termination technique we only need vertical strong traceability. However, in all our examples, strong traceability holds horizontally and vertically.}\footnote{
        In~\cite{gwtg2024icgt} mistakenly a weaker condition has been stated (requiring only the existence of $f' : X \to A$ with $f = \beta' \alpha f'$). This weaker condition is insufficient as it causes a problem with the property~($\star$) in the proof of Lemma~\ref{lem:decompose}.}

        \item $X$ is \emph{horizontally strongly traceable along $\Delta$} if:\\ whenever \ref{traceable:a} and \ref{traceable:b} hold together, then there exists $g' : X \to A$ such that $g = \alpha g'$;
        \item  $X$ is \emph{strongly traceable along $\Delta$} if:\\ $X$ is both horizontally and vertically strongly traceable along $\Delta$.%
    \end{enumerate}
\end{definition}

In $\mathcal{M}$-adhesive categories (for $\mathcal{M}$ a stable class of monos)~\cite[Definition 2.4]{ehrig2010categorical}, pushouts along $\mathcal{M}$-morphisms are also pullbacks~ \cite[Lemma 13]{lack2004adhesive}. This is a common setting for graph transformation. In this case, the notions of traceability and strong traceability coincide for pushouts along $\mathcal{M}$-morphisms, as the proposition below shows.

\begin{proposition}
    \label{prop:traceable:implies:strongly:traceable}
    If $\Delta$ is a class of pushouts that are also pullbacks, then traceability along $\Delta$ implies strong traceability along $\Delta$ (for $X \in \ob{\catname{C}}$).
\end{proposition}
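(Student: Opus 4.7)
The plan is to use the pullback universal property directly. Assume $X$ is traceable along $\Delta$, fix an oriented pushout square $\posquare{A}{\alpha}{B}{\beta}{C}{\alpha'}{\beta'}{D}$ in $\Delta$ (which is also a pullback by hypothesis), and fix $f : X \to D$. I need to show that whenever both clauses \ref{traceable:a} and \ref{traceable:b} of Definition~\ref{def:traceability} hold simultaneously for $f$, the factoring morphisms $g : X \to B$ and $h : X \to C$ each factor through $A$ via $\alpha$ and $\beta$, respectively.

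First I would observe that the joint hypothesis gives $\beta' \circ g = f = \alpha' \circ h$, so the pair $(g, h)$ forms a commuting cone over the cospan $B \xrightarrow{\beta'} D \xleftarrow{\alpha'} C$. Since by assumption the square is a pullback as well as a pushout, there exists a (unique) mediating morphism $k : X \to A$ satisfying $\alpha \circ k = g$ and $\beta \circ k = h$.

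From here the two strong traceability conditions follow immediately: take $h' \defeq k$ to witness $h = \beta \circ h'$ (vertical strong traceability), and take $g' \defeq k$ to witness $g = \alpha \circ g'$ (horizontal strong traceability). Combined, these give strong traceability of $X$ along $\Delta$.

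I do not expect any real obstacle in this argument; the only thing worth being careful about is that the mediator supplied by the pullback property simultaneously serves as both $h'$ and $g'$, which is exactly what is needed for the combined (strong) conclusion rather than just one half of it. The cited fact from~\cite{lack2004adhesive} that pushouts along $\mathcal{M}$-morphisms in $\mathcal{M}$-adhesive categories are also pullbacks is what makes this proposition applicable in the standard graph transformation settings, but it is not used in the proof itself.
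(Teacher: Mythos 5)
Your argument is correct and is essentially identical to the paper's own proof: both invoke the pullback property of the square on the cone $(g,h)$ over the cospan $B \xrightarrow{\beta'} D \xleftarrow{\alpha'} C$ to obtain the mediating morphism $X \to A$, which simultaneously witnesses horizontal and vertical strong traceability.
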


\begin{proof}%
    If conditions \ref{traceable:a} and \ref{traceable:b} of Definition~\ref{def:traceability} hold, the outer diagram of
    \begin{center}
        \begin{tikzcd}
            X \arrow[rr, "g" description, bend left] \arrow[rd, "h" description] \arrow[r, "i" description, dotted] & A \arrow[r, "\alpha" description] \arrow[d, "\beta" description] \arrow[rd, "\delta", phantom] & B \arrow[d, "\beta'" description] \\
            & C \arrow[r, "\alpha'" description]                          & D                                
        \end{tikzcd}
    \end{center}
    commutes, so that $i$ is obtained (uniquely) by the pullback property of $\delta \in \Delta$, making the respective triangles commute.
\end{proof}

\begin{remark}[Traceability in \Graph]\label{rem:traceable:graph}
    In \Graph, the objects $\nodeGraph$ and $\edgeGraph[$x$]$ (for edge labels $x$) are the only (non-initial) objects that are (strongly) traceable along all pushout squares. 
    Other objects, such as loops~$\loopGraph$, are strongly traceable if all morphisms in the square are monomorphisms.
\end{remark}

\begin{remark}[Traceability in \SimpleGraph]\label{rem:traceable:simplegraph}
    In the category of simple graphs \SimpleGraph, object $\nodeGraph$ is strongly traceable along all pushout squares. An object $X = \edgeGraph[$x$]$ (for an edge label $x$) is traceable along all pushout squares, but not necessarily strongly traceable. 

    The following pushout is a counterexample:
    \begin{center}
        \begin{tikzpicture}[node distance=12mm,baseline=-6mm]
          \node (A) {$\nodeGraph$};
          \node (B) [right of=A] {$\nodeGraph$};
          \node (C) [below of=A] {$\nodeGraph$};
          \node (D) [below of=B] {$\nodeGraph$};
          
          \draw [mono] (B) to (A);
          \draw [mono] (B) to (D);
          \draw [double,double distance=.5mm] (C) to (A);
          \draw [double,double distance=.5mm] (C) to (D);
    
          \draw [->,loop=180,distance=1.3em] (A) to node [label] {$x$} (A);
          \draw [->,loop=0,distance=1.3em] (D) to node [label] {$x$} (D);
          \draw [->,loop=180,distance=1.3em] (C) to node [label] {$x$} (C);
    
          \node [at=($(A)!.5!(D)$)] {\normalfont PO};
        \end{tikzpicture}
    \end{center}
    However, using Proposition~\ref{prop:traceable:implies:strongly:traceable}, we have that $X$ is strongly traceable along pushout squares in which one of the span morphisms $\alpha$ or $\beta$ is a regular monomorphism, because such pushouts are pullbacks in \SimpleGraph (and more generally in quasitoposes)~\cite[Proposition 10 \& Corollary 18]{johnstone2007quasitoposes}.
\end{remark}

\newcommand{\graphIndex}{%
  \begin{tikzpicture}[baseline=-.6ex]
    \node (x) [circle,inner sep=0,outer sep=1mm,minimum size=0.7mm] {E};
    \node (y) [circle,inner sep=0,outer sep=1mm,minimum size=0.7mm,, right of=x, xshift=-2mm] {V};
    \draw [->] ([yshift=.7mm]x.east) to ([yshift=.7mm]y.west);
    \draw [->] ([yshift=-.7mm]x.east) to ([yshift=-.7mm]y.west);
  \end{tikzpicture}\,%
}

\begin{conjecture}
    As is well known, the category of unlabeled graphs is equivalent to the functor category $[\graphIndex, \Set]$, where index category \graphIndex is the category consisting of two objects $E$ and $V$ and two non-identity arrows that are parallel. This makes the category of unlabeled graphs a presheaf category, and the two representable functors for this category are precisely the one node graph $\cdot$ and the one edge graph $\cdot \to \cdot$ (see \cite[Section 3]{vigna2003graphistopos} for more detail). We identified these two graphs as non-initial objects that are strongly traceable along any pushout (Remark~\ref{rem:traceable:graph}). Our conjecture is that for any presheaf category for which the index category is small and acyclic, the non-initial strongly traceable objects for any pushout are precisely the representable functors. For many examples of such presheaf categories, see the overview provided by L\"owe~\cite[Section 3.1]{lowe1993algebraic} (there called graph structures).
\end{conjecture}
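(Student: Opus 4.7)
The plan is to prove the two implications separately, with direction (1), that every representable $y(c)$ is strongly traceable along every pushout, being essentially a routine calculation, and direction (2), the converse, being the genuine difficulty.

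For direction (1), I would combine the Yoneda lemma with the pointwise computation of colimits in $[\mathcal{I}^{\mathrm{op}}, \Set]$. For a pushout square with vertices $A, B, C, D$ we have $D(c) = B(c) \sqcup_{A(c)} C(c)$, concretely the quotient of $B(c) \sqcup C(c)$ by the equivalence relation generated by $\alpha(a) \sim \beta(a)$ for $a \in A(c)$. A morphism $f : y(c) \to D$ corresponds, by Yoneda, to an element $d \in D(c)$, and the pushout construction in $\Set$ gives $d$ a representative in $B(c)$ or in $C(c)$, yielding ordinary traceability. If $d$ has representatives $b \in B(c)$ and $c' \in C(c)$ simultaneously, then the equivalence relation above forces an alternating zigzag $b = \alpha(a_1), \beta(a_1) = \beta(a_2), \alpha(a_2) = \alpha(a_3), \ldots, \beta(a_n) = c'$ through elements $a_1, \ldots, a_n \in A(c)$; the endpoints $a_1$ and $a_n$ yield, respectively, $g' : y(c) \to A$ with $\alpha g' = g$ and $h' : y(c) \to A$ with $\beta h' = h$, establishing strong traceability in both orientations.

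For direction (2), the strategy is contrapositive: given a non-initial, non-representable $F$, exhibit a pushout square along which $F$ fails to be strongly traceable. I would begin from the fact that $F$ is the colimit of representables indexed by its category of elements $\int F$, and that non-representability of $F$ is equivalent to $\int F$ lacking a terminal object. From this one selects either two elements $(c_1, x_1), (c_2, x_2) \in \int F$ without a common upper bound, or two elements that admit several incomparable upper bounds, and one constructs a pushout involving a coproduct of representables glued along a common subpresheaf in such a way that the identity $F \to F$ (or a canonically determined morphism into the pushout vertex) cannot be factored compatibly through both legs.

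The main obstacle will be direction (2). Strong traceability demands two separate factorizations through $A$ for the horizontal and vertical legs (and no common lift), so the separating pushout has to defeat both at once, not merely one. A promising intermediate step is to show that strong traceability forces $F$ to be a retract of a representable, by realizing $F$ as an iterated pushout of representables (finite at each stage, transfinitely indexed if necessary) and using strong traceability at each stage to factor the identity on $F$ through a single summand; iterating collapses the diagram to a single $y(c)$ that retracts onto $F$. Acyclicity of $\mathcal{I}$ would enter decisively at the end: it excludes nontrivial idempotent endomorphisms on objects of $\mathcal{I}$, so the Cauchy completion of $\mathcal{I}$ coincides with $\mathcal{I}$ itself and every retract of a representable is representable, closing the argument.
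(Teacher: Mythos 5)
First, a point of order: the paper offers no proof of this statement---it is presented as an open conjecture---so there is no proof of record to compare yours against. What follows assesses your plan on its own terms.

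Your direction (1) is correct and essentially complete. Pushouts of presheaves are computed pointwise, $\mathrm{Hom}(y(c),D)\cong D(c)$ by Yoneda, and $D(c)=B(c)\sqcup_{A(c)}C(c)$ is the quotient of $B(c)\sqcup C(c)$ by the equivalence relation generated by $\alpha_c(a)\sim\beta_c(a)$. Every class meets $B(c)$ or $C(c)$, which gives traceability; and since every generating pair couples an element of $B(c)$ with one of $C(c)$, any element of $C(c)$ identified with an element of $B(c)$ is itself of the form $\beta_c(a)$ (dually for $B(c)$), which gives the vertical and horizontal strong conditions for every pair of witnesses $(g,h)$---the reading of Definition~\ref{def:traceability} that the proof of Lemma~\ref{lem:decompose} actually relies on. Acyclicity plays no role in this direction.

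Direction (2) is where the genuine gap lies. The opening contrapositive plan (build a separating pushout from two elements of $\int F$ without a common upper bound) is never carried out and is not obviously workable; you rightly fall back on the retract strategy, but as stated it does not go through: it is not justified that every presheaf is an iterated pushout of representables, and at a transfinite stage the object is a colimit of a chain, not a pushout, so traceability gives no handle on which stage $\mathrm{id}_F$ factors through. The gap can be closed without transfinite induction. Write $F$ as the canonical coequalizer $\mathrm{coeq}(f,g:Q\to P)$ with $P,Q$ coproducts of representables; this coequalizer is the pushout of the codiagonal $Q\sqcup Q\to Q$ along $[f,g]:Q\sqcup Q\to P$, and since the structure map $Q\to F$ equals $P\to F$ precomposed with $f$, traceability along this single pushout applied to $\mathrm{id}_F$ exhibits $F$ as a retract of $P=\coprod_j y(c_j)$. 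By infinitary extensivity of presheaf categories the section $F\to\coprod_j y(c_j)$ decomposes $F\cong\coprod_j F_j$; traceability along the binary coproduct $F\cong F_{j_0}\sqcup\coprod_{j\ne j_0}F_j$ (a pushout over the initial object) forces, for each $j_0$, either $F\cong F_{j_0}$ or $F_{j_0}\cong 0$ (a split-epi coproduct injection is an iso), and non-initiality leaves exactly one surviving summand, so $F$ is a retract of a single $y(c_{j_0})$. Your endgame is then exactly right: acyclicity gives $\mathrm{End}(y(c))\cong\mathcal{I}(c,c)=\{\mathrm{id}_c\}$, so the only idempotent on a representable is the identity and its only retract is itself. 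Two caveats worth recording: this direction uses only plain traceability, so the conjecture must be read with ``strongly traceable'' subsuming ``traceable''---under the literal reading of Definition~\ref{def:traceability}, where strong traceability consists only of the two conditional clauses, the two-node discrete graph in $\Graph$ satisfies them along every pushout and would falsify the statement; and your observation that non-representability is equivalent to $\int F$ lacking a terminal object, while true, ends up not being needed.
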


\subsection{Relative Monicity}
\label{sec:monicity}

Traceability suffices to derive an upper bound on the weight of a pushout. However, we need additional monicity assumptions for lower or exact bounds.

\begin{definition}[Relative monicity]\label{def:relative:monicity}
    A morphism $f : A \to B$ is called 
    \begin{enumerate}[label=(\roman*)]
        \item 
            \emph{monic for $S \subseteq \homset{-}{A}$} 
            if $fg = fh \implies g = h$ for all $g,h \in S$,
        \item 
            \emph{$X$-monic}, where $X \in \ob{\catname{C}}$, if $f$ is monic for $\homset{X}{A}$, and
        \item 
            \emph{$X$-monic outside of $u$}, where $X \in \ob{\catname{C}}$ and $u : C \to A$, if $f$ is monic for $\homset{X}{A} - u \circ \homset{X}{C}$.
    \end{enumerate}
    For $\Gamma \subseteq \ob{\categoryc}$, $f$ is called \emph{$\Gamma$-monic} if $f$ is $X$-monic for every $X \in \Gamma$.
    When~$\Gamma$ is clear from the context, %
        \begin{tikzpicture}[baseline=(A.base)]
          \node (A) {$A$};
          \node (B) [right of=A] {$B$};
          \draw [xmonic] (A) to (B);
        \end{tikzpicture}%
    denotes a $\Gamma$-monic morphism from $A$ to $B$.
\end{definition}

\begin{example}[Edge-Monicity]
    \label{example:edge:monicity}
    In \Graph, let $\Gamma = \{ \edgeGraph[$x$] \mid x \text{ is an edge label} \}$. A morphism $f: G \to H$ that does not identify edges (but may identify nodes) is not necessarily monic, but it is $\Gamma$-monic. In this particular case for \Graph, we will say that $f$ is \emph{edge-monic}.
\end{example}

\subsection{Weighable Pushout Squares}
\label{sec:weighable}
The following definition summarizes the conditions required to derive an upper or exact bound on the weight of a pushout object.

\begin{definition}[Weighable pushout square]\label{def:weighable}
    Let $\awtg = \wtg$ be a finitary weighted type graph.
    Consider an oriented pushout square $\delta$:
    \begin{center}{\normalfont
        \begin{tikzpicture}[node distance=14mm]
            \node (A) {$A$};
            \node (B) [right of=A] {$B$};
            \node (C) [below of=A] {$C$};
            \node (D) [right of=C] {$D$};
            
            \draw [->] (A) to node [label] {$\alpha$} (B);
            \draw [->] (A) to node [label] {$\beta$} (C);
            \draw [->] (B) to node [label,pos=0.45] {$\beta'$} (D);
            \draw [->] (C) to node [label,pos=0.45] {$\alpha'$} (D);
            
            \node [at=($(A)!.5!(D)$)] {$\delta$};
        \end{tikzpicture}
    }\end{center}
    We say that $\delta$ is
    \begin{enumerate}[label=(\alph*)]
        \item \emph{weighable} with $\awtg$ if:
            \begin{enumerate}[label=(\roman*)]
                \item $\domain{\elements}$ is vertically strongly traceable along $\delta$,
                \item $\beta'$ is $\domain{\elements}$-monic, and
                \item $\alpha'$ is $\domain{\elements}$-monic outside of $\beta$.
            \end{enumerate}
        \item \emph{bounded above} by $\awtg$ if $\domain{\elements}$ is traceable along $\delta$.
    \end{enumerate}
\end{definition}

\subsection{Weighing Morphisms Rooted in Pushout Objects}
\label{sec:weighing:morphisms}

Consider the pushout diagram in Lemma~\ref{lem:decompose}, below.
The lemma decomposes the weight $\weight[\mathcal{T}]{\phi}$ of the morphism $\phi$, rooted in the pushout object, into the product of the weights  $\weight[\mathcal{T}]{\phi \circ \beta'}$ and $\weight[\mathcal{T}]{\phi \circ \alpha' - (\beta \circ {-})}$.
Intuitively, the reasoning is as follows.
Let $e \in \elements$ be a weighted element. Assume that $e : \domain{e} \to T$ occurs in $\phi$, that is, there exists $d : \domain{e} \to D$ with $e = \phi \circ d$.
Then  traceability~guarantees that this occurrence $d$ can be traced back to $\phi \circ \beta'$ or $\phi \circ \alpha'$.
In case the occurrence $d$ traces back all the way to $A$ (to $\phi \circ \beta' \circ \alpha = \phi \circ \alpha' \circ \beta$), then it occurs in both $\phi \circ \beta'$ and $\phi \circ \alpha'$. To avoid unnecessary double counting, we exclude those elements from $\phi \circ \alpha'$, giving rise to $\weight[\mathcal{T}]{\phi \circ \alpha' - (\beta \circ {-})}$.
In this way, we obtain the following upper bound on the weight of $\phi$:
\begin{align*}
   \weight[\mathcal{T}]{\phi} \ \oweak \ \weight[\mathcal{T}]{\phi \circ \beta'} \otimes \weight[\mathcal{T}]{\phi \circ \alpha' - (\beta \circ {-})}
\end{align*}
To obtain the precise weight we need to avoid double counting.
First, we need vertical strong traceability to exclude all double counting between $\phi \circ \beta'$ and $\phi \circ \alpha' - (\beta \circ {-})$. Second, we need $\domain{\elements}$-monicity of $\alpha'$ outside of $\beta$ and $\domain{\elements}$-monicity of $\beta'$ to avoid double counting along these morphisms.

\begin{lemma}[Weighing morphisms rooted in pushout objects]\label{lem:decompose}
    Let a finitary weighted type graph $\awtg = \wtg$ be given.
    Consider an oriented pushout square $\delta$:
    \begin{center}{\normalfont
        \begin{tikzpicture}[node distance=14mm]
            \node (A) {$A$};
            \node (B) [right of=A] {$B$};
            \node (C) [below of=A] {$C$};
            \node (D) [right of=C] {$D$};
            \node (T) [right of=D] {$T$};
            
            \draw [->] (A) to node [label] {$\alpha$} (B);
            \draw [->] (A) to node [label] {$\beta$} (C);
            \draw [->] (B) to node [label,pos=0.45] {$\beta'$} (D);
            \draw [->] (C) to node [label,pos=0.45] {$\alpha'$} (D);
            \draw [->] (D) to node [label,pos=0.4] {$\phi$} (T);
            
            \node [at=($(A)!.5!(D)$)] {$\delta$};
        \end{tikzpicture}
    }\end{center}
    We define
       $k = \weight[\mathcal{T}]{\phi \circ \beta'} \otimes \weight[\mathcal{T}]{\phi \circ \alpha' - (\beta \circ {-})}$.
    Then the following holds: 
    \begin{enumerate}[label=(\Alph*)]
        \item\label{weighing:rooted:a}
            We have $\weight[\mathcal{T}]{\phi} = k$ if $\delta$ is weighable with $\awtg$.
        \item\label{weighing:rooted:b}
            We have $\weight[\mathcal{T}]{\phi} \le k$ if $\delta$ is bounded above by $\awtg$.
    \end{enumerate}
\end{lemma}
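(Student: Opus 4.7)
The plan is to analyze, element by element, the exponent that each $e \in \elements$ contributes to both sides of the claimed (in)equality. For each $e \in \elements$, introduce the three counting sets
\[
  T_D(e) = \{\tau \mid \phi \circ \tau = e\}, \quad
  T_B(e) = \{\gamma \mid \phi \circ \beta' \circ \gamma = e\},
\]
\[
  T_C(e) = \{\xi \mid \phi \circ \alpha' \circ \xi = e \text{ and } \xi \neq \beta \circ \zeta \text{ for all } \zeta : \domain{e} \to A\},
\]
where all morphisms have domain $\domain{e}$. By Definition~\ref{def:weighing} and finitariness, these sets are finite and $\weight[\mathcal{T}]{\phi} = \oprod_{e} \ww(e)^{|T_D(e)|}$, while $k = \oprod_{e} \ww(e)^{|T_B(e)| + |T_C(e)|}$. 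So both parts of the lemma reduce to comparing $|T_D(e)|$ with $|T_B(e)| + |T_C(e)|$, for each $e$.

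The key construction is a comparison map $\Phi_e : T_B(e) \sqcup T_C(e) \to T_D(e)$ sending $\gamma \mapsto \beta' \circ \gamma$ and $\xi \mapsto \alpha' \circ \xi$. Well-definedness of the codomain is immediate. For surjectivity, let $\tau \in T_D(e)$; by traceability (which holds in both cases \ref{weighing:rooted:a} and \ref{weighing:rooted:b}, since vertical strong traceability implies traceability), either $\tau = \beta' \circ \gamma$ for some $\gamma$, in which case $\gamma \in T_B(e)$, or $\tau = \alpha' \circ \xi$ for some $\xi$. In the latter case, if $\xi \in T_C(e)$ we are done; otherwise $\xi = \beta \circ \zeta$ and then $\tau = \beta' \circ (\alpha \circ \zeta)$ lies in the image of $T_B(e)$. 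This already proves part~\ref{weighing:rooted:b}: $|T_D(e)| \oweak |T_B(e)| + |T_C(e)|$, so $\ww(e)^{|T_D(e)|} \oweak \ww(e)^{|T_B(e)|+|T_C(e)|}$ via~\eqref{wfring:otimes:oweak} (since $\ww(e) \oweaki \oone$), and taking the product over $e$ preserves the bound, again by~\eqref{wfring:otimes:oweak} together with Lemma~\ref{lem:weight:one}.

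For part~\ref{weighing:rooted:a}, I would additionally establish that $\Phi_e$ is injective, splitting into three cases. The two same-side cases are dispatched by the monicity assumptions: $\beta' \circ \gamma_1 = \beta' \circ \gamma_2$ collapses by $\domain{\elements}$-monicity of $\beta'$, and $\alpha' \circ \xi_1 = \alpha' \circ \xi_2$ with $\xi_1, \xi_2 \in T_C(e)$ collapses by $\domain{\elements}$-monicity of $\alpha'$ outside of $\beta$. The cross case is the genuine obstacle and the reason for imposing \emph{vertical strong} traceability rather than mere traceability: if $\beta' \circ \gamma = \alpha' \circ \xi$ with $\gamma \in T_B(e)$ and $\xi \in T_C(e)$, then both clauses \ref{traceable:a} and~\ref{traceable:b} of Definition~\ref{def:traceability} hold for this common element, so vertical strong traceability yields $\zeta : \domain{e} \to A$ with $\xi = \beta \circ \zeta$, contradicting $\xi \in T_C(e)$. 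This is precisely the step that the footnoted weaker condition failed to support. Once the bijection is in hand, $|T_D(e)| = |T_B(e)| + |T_C(e)|$ gives $\ww(e)^{|T_D(e)|} = \ww(e)^{|T_B(e)|} \otimes \ww(e)^{|T_C(e)|}$, and commutativity of $\otimes$ together with regrouping the product over $e$ yields $\weight[\mathcal{T}]{\phi} = k$.
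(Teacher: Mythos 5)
Your proposal is correct and takes essentially the same route as the paper's proof: your comparison map $\Phi_e : T_B(e) \sqcup T_C(e) \to T_D(e)$ is a repackaging of the paper's set identity $\Psi_D = \beta' \circ \Psi_B \uplus (\alpha' \circ \Psi_C - \alpha' \circ \beta \circ \Psi_A)$, with traceability supplying surjectivity (hence the upper bound of part~\ref{weighing:rooted:b}) and the two monicity conditions plus vertical strong traceability supplying injectivity (hence the exact count of part~\ref{weighing:rooted:a}), exactly mirroring the paper's property~($\star$). The only difference is presentational: an explicit (bi)jection versus cardinality algebra on the sets $\Psi_A,\Psi_B,\Psi_C,\Psi_D$.
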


\begin{proof}%
    Recall the relevant definitions:
    \begin{align*}
        \weight[\mathcal{T}]{\phi} 
        &= \oprod_{e \in \elements} \ww(e)^{\counting{e}{\phi}}\\
        \weight[\mathcal{T}]{\phi \circ \beta'} 
        &= \oprod_{e \in \elements} \ww(e)^{\counting{e}{\phi \circ \beta'}} \\
        \weight[\mathcal{T}]{\phi \circ \alpha' - (\beta \circ {-})}
        &= \oprod_{e \in \elements} \ww(e)^{\countingmin{e}{\phi \circ \alpha'}{\beta}{\zeta}}
    \end{align*}
    For part \ref{weighing:rooted:a} of the lemma, it suffices to prove that
    \begin{gather}
    \begin{aligned}
        \counting{e}{\phi}
        =&\ \counting{e}{\phi \circ \beta'} + {}\\
         &\ \countingmin{e}{\phi \circ \alpha'}{\beta}{\zeta}
    \end{aligned}\label{eq:sums}
    \end{gather}
    for every $e \in \elements$.
    Moreover, for part \ref{weighing:rooted:b} of the lemma, it suffices to prove Equation~\eqref{eq:sums} with $\le$ instead of $=$ (because by definition $\oone \le \ww(e)$ for every $e \in \elements$, any additional terms on the right do not decrease the product by~\eqref{wfring:otimes:oweak}).
    
    Let $e \in \elements$ be arbitrary.
    We define morphism sets
    \begin{align*}
        \Psi_A &= \set{\phi \circ \beta' \circ \alpha \circ {-} = e} &
        \Psi_B &= \set{\phi \circ \beta' \circ {-} = e} &
        \Psi_D &= \set{\phi \circ {-} = e} \\
        &&
        \Psi_C &= \set{\phi \circ \alpha' \circ {-} = e}
    \end{align*}
    Observe that by $\alpha' \circ \beta = \beta' \circ \alpha$, we have 
    \begin{align*}
        \Psi_A = \set { \phi \circ \beta' \circ \alpha \circ {-} = e } = \set { \phi \circ \alpha' \circ \beta \circ {-} = e }
    \end{align*}
    and so $\beta \circ \Psi_A 
    = \set { \tau \in \Psi_C \mid \exists \zeta.\ \tau = \beta \circ \zeta }$. Equation~\eqref{eq:sums} thus becomes
    \begin{align}
        \# \Psi_D &= \# \Psi_B + \# \Psi_C  - \#( \beta \circ \Psi_A)
        \label{eq:sums:short:beta}
    \end{align}
    Because $\domain{\elements}$ is traceable along $\delta$ and the square $\delta$ commutes, we have
    \begin{align}
        \Psi_D 
        =&\ \beta' \circ \Psi_B \cup \alpha' \circ \Psi_C 
        \label{eq:BC:cup} \\
        &\ \beta' \circ \Psi_B \cap \alpha' \circ \Psi_C
        \supseteq \alpha' \circ \beta \circ \Psi_A
        \label{eq:BC:cap}
    \end{align}
    For part (B) of the lemma, we
    use the general laws
    \begin{align}
       \#(A \cup B) &\le \#A + \#B \label{eq:gen:cup} \\
       \#(f \circ A) &\le \#A \label{eq:gen:morph} \\
       (f \circ A - f \circ B) &\subseteq f\circ (A - B) \label{eq:gen:min} 
    \end{align}
    and proceed as follows:
    \begin{align}
        \Psi_D 
        &= \beta' \circ \Psi_B \cup (\alpha' \circ \Psi_C - \alpha' \circ \beta \circ \Psi_A) 
        &&\text{by \eqref{eq:BC:cup} and \eqref{eq:BC:cap}}
        \label{eq:D:min}\\
        \# \Psi_D 
        &\le \#( \beta' \circ \Psi_B ) + \#( \alpha' \circ \Psi_C - \alpha' \circ \beta \circ \Psi_A ) 
        &&\text{by \eqref{eq:D:min} and \eqref{eq:gen:cup}} \\
        &\le \# \Psi_B + \#( \alpha' \circ \Psi_C - \alpha' \circ \beta \circ \Psi_A ) 
        &&\text{by \eqref{eq:gen:morph}} \\
        &\le \# \Psi_B + \#( \alpha' \circ (\Psi_C - \beta \circ \Psi_A) ) 
        &&\text{by \eqref{eq:gen:min}} \\
        &\le \# \Psi_B + \#( \Psi_C - \beta \circ \Psi_A ) 
        &&\text{by \eqref{eq:gen:morph}} \\
        &= \# \Psi_B + \# \Psi_C - \#( \beta \circ \Psi_A)
        \label{eq:weighing:part:b}
        &&\text{since $\beta \circ \Psi_A \subseteq \Psi_C$}
    \end{align}
    Equation~\eqref{eq:weighing:part:b} establishes Equation~\eqref{eq:sums:short:beta} for $\le$, and hence proves part \ref{weighing:rooted:b}.

    We now prove part (A) of the lemma.
    Because $\domain{\elements}$ is vertically strongly traceable along $\delta$, we have
    $\beta' \circ \Psi_B \cap \alpha' \circ \Psi_C = \alpha' \circ \beta \circ \Psi_A$. Hence from~\eqref{eq:BC:cup} we obtain
    \begin{align}
        \Psi_D 
        &= \beta' \circ \Psi_B \uplus (\alpha' \circ \Psi_C - \alpha' \circ \beta \circ \Psi_A)
        \label{eq:sets:intermediate} \\
        \# \Psi_D 
        &= \#( \beta' \circ \Psi_B ) + \#( \alpha' \circ \Psi_C - \alpha' \circ \beta \circ \Psi_A ) \\
        &= \# \Psi_B + \#( \alpha' \circ \Psi_C - \alpha' \circ \beta \circ \Psi_A ) 
        \label{eq:sums:intermediate:size:1}
    \end{align}
    where the last equality is justified by $\domain{\elements}$-monicity of $\beta'$.
    
    We have the following property:
    \begin{itemize}\setlength{\itemindent}{1.5cm}
    \smallskip
        \item [($\star$)]
            For any $f,g \in \Psi_C$ with $f \ne g$ and $g \not\in \beta \circ \Psi_A$, we have $\alpha' \circ f \ne \alpha' \circ g$.
    \end{itemize}
    The validity of this property can be argued as follows:
    \begin{enumerate}[label=(\roman*)]
        \item If $f \not\in \beta \circ \Psi_A$, then $\alpha' \circ f \ne \alpha' \circ g$ follows since $\alpha'$ is $\domain{\elements}$-monic outside of $\beta$.
        \item If $f \in \beta \circ \Psi_A$, then there exists $f' \in \Psi_A$ with $f = \beta \circ f'$.
        For a contradiction, assume that $\alpha' \circ f = \alpha' \circ g$.
        Then $\alpha' \circ g = \alpha' \circ f = \alpha' \circ \beta \circ f' = \beta' \circ \alpha \circ f'$. From vertical strong traceability, it follows that $g \in \beta \circ \Psi_A$. This contradicts our assumption.
    \end{enumerate}    
    Thus, we have established property ($\star$).
    From ($\star$) it follows that
    \begin{align}
        \#( \alpha' \circ \Psi_C - \alpha' \circ \beta \circ \Psi_A )
        &= 
        \#( \alpha' \circ (\Psi_C - \beta \circ \Psi_A) )\\
        &= 
        \# \Psi_C - \# (\beta \circ \Psi_A)
    \end{align}
    Then using Equation~\eqref{eq:sums:intermediate:size:1}, we get
    \begin{align}
        \# \Psi_D 
        &= \# \Psi_B + \#( \alpha' \circ \Psi_C - \alpha' \circ \beta \circ \Psi_A )
        \label{eq:sums:intermediate:size:2}\\
        &= \# \Psi_B + \# \Psi_C - \# (\beta \circ \Psi_A)
    \end{align}
    This establishes Equation~\eqref{eq:sums:short:beta}, and hence proves part \ref{weighing:rooted:a}.
\end{proof}

\begin{remark}[Related work]
    Lemma~\ref{lem:decompose} is closely related to Lemma~2 of~\cite{bruggink2015} which establishes
       $\weight[\mathcal{T}]{\phi} 
       = \weight[\mathcal{T}]{\phi \circ \beta'} \otimes \weight[\mathcal{T}]{\phi \circ \alpha'}$ 
    for directed, edge-labelled multigraphs under the following assumptions:
    \begin{enumerate}[label=(\roman*)]
        \item the interface $A$ must be discrete (contains no edges),
        \item the weighted type graph has only edge weights (no node weights).
    \end{enumerate}
    As a consequence of the discrete interface $A$, all arrows involved in the pushout are edge-monic (Example~\ref{example:edge:monicity}). 

    Lemma~\ref{lem:decompose} generalizes Lemma~2 of~\cite{bruggink2015} in multiple directions:
    \begin{enumerate}[label=(\alph*)]
        \item it works for arbitrary categories with some (strongly) traceable $\domain{\elements}$, 
        \item it does not require discreteness ($\domain{\elements}$-freeness) of the interface, and
        \item it allows for $\alpha$ and $\alpha'$ that are not $\domain{\elements}$-monic.
    \end{enumerate}
    Both (a) and (b) are crucial to make the termination techique applicable in general categories, and (c) is important to allow for rewrite rules whose right morphism merges elements that we are weighing.
\end{remark}

\subsection{Weighing Pushout Objects}
\label{sec:sec:weighing:objects}

The following lemma (cf.~\cite[Lemma~1]{bruggink2015}) is a direct consequence of the universal property of pushouts. We will use this property for weighing pushout objects.

\begin{lemma}[Pushout morphisms]
\label{lem:bijection}
  Let $T \in \ob{\categoryc}$ and consider a pushout square
  \begin{center}
        \begin{tikzpicture}[node distance=14mm,nodes={rectangle,inner sep=0.5mm
        },baseline=-4.5mm]
            \node (A) [] {$A$};
            \node (B) [right of=A] {$B$};
            \node (C) [below of=A] {$C$};
            \node (D) [right of=C] {$D$};
            \node at ($(A)!.5!(D)$) {$\delta$};
            
            \draw [->] (A) to node [label] {$\alpha$} (B);
            \draw [->] (A) to node [label] {$\beta$} (C);
            \draw [->] (B) to node [label,pos=0.45] {$\beta'$} (D);
            \draw [->] (C) to node [label,pos=0.45] {$\alpha'$} (D);
        \end{tikzpicture}
  \end{center}
  For every morphism $t_A : A \to T$, there is a bijection $\theta$ from
  \begin{enumerate}[label=(\alph*)]
    \item the class $U$ of morphisms $t_D : D \to T$ for which $t_A = t_D \circ \beta' \circ \alpha$, to 
    \item the class $P$ of pairs $(t_B : B \to T,\; t_C : C \to T)$ of morphisms for which $t_A = \compi{\alpha}{t_B} = \compi{\beta}{t_C}$,
  \end{enumerate}
  such that for all $t_D \in U$ and pairs $(t_B, t_C) \in P$, we have:
  \begin{align*}
    \theta(t_D) = (t_B, t_C) \; \iff \; t_B = t_D \circ \beta' \text{ and } t_C = t_D \circ \alpha'\;.
  \end{align*}
\end{lemma}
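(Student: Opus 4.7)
The plan is to define $\theta$ by $\theta(t_D) = (t_D \circ \beta', t_D \circ \alpha')$ and to construct its inverse using the universal property of the pushout, from which the characterization stated in the lemma is immediate by definition.

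First I would verify that $\theta$ is well-defined, i.e.\ that $(t_D \circ \beta', t_D \circ \alpha') \in P$ whenever $t_D \in U$. Setting $t_B = t_D \circ \beta'$, we immediately get $t_B \circ \alpha = t_D \circ \beta' \circ \alpha = t_A$ by the defining property of $U$. Setting $t_C = t_D \circ \alpha'$, the pushout square commutes so $\alpha' \circ \beta = \beta' \circ \alpha$, and therefore $t_C \circ \beta = t_D \circ \alpha' \circ \beta = t_D \circ \beta' \circ \alpha = t_A$. Hence $(t_B, t_C) \in P$.

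For the inverse, given $(t_B, t_C) \in P$ we have $t_B \circ \alpha = t_A = t_C \circ \beta$, so $t_B$ and $t_C$ form a compatible cocone on the span $B \xleftarrow{\alpha} A \xrightarrow{\beta} C$. By the universal property of the pushout $\delta$, there exists a unique morphism $t_D : D \to T$ such that $t_D \circ \beta' = t_B$ and $t_D \circ \alpha' = t_C$. Define $\theta^{-1}(t_B, t_C) = t_D$. To see that $t_D \in U$, observe $t_D \circ \beta' \circ \alpha = t_B \circ \alpha = t_A$.

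Finally, the two constructions are mutually inverse. Starting from $(t_B, t_C) \in P$, the composite $\theta(\theta^{-1}(t_B, t_C)) = (t_D \circ \beta', t_D \circ \alpha') = (t_B, t_C)$ by construction of $t_D$. Conversely, starting from $t_D \in U$, the morphism $\theta^{-1}(\theta(t_D))$ is by construction the unique morphism $D \to T$ whose composition with $\beta'$ and $\alpha'$ yields $t_D \circ \beta'$ and $t_D \circ \alpha'$ respectively; but $t_D$ itself satisfies these equations, so by uniqueness $\theta^{-1}(\theta(t_D)) = t_D$. The equivalence $\theta(t_D) = (t_B, t_C) \iff t_B = t_D \circ \beta' \text{ and } t_C = t_D \circ \alpha'$ holds by the definition of $\theta$ together with the uniqueness clause of the universal property. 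There is no genuine obstacle here; the only subtlety is making sure the well-definedness of $\theta$ uses commutativity of the square while the existence and uniqueness of $\theta^{-1}$ use the full universal property.
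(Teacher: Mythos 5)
Your proposal is correct and follows essentially the same route as the paper: the paper defines the inverse map $\zeta = \theta^{-1}$ first (sending a compatible pair to the universal morphism out of the pushout) and checks it is a bijection, whereas you define $\theta$ first and then construct $\zeta$; the underlying use of the universal property and the verification steps are the same.
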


\begin{proof}%
  Let $t_A : A \to T$ be given. For $p = (t_B, t_C) \in P$, define $\zeta(p) = t_D$ where $t_D : D \to T$ is the unique morphism satisfying $t_B = {t_D \circ \beta'}$ and $t_C = {t_D \circ \alpha'}$. Then $t_A = t_B \circ \alpha = t_D \circ \beta' \circ \alpha$ and so $\zeta(p) \in U$. Map $\zeta$ is injective, because if $\zeta(t_B, t_C) = t_D = \zeta(t_B', t_C')$, then $t_B = t_D \circ \beta' = t_B'$ and $t_C = t_D \circ \alpha' = t_C'$. It is surjective, because for any $t_D \in U$, $(t_D \circ \beta', t_D \circ \alpha') \in P$ and $t_D$ must be the universal morphism for this pair. Thus $\zeta$ and $\theta = \zeta^{-1}$ are bijections. Direction $\implies$ now trivially follows from the commutative property of universal morphisms, and direction $\impliedby$ by using uniqueness of universal morphisms.
\end{proof}

\begin{lemma}[Weighing pushout objects]\label{lem:one:side}
    Let $\awtg = \wtg$ be a finitary weighted type graph.
    Consider an oriented pushout square $\delta$:
    \begin{center}
        \begin{tikzpicture}[node distance=14mm,nodes={rectangle,inner sep=0.5mm
        },baseline=-4.5mm]
            \node (A) [] {$A$};
            \node (B) [right of=A] {$B$};
            \node (C) [below of=A] {$C$};
            \node (D) [right of=C] {$D$};
            \node at ($(A)!.5!(D)$) {$\delta$};
            
            \draw [->] (A) to node [label] {$\alpha$} (B);
            \draw [->] (A) to node [label] {$\beta$} (C);
            \draw [->] (B) to node [label,pos=0.45] {$\beta'$} (D);
            \draw [->] (C) to node [label,pos=0.45] {$\alpha'$} (D);
        \end{tikzpicture}
    \end{center}
    Define
      \begin{align*}
        k
        &= \osum_{t_A : A \to T} \; 
           \Big( 
           \osum_{\substack{
             t_C : C \to T\\
             t_A = \compi{\beta}{t_C}
           }} 
           \weight[\mathcal{T}]{t_C - (\beta \circ {-})} 
           \Big)
           \otimes
           \underbrace{
           \Big(
           \osum_{\substack{
             t_B : B \to T\\
             t_A = \compi{\alpha}{t_B}
           }} 
           \weight[\mathcal{T}]{t_B}
           \Big)
           }_{= \weight[\mathcal{T}]{\set{{-} \circ \alpha = t_A}}}
      \end{align*}
    Then the following holds: 
    \begin{enumerate}[label=(\Alph*)]
        \item\label{it:one:side:a}
            We have $\weight[\mathcal{T}]{D} = k$ if $\delta$ is weighable with $\mathcal{T}$.
        \item\label{it:one:side:b}
            We have $\weight[\mathcal{T}]{D} \le k$ if $\delta$ is bounded above by $\mathcal{T}$.
    \end{enumerate}
\end{lemma}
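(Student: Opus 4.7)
The plan is to rewrite $\weight[\mathcal{T}]{D}$ by successive reindexing, applying Lemma~\ref{lem:decompose} pointwise and Lemma~\ref{lem:bijection} to reorganize the sum. First I would unfold
\[
\weight[\mathcal{T}]{D} \;=\; \osum_{\phi : D \to T} \weight[\mathcal{T}]{\phi}
\]
and partition the index set according to the morphism $t_A := \phi \circ \beta' \circ \alpha : A \to T$ that each $\phi$ induces. Since $\beta' \circ \alpha = \alpha' \circ \beta$, the pair $(\phi \circ \beta',\, \phi \circ \alpha')$ is a cospan over $t_A$ in the sense of Lemma~\ref{lem:bijection}. This gives
\[
\weight[\mathcal{T}]{D} \;=\; \osum_{t_A : A \to T} \;\; \osum_{\phi \in U_{t_A}} \weight[\mathcal{T}]{\phi},
\]
where $U_{t_A}$ is the class from Lemma~\ref{lem:bijection}.

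Next I would apply Lemma~\ref{lem:decompose} to each $\phi \in U_{t_A}$. In the weighable case \ref{it:one:side:a}, this yields the equality $\weight[\mathcal{T}]{\phi} = \weight[\mathcal{T}]{\phi \circ \beta'} \otimes \weight[\mathcal{T}]{\phi \circ \alpha' - (\beta \circ {-})}$; in the bounded-above case \ref{it:one:side:b}, the same holds with $\oweak$. Monotonicity of $\oplus$ (property \eqref{wfring:oplus:oweak}) allows this inequality to be propagated through both sums, so both parts can be handled uniformly going forward.

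Then I would use the bijection $\theta$ of Lemma~\ref{lem:bijection} to reindex the inner sum over $\phi \in U_{t_A}$ as a sum over pairs $(t_B, t_C)$ with $t_A = \compi{\alpha}{t_B} = \compi{\beta}{t_C}$, substituting $t_B$ for $\phi \circ \beta'$ and $t_C$ for $\phi \circ \alpha'$ in the weights. At this point the summand reads $\weight[\mathcal{T}]{t_B} \otimes \weight[\mathcal{T}]{t_C - (\beta \circ {-})}$. Because the constraints $\compi{\alpha}{t_B} = t_A$ and $\compi{\beta}{t_C} = t_A$ decouple $t_B$ from $t_C$, the sum over pairs factorizes by distributivity of the semiring into the product of two independent sums, giving exactly the expression $k$ as defined.

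The main obstacle is justifying the factorization and propagation of inequality carefully. For the factorization, I need that the pair $(t_B, t_C)$ ranges freely over $\setvbar{t_B}{\compi{\alpha}{t_B} = t_A} \times \setvbar{t_C}{\compi{\beta}{t_C} = t_A}$ for each fixed $t_A$, which is immediate from the definition of the bijection in Lemma~\ref{lem:bijection} (the two components are constrained only by $t_A$). For inequality propagation in part \ref{it:one:side:b}, I rely on \eqref{wfring:oplus:oweak} to lift $\oweak$ through the outer sums, and I use Lemma~\ref{lem:weight:one} to ensure that the factors of the outer sum are $\oweaki\; \oone$, so that applying \eqref{wfring:otimes:oweak} after distributivity is legal. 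No further subtlety should arise, as commutativity of $\otimes$ lets the order of the two inner sums be swapped to match the statement of $k$.
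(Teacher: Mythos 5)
Your proposal is correct and takes essentially the same route as the paper's proof: unfold $\weight[\mathcal{T}]{D}$ as a sum over morphisms $D \to T$, group the summands by the induced $t_A = \phi \circ \beta' \circ \alpha$, apply Lemma~\ref{lem:decompose} pointwise, reindex the inner sum via the bijection of Lemma~\ref{lem:bijection}, and factor the resulting double sum using distributivity and commutativity. The handling of part (B) by propagating $\oweak$ through the sums via \eqref{wfring:oplus:oweak} is also exactly what the paper does (your extra appeal to \eqref{wfring:otimes:oweak} and Lemma~\ref{lem:weight:one} is harmless but unnecessary, since the inequality is applied before the factorization step and all subsequent manipulations are equalities).
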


\begin{proof}%
  For part \ref{it:one:side:a} of the lemma we have:
  \begin{align*}
    \weight[\mathcal{T}]{D} 
    &= \osum_{t_D : D \to T} \weight[\mathcal{T}]{t_D} \\
    &= \osum_{t_A : A \to T} \; 
       \osum_{\substack{t_D : D \to T\\t_A = \compi{\compi{\alpha}{\beta'}}{t_D}}} 
       \weight[\mathcal{T}]{t_D} \\
    &= \osum_{t_A : A \to T} \; 
       \osum_{\substack{t_D : D \to T\\t_A = \compi{\compi{\alpha}{\beta'}}{t_D}}} 
       \weight[\mathcal{T}]{t_D \circ \beta'} \otimes \weight[\mathcal{T}]{t_D \circ \alpha' - (\beta \circ {-})} 
       &&\text{by Lemma~\ref{lem:decompose}}\\
    &= \osum_{t_A : A \to T} \; 
       \osum_{\substack{
         t_C : C \to T\\
         t_A = \compi{\beta}{t_C}
       }} \;
       \osum_{\substack{
         t_B : B \to T\\
         t_A = \compi{\alpha}{t_B}
       }} 
       \weight[\mathcal{T}]{t_B} \otimes \weight[\mathcal{T}]{t_C - (\beta \circ {-})} 
       &&\text{by Lemma~\ref{lem:bijection}}\\
    &\stackrel{(\dagger)}{=} \osum_{t_A : A \to T} \; 
       \osum_{\substack{
         t_C : C \to T\\
         t_A = \compi{\beta}{t_C}
       }} 
       \Big( \weight[\mathcal{T}]{t_C - (\beta \circ {-})} \otimes
       \osum_{\substack{
         t_B : B \to T\\
         t_A = \compi{\alpha}{t_B}
       }} 
       \weight[\mathcal{T}]{t_B} \Big) 
       \\
    &\stackrel{(\dagger)}{=} \osum_{t_A : A \to T} \; 
       \Big( 
       \osum_{\substack{
         t_C : C \to T\\
         t_A = \compi{\beta}{t_C}
       }} 
       \weight[\mathcal{T}]{t_C - (\beta \circ {-})} 
       \Big)
       \otimes
       \Big(
       \osum_{\substack{
         t_B : B \to T\\
         t_A = \compi{\alpha}{t_B}
       }} 
       \weight[\mathcal{T}]{t_B}
       \Big)
  \end{align*}
  The equalities marked with $(\dagger)$ are justified since
  $\weight[\mathcal{T}]{t_C - (\beta \circ {-})}$ and 
  $\osum_{\substack{
         t_B : B \to T\\
         t_A = \compi{\alpha}{t_B}
   }} \weight[\mathcal{T}]{t_B}$
  are constant.
  
  For part \ref{it:one:side:b} we get a similar derivation. The only difference concerns the step with the application of Lemma~\ref{lem:decompose} which yields $\le$ instead of $=$ under the assumptions for \ref{it:one:side:b}.
\end{proof}

\section{Termination via Weighted Type Graphs}
\label{sec:termination}

In the previous section, we have developed techniques for weighing pushout objects. Our next goal is to apply these techniques to the pushout squares of double pushout diagrams in order to prove that rewrite steps reduce the weight.

As we have already noted in Remark~\ref{remark:dpo:variants}, there exist many variations of the general concept of double pushout graph transformation. 
To keep our treatment as general as possible, we introduce in Section~\ref{sec:dpo:frameworks} an abstract concept of a double pushout framework $\aframework$ which maps rules $\arule: L \leftarrow K \to R$ to classes $\aframework(\arule)$ of double pushout diagrams with top span~$\arule$.

In order to prove that rewrite steps decrease the weight, we need to ensure that every object that can be rewritten admits some morphism into the weighted type graph.
To this end, we introduce the concept of context closures of rules in Section~\ref{sec:closures}.

In Section~\ref{sec:termination:proving}, we prove that rewrite steps are decreasing (Theorem~\ref{thm:steps}) if the rules satisfy certain criteria (Definition~\ref{def:decreasing:rule}).
Finally, we state our main theorem (Theorem~\ref{thm:termination}) for proving (relative) termination.

\subsection{Double Pushout Frameworks}
\label{sec:dpo:frameworks}

\begin{definition}
    A \emph{double pushout diagram} $\delta$ is a diagram of the form
      \begin{center}
        \begin{tikzpicture}[node distance=14mm]
          \node (I) {$K$};
          \node (L) [left of=I] {$L$};
          \node (R) [right of=I] {$R$};
          \node (G) [below of=L] {$G$};
          \node (C) [below of=I] {$C$};
          \node (H) [below of=R] {$H$};
          
          \draw [->] (I) to node [label] {$l$} (L);
          \draw [->] (I) to node [label] {$r$} (R);
          \draw [->] (L) to node [label] {$m$} (G);
          \draw [->] (I) to node [label] {$u$} (C);
          \draw [->] (R) to node [label] {$w$} (H);
          \draw [->] (C) to node [label] {$l'$} (G);
          \draw [->] (C) to node [label] {$r'$} (H);
    
          \node [at=($(I)!.5!(G)$)] {\normalfont PO};
          \node [at=($(I)!.5!(H)$)] {\normalfont PO};
        \end{tikzpicture}
      \end{center}
    This diagram $\delta$ is a \emph{witness} for the \emph{rewrite step} $G \step[\delta]{} H$.
    We use the following notation for the left and the right square of the diagram, respectively:
    \begin{align*}
      \leftsq{\delta} &= \posquare{K}{l}{L}{u}{C}{l'}{m}{G} \\
      \rightsq{\delta} &= \posquare{K}{r}{R}{u}{C}{r'}{w}{H}
    \end{align*}
    Both $\leftsq{\delta}$ and $\rightsq{\delta}$ are oriented pushout squares (where $u$ is vertical). 
    
    For a class $\Delta$ of double pushout diagrams, we define $\leftsq{\Delta} = \set{ \leftsq{\delta} \mid \delta \in \Delta }$ and 
    $\rightsq{\Delta} = \set{ \rightsq{\delta} \mid \delta \in \Delta }$.
\end{definition}

\begin{definition}[Double pushout framework]
    \label{def:double:pushout:framework}
    A \emph{double pushout framework $\aframework$} 
    is a mapping of DPO rules to classes of DPO diagrams such that, for every DPO rule $\rho$, $\aframework(\rho)$ is a class of DPO diagrams with top-span $\rho$.

    The rewrite relation $\step{\arule,\aframework}$ induced by a DPO rule $\rho$ in $\aframework$ is defined as follows:
    $G \step{\arule,\aframework} H$ iff $G \step[\delta]{} H$ for some $\delta \in \aframework(\arule)$.
    The rewrite relation~$\step{\asystem,\aframework}$ induced by a set~$\asystem$ of DPO rules is given by:
    $G \step{\asystem,\aframework} H$ iff $G \step{\arule,\aframework} H$ for some $\arule \in \asystem$.
    When $\aframework$ is clear from the context, we suppress $\aframework$ and write $\step{\arule}$ and $\step{\asystem}$.
\end{definition}

\subsection{Context Closures}
\label{sec:closures}%
\newcommand{\AR}{\mathcal{A}}%

For proving termination, we need to establish that every rewrite step causes a decrease in the weight of the host graph.
In particular, we need to ensure that every host graph admits some morphism into the weighted type graph.
For this purpose, we introduce the concept of a context closure. Intuitively, a context closure for a rule $L \leftto K \to R$ is a morphism $c: L \to T$ into the type graph $T$ in such a way that every match $L \to G$ can be mapped `around $c$'.

\begin{definition}[Context closure]
    Let $\AR$ be a class of morphisms.
    A \emph{context closure} of an object $L \in \ob{\categoryc}$ for class $\AR$ is an arrow $c: L \to T$ such that, for every arrow $\alpha: L \to G$ in $\AR$, there exists a morphism $\beta: G \to T$ with $c = \beta \circ \alpha$. This can be depicted as 
    \begin{center}
    \begin{tikzpicture}[baseline=(X.base),node distance=11mm,nodes={rectangle,inner sep=1mm,outer sep=0}]
        \node at (0,-4mm) {$=$};
        \node (Z) at (11mm,0mm) {$G$};
        \node (Y) at (0,-9mm) {$T$};
        \node (X) at (-11mm,0mm) {$L$};
        \draw [->] (Z) to node [label,pos=0.4] {$\beta$} (Y);
        \draw [->] (X) to node [label,pos=0.5] {$\alpha$} (Z); 
        \draw [->] (X) to node [label,pos=0.4] {$c$} (Y); 
    \end{tikzpicture}
    \end{center}
\end{definition}

\newcommand{\MM}{\mathcal{M}}

\begin{remark}[Partial map classifiers as context closures]
    The notion of a context closure is strictly weaker than a partial map classifier.
    If category $\categoryc$ has an $\MM$-partial map classifier $(T, \eta)$~(see \cite[Definition 28.1]{adamek2009joy} and \cite[Definition 5]{corradini2020algebraic}) for a stable class of monics $\MM$, then the $\MM$-partial map classifier arrow $\eta_X : X \hookrightarrow T(X)$ for an object $X$ is a context closure of $X$ for $\MM$, because for any $\alpha: X \hookrightarrow Z$ in $\MM$ there exists a unique $\beta: Z \to T(X)$ making the following square a pullback:
    \begin{center}
    \begin{tikzpicture}[baseline=(X.base),node distance=14mm,nodes={rectangle,inner sep=0.5mm,outer sep=0}]
        \node (X) {$X$};
        \node (Y) [below of=X] {$T(X)$};
        \node (X') [right of=X,node distance=14mm] {$X$};
        \node (Z) [below of=X'] {$Z$};
        \draw [regmono] (X) to node [label,pos=0.4] {$\eta_X$} (Y); 
        \draw [double,double distance=.5mm] (X) to (X'); 
        \draw [->,densely dotted] (Z) to node [label,pos=0.4] {$\beta$} (Y);
        \draw [regmono] (X') to node [label,pos=0.4] {$\alpha$} (Z); 
        \node at ($(X)!.5!(Z)$) {PB};
    \end{tikzpicture}
    \end{center}
\end{remark}

\begin{definition}[Context closure for a rule]\label{def:context:closure}
    Let $\aframework$ be a DPO framework and $\arule$ a DPO rule $L \stackrel{}{\leftarrow} K \stackrel{}{\to} R$.
    A \emph{context closure for $\arule$ and $T$ (in $\aframework$)} is a context closure $c : L \to T$ for the class of match morphisms occurring in $\aframework(\arule)$.
\end{definition}

\begin{remark}[Context closures for unrestricted matching]
    \label{lem:flower:unrestricted}%
    Consider a DPO rule $\arule \; = \; {L \leftarrow K \to R}$ in an arbitrary DPO framework $\aframework$ (in particular, matching can be unrestricted).
    Every morphism $c: L \to 1 \to T$, that factors through the terminal object $1$, is a context closure for $\rho$ and $T$ in $\aframework$.

    This particular choice of a context closure generalizes the flower nodes from~\cite{bruggink2015} which are defined specifically for the category of edge-labelled multigraphs.
    A flower node~\cite{bruggink2015} in a graph $T$ is just a morphism $1 \to T$.
    
    When employing such context closures, we establish termination with respect to unrestricted matching. This is bound to fail if termination depends on the matching restrictions.
\end{remark}

\begin{remark}[Increasing the power for monic matching]
    \label{rem:power:monic}
    For rules $L \stackrel{l}{\leftarrow} K \stackrel{r}{\to} R$ with monic matching, it is typically fruitful to choose a type graph $T$ and context closure $c: L \to T$ in such a way that $c$ avoids collapsing $l(K)$ in~$L$. In other words, the context closure $c$ should be monic for $\set{l \circ h \mid h \in \homset{-}{K}}$. In the case that $l$ is monic, this is equivalent to $c \circ l$ being monic.
    
    For \Graph{}, the intuition is as follows. When collapsing nodes of $l(K)$ along~$c$, we actually prove a stronger termination property than intended. We then prove termination even if matches $m : L \to G$ are allowed to collapse those nodes.
\end{remark}

\begin{example}[Context closure for monic matching]\label{ex:monic:closure}
    Let $\arule$ be the DPO rule
    \begin{center}
        \scalebox{1}{{%
\newcommand{\nodex}{\vertex{x}{cblue!20}}%
\newcommand{\nodey}{\vertex{y}{cgreen!20}}%
\newcommand{\nodez}{\vertex{z}{corange!20}}%
\newcommand{\nodew}{\vertex{w}{cred!20}}%
\newcommand{\nodectx}{\vertex{c}{white}}%
\begin{tikzpicture}[baseline=-10mm,->,node distance=12mm,n/.style={},epattern/.style={very thick},mred/.style={cred,very thick},mblue/.style={cblue,very thick},mgreen/.style={cdarkgreen,very thick},mpurple/.style={cpurple,very thick,densely dotted}]
        \graphbox{$L$}{0mm}{0mm}{30mm}{15mm}{0mm}{-3.5mm}{
          \node [npattern] (x) at (-6mm,-8mm) {\nodex};
          \node [npattern] (y) at (6mm,-8mm) {\nodey};
          \node [npattern] (z) at (6mm,0mm) {\nodez};
          \draw [->] (x) to (y);
          \draw [->] (y) to[bend right=20] (z);
          \draw [->] (z) to[bend right=20] (y);
        }
        \graphbox{$K$}{38mm}{0mm}{30mm}{15mm}{0mm}{-3.5mm}{
          \node [npattern] (x) at (-6mm,-8mm) {\nodex};
          \node [npattern] (y) at (6mm,-8mm) {\nodey};
        }
        \graphbox{$R$}{76mm}{0mm}{30mm}{15mm}{0mm}{-3.5mm}{
          \node [npattern] (x) at (-6mm,-8mm) {\nodex};
          \node [npattern] (y) at (6mm,-8mm) {\nodey};
          \node [npattern] (z) at (0mm,0mm) {\nodew};
          \draw [->] (x) to (y);
          \draw [->] (y) to[bend left=0] (z);
          \draw [->] (z) to[bend left=0] (x);
        }
        \begin{scope}[nodes={fill=white,label,inner sep=0.5mm}]
        \draw [->] (37mm,-7.5mm) to node {$l$} ++(-6mm,0mm);
        \draw [->] (69mm,-7.5mm) to node {$r$} ++(6mm,0mm);
        \end{scope}
\end{tikzpicture}
}
    \end{center}
    from~\cite[Example~6.1]{overbeek2024termination} in \Graph with a framework $\aframework$ with monic matching.

    Following Remark~\ref{rem:power:monic}, we construct a context closure $c : L \to T$ for $\arule$ that avoids collapsing $l(K)$ of $L$.
    Since $l(K) \iso K$, this means that the complete graph around the nodes of $K$ must be a subobject of $T$.
    We pick
    \begin{center}
        \scalebox{1}{{%
\newcommand{\nodex}{\vertex{x}{cblue!20}}%
\newcommand{\nodey}{\vertex{y}{cgreen!20}}%
\newcommand{\nodez}{\vertex{z}{corange!20}}%
\newcommand{\nodew}{\vertex{w}{cred!20}}%
\newcommand{\nodeu}{\vertex{u}{white}}%
\begin{tikzpicture}[baseline=-10mm,->,node distance=12mm,n/.style={},epattern/.style={very thick},mred/.style={cred,very thick},mblue/.style={cblue,very thick},mgreen/.style={cdarkgreen,very thick},mpurple/.style={cpurple,very thick,densely dotted}]
        \graphbox{$L$}{0mm}{0mm}{30mm}{15mm}{0mm}{-3.5mm}{
          \node [npattern] (x) at (-6mm,-8mm) {\nodex};
          \node [npattern] (y) at (6mm,-8mm) {\nodey};
          \node [npattern] (z) at (6mm,0mm) {\nodez};
          \draw [->] (x) to (y);
          \draw [->] (y) to[bend right=20] (z);
          \draw [->] (z) to[bend right=20] (y);
        }
        \graphbox{$T$}{40mm}{0mm}{40mm}{15mm}{-2mm}{-3.5mm}{
          \node [npattern] (x) at (-6mm,-8mm) {\nodex};
          \node [npattern] (y) at (8mm,-8mm) {\nodey\nodez};
          \node [npattern] (z) at (8mm,0mm) {\nodeu};
          \draw [->] (x) to[bend left=20] (y);
          \draw [->] (y) to[bend left=20] (x);
          \draw [->] (x) to[thinloop=180,distance=1.5em] (x);
          \draw [->] (y) to[thinloop=0,distance=1.5em] (y);
          \draw [->] (y) to[bend right=20] (z);
          \draw [->] (z) to[bend right=20] (y);
        }
        \begin{scope}[nodes={fill=white,label,inner sep=0.5mm}]
        \draw [->] (31mm,-7.5mm) to node [] {$c$} ++(8mm,0mm);
        \end{scope}
\end{tikzpicture}
}
    \end{center}
    Indeed, for every monic match $m : L \mono G$ there exists a morphism $\alpha : G \to T$ such that $c = \alpha \circ m$. 
    Thus $c$ is a context closure for $\arule$ and $T$ in $\aframework$.
\end{example}

\subsection{Proving Termination}
\label{sec:termination:proving}

In this section we prove our main theorem (Theorem~\ref{thm:termination}) for using weighted type graphs to prove termination of graph transformation systems. 

First, we introduce different notions of decreasing rules (Defintion~\ref{def:decreasing:rule}). Then we prove that rewrite steps arising from these rules cause a decrease in the weight of the host graphs (Theorem~\ref{thm:steps}). Finally, we summarize our technique for proving (relative) termination (Theorem~\ref{thm:termination}).

The following definition introduces the notions of weakly, uniformly and closure decreasing rules. Weakly decreasing rules never increase the weight of the host graph. 
Uniformly decreasing rules reduce the weight of the host graph for every mapping of the host graph onto the type graph. 
If the underlying semiring is strictly monotonic, then closure decreasingness suffices: the weight of the host graph decreases for those mappings that correspond to the context closure of the rule (and is non-increasing for the other mappings).

\begin{definition}[Decreasing rules]\label{def:decreasing:rule}
    Let $\aframework$ be a DPO framework.
    Moreover, let $\awtg = \wtg$ be a finitary weighted type graph over a \wf{} commutative semiring $\wfring$.
    A DPO rule $\arule: L \stackrel{l}{\leftarrow} K \stackrel{r}{\to} R$ is called
    \begin{enumerate}[label=(\roman*)]
        \item 
            \emph{weakly decreasing} (w.r.t.\ $\awtg$ in $\aframework$) if
            \begin{itemize}
                \item 
                    $\weighti[\mathcal{T}]{t_K}{l} \oweaki \weighti[\mathcal{T}]{t_K}{r}$ for every $t_K : K \to T$;
            \end{itemize}
        \item 
            \emph{uniformly decreasing} (w.r.t.\ $\awtg$ in $\aframework$) if
            \begin{itemize}
                \item 
                    for every $t_K : K \to T$:\\
                    $ \weighti[\mathcal{T}]{t_K}{l} \ostricti \weighti[\mathcal{T}]{t_K}{r}$ or\\
                    $\set{{-} \circ l = t_K} = \emptyset = \set{{-} \circ r = t_K}$, and
                \item 
                    there exists a context closure $\flower_\arule$ for $\arule$ and $T$ in $\aframework$;
            \end{itemize}
    \end{enumerate}
    If semiring $S$ is moreover strictly monotonic, we say that $\arule$ is
    \begin{enumerate}[label=(\roman*),resume]
        \item 
            \emph{closure decreasing} (w.r.t.\ $\awtg$ in $\aframework$) if
            \begin{itemize}
                \item 
                    $\arule$ is weakly decreasing, 
                \item 
                    there exists a context closure $\flower_\arule$ for $\arule$ and $T$ in $\aframework$, and 
                \item 
                    $\weighti[\mathcal{T}]{t_K}{l} \ostricti \weighti[\mathcal{T}]{t_K}{r}$ for $t_K = \flower_\arule \circ l$.
            \end{itemize}
    \end{enumerate}
\end{definition}

By saying that a rule is closure decreasing, we tacitly assume that the semiring is strictly monotonic (for otherwise the concept is not defined). Note that if $S$ is strictly monotonic, uniformly decreasing implies closure decreasing.

\begin{remark}\label{rem:strongly:decreasing}
    Our notions of uniformly and closure decreasing rules generalize the corresponding concepts (strongly and strictly decreasing rules) in~\cite{bruggink2014,bruggink2015}.
    
    In~\cite{bruggink2015}, the notion of `strongly decreasing' rules requires that
    \begin{align*}
      \weighti[\mathcal{T}]{t_K}{l} \ostricti \weighti[\mathcal{T}]{t_K}{r}  
    \end{align*}
    for every $t_K : K \to T$.
    Bruggink et al.~\cite{bruggink2015} claim that their notion generalizes the termination techniques from their earlier work~\cite{bruggink2014} focused on the tropical and arctic semirings.
    However, this is not the case since there typically exist morphisms $t_K : K \to T$ for which $\set{{-} \circ l = t_K} = \emptyset$ and thus $\weighti[\mathcal{T}]{t_K}{l} = \ozero$:
    \begin{enumerate}
    \item 
        In the arctic semiring $\ozero = -\infty$ and $-\infty \ostricti \weighti[\mathcal{T}]{t_K}{r}$ is impossible. In contrast, the paper~\cite{bruggink2014} allows for $\weighti[\mathcal{T}]{t_K}{r} = -\infty$ in this case.
    \item
        In the tropical semiring $\ozero = \infty$ and $\infty \ostricti \weighti[\mathcal{T}]{t_K}{r}$ enforces that $\set{{-} \circ r = t_K} \ne \emptyset$ while~\cite{bruggink2014} allows for $\weighti[\mathcal{T}]{t_K}{r} = \infty$ in this case.
    \end{enumerate}
    Our notion of uniform decreasingness properly generalizes the techniques in~\cite{bruggink2014} and works for any well-founded commutative semiring.  
\end{remark}

\newcommand{\weightC}{w_{C,t_K}}

\begin{theorem}[Decreasing steps]\label{thm:steps}
    Let $\aframework$ be a DPO framework.
    Moreover, let $\awtg = \wtg$ be a finitary weighted type graph over a \wf{} commutative semiring $\wfring$.
    Let $\arule$ be a DPO rule and $\delta \in \aframework(\arule)$ such that
    \begin{itemize}
        \item $\leftsq{\delta}$ is weighable with $\mathcal{T}$, and
        \item $\rightsq{\delta}$ is bounded above by $\mathcal{T}$.
    \end{itemize}
    Then, for the induced rewrite step $G \step[\delta]{} H$, we have
    \begin{enumerate}
        \item $\weight[\mathcal{T}]{G} \oweaki \weight[\mathcal{T}]{H}$ if $\arule$ is weakly decreasing;
        \item $\weight[\mathcal{T}]{G} \ostricti \weight[\mathcal{T}]{H}$ if $\arule$ is uniformly or closure decreasing.
    \end{enumerate}
\end{theorem}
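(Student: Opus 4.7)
The plan is to reduce both parts of the theorem to a term-wise comparison indexed by $t_K : K \to T$, by first invoking Lemma~\ref{lem:one:side}. Since $\leftsq{\delta}$ is weighable, part~(A) of that lemma yields the exact identity $\weight[\mathcal{T}]{G} = \osum_{t_K} F(t_K) \otimes \weighti[\mathcal{T}]{t_K}{l}$, where $F(t_K) = \osum_{t_C :\, t_C \circ u = t_K} \weight[\mathcal{T}]{t_C - (u \circ {-})}$ is the ``context factor''. Since $\rightsq{\delta}$ is bounded above, part~(B) gives the analogous upper bound $\weight[\mathcal{T}]{H} \oweak \osum_{t_K} F(t_K) \otimes \weighti[\mathcal{T}]{t_K}{r}$. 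The crucial structural point is that $F(t_K)$ appears in both expressions, so everything reduces to comparing $\weighti[\mathcal{T}]{t_K}{l}$ with $\weighti[\mathcal{T}]{t_K}{r}$ amplified by $F(t_K)$. For part~(1), I avoid the awkward question of whether $\oone \oweak F(t_K)$ by distributing first: each single summand $\weight[\mathcal{T}]{t_C - (u \circ {-})}$ satisfies $\oone \oweak \cdot \ne \ozero$ by Lemma~\ref{lem:weight:one}, so~\eqref{wfring:otimes:oweak} combined with the weakly-decreasing hypothesis gives a pointwise $\oweak$ for every pair $(t_K, t_C)$, and~\eqref{wfring:oplus:oweak} summed over $t_C$ and then over $t_K$ yields $\weight[\mathcal{T}]{G} \oweaki \weight[\mathcal{T}]{H}$.

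For part~(2), the critical extra ingredient common to both decreasing notions is the context closure $\flower_\arule : L \to T$. Applying its defining property to the match $m : L \to G$ produces $\beta : G \to T$ with $\beta \circ m = \flower_\arule$. Set $t_K^* = \flower_\arule \circ l$ and $t_C^* = \beta \circ l'$. Using the commutativity $l' \circ u = m \circ l$ of $\leftsq{\delta}$, one checks $t_C^* \circ u = t_K^*$ and $\flower_\arule \in \set{{-} \circ l = t_K^*}$, so both $F(t_K^*) \ne \ozero$ and $\weighti[\mathcal{T}]{t_K^*}{l} \ne \ozero$. In the closure-decreasing subcase, strict monotonicity~\eqref{wfring:strict} is available by assumption; after distributing $F(t_K^*)$, I isolate the $t_C^*$-summand, apply~\eqref{wfring:otimes:ostrict} with $\weighti[\mathcal{T}]{t_K^*}{r} \ostrict \weighti[\mathcal{T}]{t_K^*}{l}$ to obtain a strict decrease in that summand, and combine it with the weak inequalities of the remaining $(t_K, t_C)$-summands (which follow from weak decreasingness as in part~(1)) by invoking~\eqref{wfring:strict} twice---once within $F(t_K^*)$, once across $t_K$---to conclude $\weight[\mathcal{T}]{G} \ostricti \weight[\mathcal{T}]{H}$. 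In the uniformly-decreasing subcase, any $t_K$ with $\set{{-} \circ l = t_K} = \emptyset = \set{{-} \circ r = t_K}$ contributes $\ozero$ to both sides and may be dropped via the $\ozero$-identity of $\oplus$; for the remaining $t_K$ the hypothesis $\weighti[\mathcal{T}]{t_K}{r} \ostrict \weighti[\mathcal{T}]{t_K}{l}$ lifts through distribution and~\eqref{wfring:otimes:ostrict} to a strict decrease in every $(t_K, t_C)$-summand, and iterated~\eqref{wfring:oplus:ostrict} over the index set (nonempty thanks to $t_K^*$) concludes the case without appealing to strict monotonicity.

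The main obstacle will be the careful $\ozero$-bookkeeping and the decision to distribute $F(t_K)$ over its summands before invoking any monotonicity axiom: this sidesteps the property ``$\oone \oweak F(t_K)$ whenever $F(t_K) \ne \ozero$'', which is not obviously derivable from the semiring axioms as stated. By working at the level of the individual summands $\weight[\mathcal{T}]{t_C - (u \circ {-})}$, every use of~\eqref{wfring:otimes:oweak} or~\eqref{wfring:otimes:ostrict} has at hand a factor that does satisfy $\oone \oweak \cdot \ne \ozero$ by Lemma~\ref{lem:weight:one}. A secondary subtlety is the final step that composes the exact identity for $\weight[\mathcal{T}]{G}$ with the upper bound for $\weight[\mathcal{T}]{H}$, which relies on the standard interactions $\oweak \circ \oweak \subseteq \oweak$ and $\oweak \circ \ostrict \subseteq \ostrict$ implicit in treating $(\oweak, \ostrict)$ as a well-founded order pair.
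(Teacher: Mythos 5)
Your proof is correct and takes essentially the same route as the paper's: both apply Lemma~\ref{lem:one:side} to the left (exact) and right (upper-bounded) squares, factor out the common context weight indexed by $t_K$, use the context closure applied to the match $m$ to exhibit a $t_C$ with $t_C \circ u = \flower_\arule \circ l$, and then case-split on the three decreasingness notions using the semiring monotonicity laws. The only cosmetic difference is that you distribute the context factor over its $t_C$-summands by hand before applying monotonicity, whereas the paper packages exactly that distribution into Proposition~\ref{prop:wf:semiring}.
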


\begin{proof}[Proof of Theorem~\ref{thm:steps}]
    Let $\delta$ be the following double pushout diagram:
      \begin{center}
        \begin{tikzpicture}[node distance=14mm]
          \node (I) {$K$};
          \node (L) [left of=I] {$L$};
          \node (R) [right of=I] {$R$};
          \begin{scope}          
          \node (G) [below of=L] {$G$};
          \node (C) [below of=I] {$C$};
          \node (H) [below of=R] {$H$};
          \end{scope}
          
          \draw [->] (I) to node [label] {$l$} (L);
          \draw [->] (I) to node [label] {$r$} (R);
          \draw [->] (L) to node [label,pos=0.4] {$m$} (G);
          \draw [->] (I) to node [label,pos=0.4] {$u$} (C);
          \draw [->] (R) to node [label,pos=0.4] {$w$} (H);
          \draw [->] (C) to node [label] {$l'$} (G);
          \draw [->] (C) to node [label] {$r'$} (H);
    
          \node [at=($(I)!.5!(G)$)] {\normalfont PO};
          \node [at=($(I)!.5!(H)$)] {\normalfont PO};
        \end{tikzpicture}
      \end{center}
    For $t_K : K \to T$, define
    \begin{align*}
       \weightC{} = %
       \osum_{\substack{
         t_C : C \to T\\
         t_K = \compi{u}{t_C}
       }} 
       \weight[\mathcal{T}]{t_C - (u \circ -)} 
    \end{align*}
    By Lemma~\ref{lem:weight:one}, $\oone \oweak \weight[\mathcal{T}]{t_C - (u \circ -)} \ne \ozero$.
    By Proposition~\ref{prop:wf:semiring}, for all $x,y \in S$:
    \begin{enumerate}[label=(\alph*)]
        \item \label{it:weightC:weak} $x \oweak y \implies \weightC{} \otimes x \oweak \weightC{} \otimes y$;
        \item \label{it:weightC:strict} 
        $x \ostrict y \implies \weightC{} \otimes x \ostrict \weightC{} \otimes y$
        if there exists $t_C$ with $t_K = \compi{u}{t_C}$;
        \item \label{it:weightC:zero} 
        $\weightC{} \otimes x = \ozero = \weightC{} \otimes y$
        if there does not exist $t_C$ with $t_K = \compi{u}{t_C}$;
        \item \label{it:weightC:flower}
        If there is a context closure $\flower_\arule$ for $\arule$ and $T$ in $\aframework$, then
        \begin{align*}
        x \ostrict y \implies \weightC{} \otimes x \ostrict \weightC{} \otimes y
        \end{align*}
        for $t_K = \flower_\arule \circ l$.
        This can be argued as follows. By the definition of context closure, there exists $\alpha : G \to T$ such that $\flower_\arule = \alpha \circ m$.  
        Define $t_C = \alpha \circ l'$.
        Then $t_K = \alpha \circ m \circ l = \alpha \circ l' \circ u = t_C \circ u$ and the claim follows from \ref{it:weightC:strict}.
    \end{enumerate}
    Since $\leftsq{\delta}$ is weighable with $\mathcal{T}$, by Lemma~\ref{lem:one:side} we obtain
      \begin{align*}
        \weight[\mathcal{T}]{G}
        &= \osum_{t_K : K \to T} \; 
           \weightC{} \otimes
           \weight[\mathcal{T}]{\set{{-} \circ l = t_K}}
      \end{align*}
    Since $\rightsq{\delta}$ is bounded above by $\mathcal{T}$, by Lemma~\ref{lem:one:side} we obtain
      \begin{align*}
        \weight[\mathcal{T}]{H}
        &\le \osum_{t_K : K \to T} \; 
           \weightC{} \otimes
           \weight[\mathcal{T}]{\set{{-} \circ r = t_K}}
      \end{align*}
    We distinguish cases:
    \begin{enumerate}
        \item 
            If $\arule$ is  weakly decreasing, then 
            $\weighti[\mathcal{T}]{t_K}{l} \oweaki \weighti[\mathcal{T}]{t_K}{r}$ for every $t_K : K \to T$.
            By~\ref{it:weightC:weak} we have: 
            \begin{align}
              \weightC \otimes \weighti[\mathcal{T}]{t_K}{l} \oweaki \weightC \otimes \weighti[\mathcal{T}]{t_K}{r} 
              \label{steps:weightC:ge}
            \end{align}
            for every $t_K : K \to T$.
            Hence $\weight[\mathcal{T}]{G} \oweaki  \weight[\mathcal{T}]{H}$ by~\eqref{wfring:oplus:oweak}.
        \item
            If $\arule$ is uniformly decreasing, then from~\ref{it:weightC:strict} and~\ref{it:weightC:zero} we obtain
            \begin{enumerate}[label=(\roman*)]
                \item \label{it:strict} $\weightC \otimes \weighti[\mathcal{T}]{t_K}{l} \ostricti \weightC \otimes \weighti[\mathcal{T}]{t_K}{r}$, or
                \item $\weightC \otimes \weighti[\mathcal{T}]{t_K}{l} = \ozero = \weightC \otimes \weighti[\mathcal{T}]{t_K}{r}$.
            \end{enumerate}
            for every $t_K : K \to T$.
            To establish $\weight[\mathcal{T}]{G} \ostricti \weight[\mathcal{T}]{H}$, using~\eqref{wfring:oplus:ostrict}, 
            it suffices to show that we have case~\ref{it:strict} for some $t_K : K \to T$.
            This follows from \ref{it:weightC:flower} since we have a context closure $\flower_\arule$ for $\arule$ and $T$ by assumption.
        \item
            If $\arule$ is closure decreasing, 
            then it is also weakly decreasing and we obtain \eqref{steps:weightC:ge} for every $t_K : K \to T$.
            Since the semiring is strictly monotonic,  by~\eqref{wfring:strict}, it suffices to show that there exists some $t_K : K \to T$ such that
            \begin{align}
              \weightC \otimes \weighti[\mathcal{T}]{t_K}{l} \ostricti \weightC \otimes \weighti[\mathcal{T}]{t_K}{r} 
              \label{steps:weightC:gt}
            \end{align}
            in order to conclude $\weight[\mathcal{T}]{G} \ostricti \weight[\mathcal{T}]{H}$.
            There is a context closure $\flower_\arule$ for $\arule$ and $T$. 
            By assumption $\weighti[\mathcal{T}]{t_K}{l} \ostricti \weighti[\mathcal{T}]{t_K}{r}$ for $t_K = \flower_\arule \circ l$,
            and we obtain~\eqref{steps:weightC:gt} by~\ref{it:weightC:flower}. \qedhere
    \end{enumerate}
\end{proof}

We now state our main theorem of proving (relative) termination using weighted type graphs.
\begin{theorem}[Proving termination]\label{thm:termination}
    Let $\aframework$ be a DPO framework and let $\asystem_1$ and $\asystem_2$ be sets of double pushout rules.
    Let $\wfring$ be a \wf{} commutative semiring, and 
    let $\awtg = \wtg$ be a finitary weighted type graph such that
    \begin{itemize}
        \item $\leftsq{\delta}$ is weighable with $\mathcal{T}$, and
        \item $\rightsq{\delta}$ is bounded above by $\mathcal{T}$
    \end{itemize}
    for every rule $\arule \in (\asystem_1 \cup \asystem_2)$ and double pushout diagram $\delta \in \aframework(\arule)$. If
    \begin{enumerate}
        \item 
        $\arule$ is weakly decreasing for every $\arule \in \asystem_2$, and
        \item 
        $\arule$ is uniformly or closure decreasing for every $\arule \in \asystem_1$,
    \end{enumerate}
    then $\asystem_1$ is terminating relative to $\asystem_2$.
\end{theorem}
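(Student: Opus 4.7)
The plan is to lift the per-step weight decrease established by Theorem~\ref{thm:steps} to a statement about infinite sequences, using the well-foundedness property $\SN({\ostricti}/{\oweaki})$ that is built into the definition of a well-founded semiring (Definition~\ref{def:wf:semiring}). Conceptually, the proof is a bookkeeping argument: the weight function $\weight[\mathcal{T}]{{-}}$ is a monotone map from rewrite sequences into $S$, taking $\asystem_2$-steps to $\oweaki$-steps and $\asystem_1$-steps to $\ostricti$-steps, after which well-foundedness gives the result.

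More concretely, I would first note that for every rule $\arule \in \asystem_1 \cup \asystem_2$ and every diagram $\delta \in \aframework(\arule)$, the hypotheses of Theorem~\ref{thm:steps} are satisfied by assumption ($\leftsq{\delta}$ weighable, $\rightsq{\delta}$ bounded above). Thus, for any rewrite step $G \step[\delta]{} H$:
\begin{itemize}
    \item if $\arule \in \asystem_2$, then $\arule$ is weakly decreasing, so Theorem~\ref{thm:steps}(1) gives $\weight[\mathcal{T}]{G} \oweaki \weight[\mathcal{T}]{H}$;
    \item if $\arule \in \asystem_1$, then $\arule$ is uniformly or closure decreasing, so Theorem~\ref{thm:steps}(2) gives $\weight[\mathcal{T}]{G} \ostricti \weight[\mathcal{T}]{H}$.
\end{itemize}

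Next, consider an arbitrary sequence $G_0 \;\square\; G_1 \;\square\; G_2 \;\square\; \cdots$, where each $\square$ denotes a step via $\step{\arule,\aframework}$ for some $\arule \in \asystem_1 \cup \asystem_2$. Applying $\weight[\mathcal{T}]{{-}}$ termwise yields a sequence in $S$, and by the previous observation, consecutive weights are related by $\oweaki$ or $\ostricti$, with $\ostricti$ appearing precisely at each position where the corresponding step uses a rule from $\asystem_1$. By the well-foundedness condition $\SN({\ostricti}/{\oweaki})$ from Definition~\ref{def:wf:semiring}, such a sequence in $S$ can contain only finitely many $\ostricti$-steps; therefore the original rewrite sequence contains only finitely many $\asystem_1$-steps. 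By the definition of relative termination, this is exactly $\SN(\step{\asystem_1,\aframework} / \step{\asystem_2,\aframework})$.

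The proof is essentially routine once Theorem~\ref{thm:steps} is in place; the main conceptual work has already been invested in the weighing infrastructure (Lemmas~\ref{lem:decompose} and~\ref{lem:one:side}) and in Theorem~\ref{thm:steps} itself. The only subtle point to double-check is that the definition of $\step{\asystem,\aframework}$ ranges over all diagrams in $\aframework(\arule)$, so the universal quantification over $\delta$ in the hypothesis is exactly what is needed to cover every step that can occur in a rewrite sequence; no additional hypothesis on the framework (such as adhesivity) is required here, consistent with the remark at the end of Section~\ref{sec:overview}.
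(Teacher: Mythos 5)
Your proposal is correct and is exactly the argument the paper intends: the paper's proof is the one-liner ``Follows immediately from Theorem~\ref{thm:steps}'', and your write-up simply makes explicit the routine lifting via $\SN({\ostricti}/{\oweaki})$ that this one-liner leaves implicit. No gaps.
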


\begin{proof}[Proof of Theorem~\ref{thm:termination}]
    Follows immediately from Theorem~\ref{thm:steps}.
\end{proof}

\section{Examples}
\label{sec:examples}

\begin{notation}[Visual notation]
\label{notation:visual}
    We use the visual notation as in~\cite{overbeek2024termination}.
    The vertices of graphs are non-empty sets $\{ x_1,\ldots,x_n \}$ depicted by boxes
    {%
    \hspace{-3.4mm}
    \newcommand{\nodexa}{\vertex{x_1}{cblue!20}}
    \newcommand{\nodexb}{\vertex{x_n}{cblue!20}}
    \begin{tikzpicture}[->,node distance=12mm,n/.style={},baseline=-0.8ex]
    \node [npattern] (xaxb)
          {\nodexa \ \raisebox{1mm}{$\cdots$} \  \nodexb};
    \end{tikzpicture}%
    }.
    We will choose the vertices of the graphs in such a way that the homomorphisms $h : G \to H$ are fully determined by $S \subseteq h(S)$ for all $S \in V_G$. For instance, in Example~\ref{ex:monic:triangle} below, the morphism $\flower_\arule: L \to T$ maps nodes $\{ x \}, \{ y \}, \{ z \} \in L$ as follows: $\flower_\arule(\{x\}) = \{ x \}$ and $\flower_\arule(\{y\}) = \flower_\arule(\{z\}) = \{ y,z \} \in T$.
\end{notation}

We start with two examples (Examples~\ref{ex:monic:loop} and~\ref{ex:monic:triangle}) for which the approaches from~\cite{bruggink2014,bruggink2015} fail, as the systems are not terminating with unrestricted matching.

\begin{example}[Loop unfolding with monic matching]
\label{ex:monic:loop}
    Let $\arule$ be the DPO rule 
    \begin{center}
        \scalebox{1}{{%
\newcommand{\nodex}{\vertex{x}{cblue!20}}%
\newcommand{\nodey}{\vertex{y}{cgreen!20}}%
\newcommand{\nodez}{\vertex{z}{corange!20}}%
\newcommand{\nodew}{\vertex{w}{cred!20}}%
\newcommand{\nodectx}{\vertex{c}{white}}%
\begin{tikzpicture}[baseline=-5mm,->,node distance=12mm,n/.style={},epattern/.style={very thick},mred/.style={cred,very thick},mblue/.style={cblue,very thick},mgreen/.style={cdarkgreen,very thick},mpurple/.style={cpurple,very thick,densely dotted}]
        \graphbox{$L$}{0mm}{0mm}{28mm}{7mm}{1mm}{-3.5mm}{
          \node [npattern] (x) at (-6mm,0mm) {\nodex};
          \node [npattern] (y) at (6mm,0mm) {\nodey};
          \draw (x) to[thinloop=0,distance=1.5em] (x);
        }
        \graphbox{$K$}{36mm}{0mm}{28mm}{7mm}{1mm}{-3.5mm}{
          \node [npattern] (x) at (-6mm,0mm) {\nodex};
          \node [npattern] (y) at (6mm,0mm) {\nodey};
        }
        \begin{scope}[opacity=1]        
        \graphbox{$R$}{72mm}{0mm}{28mm}{7mm}{1mm}{-3.5mm}{
          \node [npattern] (x) at (-6mm,0mm) {\nodex};
          \node [npattern] (y) at (6mm,0mm) {\nodey};
          \draw (x) to (y);
        }
        \end{scope}
        \begin{scope}[nodes={fill=white,label,inner sep=0.5mm}]
        \draw [->] (35mm,-3.5mm) to node {$l$} ++(-6mm,0mm);
        \draw [->] (65mm,-3.5mm) to node {$r$} ++(6mm,0mm);
        \end{scope}
    \end{tikzpicture}
}
    \end{center}
    in \Graph, and let matches $m$ restrict to monic matches.
    We prove termination using the weighted type graph
    \begin{center}
      \scalebox{1}{{%
\newcommand{\nodex}{\vertex{x}{cblue!20}}%
\newcommand{\nodey}{\vertex{y}{cgreen!20}}%
\newcommand{\nodez}{\vertex{z}{corange!20}}%
\newcommand{\nodew}{\vertex{w}{cred!20}}%
\newcommand{\nodeu}{\vertex{u}{white}}%
\begin{tikzpicture}[baseline=(x.base),->,node distance=12mm,n/.style={},epattern/.style={very thick},mred/.style={cred,very thick},mblue/.style={cblue,very thick},mgreen/.style={cdarkgreen,very thick},mpurple/.style={cpurple,very thick,densely dotted}]
        \graphbox{$\awtg$}{84mm}{0mm}{40mm}{7mm}{2mm}{6.5mm}{
          \node [npattern] (x) at (-6mm,-10mm) {\nodex};
          \node [npattern] (y) at (6mm,-10mm) {\nodey};
          \draw [->] (x) to[bend left=20] (y);
          \draw [->] (y) to[bend left=20] (x);
          \draw [->] (x) to[thinloop=180,distance=1.5em] node [left,label] {$2$} (x);
          \draw [->] (y) to[thinloop=0,distance=1.5em] node [right,label] {$2$} (y);
        }
\end{tikzpicture}
}
    \end{center}
    over the (strictly monotonic) arithmetic semiring.
    The type graph $T$ is unlabelled.
    The numbers along the edges indicate the weighted elements $\elements$. 
    Here $\elements$ consists of $2$ morphisms with domain $\edgeGraph{}$ (see further Remark~\ref{rem:traceable:graph}) and their weight is the number along the edge they target.
    So $\elements = \set{e_x,e_y}$ where $e_x$ and $e_y$ map $\edgeGraph{}$ onto the loop on $\{x\}$ and the loop on $\{y\}$ in $T$, respectively. The weights of $e_x$ and $e_y$ are $\ww(e_x) = \ww(e_y) = 2$. 
        
    As the context closure $\flower_\arule$ for $\arule$ and $T$ we use the inclusion $\flower_\arule: L \mono T$. 
    Clearly, for any monic match $L \mono G$, there exists a map $\alpha: G \to T$ such that $\flower_\arule = \alpha \circ m$.
    For $t_K = \flower_\arule \circ l$ we have 
      $\weighti[\mathcal{T}]{t_K}{l} = {2} \ostricti {1} = \weighti[\mathcal{T}]{t_K}{r}$  
    since the empty product is $1$.
    Moreover, $\weighti[\mathcal{T}]{t_K}{l} \oweaki \weighti[\mathcal{T}]{t_K}{r}$ for all other $t_K : K \to T$.
    Thus $\arule$ is closure decreasing, and consequently terminating by Theorem~\ref{thm:termination}.
\end{example}

\begin{example}[Reconfiguration]
\label{ex:monic:triangle}
    Continuing Example~\ref{ex:monic:closure},
    let $\arule$ be the DPO rule 
    \begin{center}
        \scalebox{1}{{%
\newcommand{\nodex}{\vertex{x}{cblue!20}}%
\newcommand{\nodey}{\vertex{y}{cgreen!20}}%
\newcommand{\nodez}{\vertex{z}{corange!20}}%
\newcommand{\nodew}{\vertex{w}{cred!20}}%
\newcommand{\nodectx}{\vertex{c}{white}}%
\begin{tikzpicture}[baseline=-10mm,->,node distance=12mm,n/.style={},epattern/.style={very thick},mred/.style={cred,very thick},mblue/.style={cblue,very thick},mgreen/.style={cdarkgreen,very thick},mpurple/.style={cpurple,very thick,densely dotted}]
        \graphbox{$L$}{0mm}{0mm}{30mm}{15mm}{0mm}{-3.5mm}{
          \node [npattern] (x) at (-6mm,-8mm) {\nodex};
          \node [npattern] (y) at (6mm,-8mm) {\nodey};
          \node [npattern] (z) at (6mm,0mm) {\nodez};
          \draw [->] (x) to (y);
          \draw [->] (y) to[bend right=20] (z);
          \draw [->] (z) to[bend right=20] (y);
        }
        \graphbox{$K$}{38mm}{0mm}{30mm}{15mm}{0mm}{-3.5mm}{
          \node [npattern] (x) at (-6mm,-8mm) {\nodex};
          \node [npattern] (y) at (6mm,-8mm) {\nodey};
        }
        \graphbox{$R$}{76mm}{0mm}{30mm}{15mm}{0mm}{-3.5mm}{
          \node [npattern] (x) at (-6mm,-8mm) {\nodex};
          \node [npattern] (y) at (6mm,-8mm) {\nodey};
          \node [npattern] (z) at (0mm,0mm) {\nodew};
          \draw [->] (x) to (y);
          \draw [->] (y) to[bend left=0] (z);
          \draw [->] (z) to[bend left=0] (x);
        }
        \begin{scope}[nodes={fill=white,label,inner sep=0.5mm}]
        \draw [->] (37mm,-7.5mm) to node {$l$} ++(-6mm,0mm);
        \draw [->] (69mm,-7.5mm) to node {$r$} ++(6mm,0mm);
        \end{scope}
\end{tikzpicture}
}
    \end{center}
    from~\cite[Example~6.1]{overbeek2024termination}
    in \Graph, and let matches $m$ restrict to monic matches.
    We prove termination of $\arule$ with matches $m$ restrict to monic matches. We use the weighted type graph 
    \begin{center}
        \scalebox{1}{{%
\newcommand{\nodex}{\vertex{x}{cblue!20}}%
\newcommand{\nodey}{\vertex{y}{cgreen!20}}%
\newcommand{\nodez}{\vertex{z}{corange!20}}%
\newcommand{\nodew}{\vertex{w}{cred!20}}%
\newcommand{\nodeu}{\vertex{u}{white}}%
\begin{tikzpicture}[baseline=-10mm,->,node distance=12mm,n/.style={},epattern/.style={very thick},mred/.style={cred,very thick},mblue/.style={cblue,very thick},mgreen/.style={cdarkgreen,very thick},mpurple/.style={cpurple,very thick,densely dotted}]
        \graphbox{$\awtg$}{84mm}{0mm}{40mm}{16mm}{-1mm}{-3.5mm}{
          \node [npattern] (x) at (-6mm,-9mm) {\nodex};
          \node [npattern] (y) at (8mm,-9mm) {\nodey\nodez};
          \node [npattern] (z) at (8mm,0mm) {\nodeu};
          \draw [->] (x) to[bend left=20] (y);
          \draw [->] (y) to[bend left=20] (x);
          \draw [->] (x) to[thinloop=180,distance=1.5em] (x);
          \draw [->] (y) to[thinloop=0,distance=1.5em] (y);
          \draw [->] (y) to[bend right=20] node [right,label] {$2$} (z);
          \draw [->] (z) to[bend right=20] (y);
        }
\end{tikzpicture}
}
    \end{center}
    over the (strictly monotonic) arithmetic semiring.
    This is the type graph constructed in Example~\ref{ex:monic:closure}, enriched with weighted elements.
    $\elements$ consists of a single morphism with weight $2$, mapping $\edgeGraph{}$ onto the edge from $\{y,z\}$ to $\{u\}$ in $T$.

    As the context closure $\flower_\arule$ for $\arule$ and $T$ we use $\flower_\arule: L \to T$ as indicated by the node names ($\{y\}$ and $\{z\}$ are mapped onto $\{y,z\}$). 
    For $t_K = \flower_\arule \circ l$ we then obtain 
      $\weighti[\mathcal{T}]{t_K}{l} = {1 + 1 + 2} \ostricti {1 + 1} = \weighti[\mathcal{T}]{t_K}{r}$  
    since the empty product is $1$, and $\{z\}$ in $L$ can be mapped onto $\{x\}$, $\{y,z\}$ and $\{u\}$,
    while $\{w\}$ in $R$ can only be mapped onto $\{x\}$ and $\{y,z\}$ in $T$.
    Furthermore, $\weighti[\mathcal{T}]{t_K}{l} \oweaki \weighti[\mathcal{T}]{t_K}{r}$ for all other $t_K : K \to T$.
    Thus $\arule$ is closure decreasing, and hence terminating by Theorem~\ref{thm:termination}.
\end{example}

We continue with an example (Example~\ref{ex:monic:simple}) of rewriting simple graphs. Here the techniques from~\cite{bruggink2014,bruggink2015} are not applicable, because they are defined only for multigraphs.
Even when considering the rule as a rewrite rule on multigraphs, the technique from~\cite{bruggink2014} is not applicable due to non-monic $r$, and the technique from~\cite{bruggink2015} fails because the rule is not terminating with respect to unrestricted matching.

\begin{example}[Simple graphs and monic matching]
\label{ex:monic:simple}
    Let $\arule$ be the DPO rule
    \begin{center}
        \scalebox{1}{{%
\newcommand{\nodex}{\vertex{x}{cblue!20}}%
\newcommand{\nodey}{\vertex{y}{cgreen!20}}%
\newcommand{\nodez}{\vertex{z}{corange!20}}%
\newcommand{\nodew}{\vertex{w}{cred!20}}%
\newcommand{\nodectx}{\vertex{c}{white}}%
\begin{tikzpicture}[baseline=-5mm,->,node distance=12mm,n/.style={},epattern/.style={very thick},mred/.style={cred,very thick},mblue/.style={cblue,very thick},mgreen/.style={cdarkgreen,very thick},mpurple/.style={cpurple,very thick,densely dotted}]
        \graphbox{$L$}{0mm}{0mm}{28mm}{7mm}{1mm}{-3.5mm}{
          \node [npattern] (x) at (-6mm,0mm) {\nodex};
          \node [npattern] (y) at (6mm,0mm) {\nodey};
          \draw (x) to (y);
        }
        \graphbox{$K$}{36mm}{0mm}{28mm}{7mm}{1mm}{-3.5mm}{
          \node [npattern] (x) at (-6mm,0mm) {\nodex};
          \node [npattern] (y) at (6mm,0mm) {\nodey};
        }
        \begin{scope}[opacity=1]        
        \graphbox{$R$}{72mm}{0mm}{38mm}{7mm}{-8mm}{-3.5mm}{
          \node [npattern] (x) at (0mm,0mm) {\nodex\nodey};
          \node [npattern] (z) at (14mm,0mm) {\nodez};
          \node [npattern] (w) at (22mm,0mm) {\nodew};
          \draw (x) to [thinloop=0,distance=1.5em] (x);
        }
        \end{scope}
        \begin{scope}[nodes={fill=white,label,inner sep=0.5mm}]
        \draw [->] (35mm,-4mm) to node {$l$} ++(-6mm,0mm);
        \draw [->] (65mm,-4mm) to node {$r$} ++(6mm,0mm);
        \end{scope}
    \end{tikzpicture}
}
    \end{center}
    in the category of simple graphs \SimpleGraph, and let matches $m$ restrict to monic matches.
    This rule folds an edge into a loop, similar to \cite[Example~5.2]{overbeek2024termination}, but extended with the addition of two fresh nodes in the right-hand side.
    We prove termination using the weighted type graph $\awtg = \wtg$ given by
    \begin{center}
        \scalebox{1}{{%
\newcommand{\nodex}{\vertex{x}{cblue!20}}%
\newcommand{\nodey}{\vertex{y}{cgreen!20}}%
\newcommand{\nodez}{\vertex{z}{corange!20}}%
\newcommand{\nodew}{\vertex{w}{cred!20}}%
\newcommand{\nodectx}{\vertex{c}{white}}%
\begin{tikzpicture}[baseline=-10mm,->,node distance=12mm,n/.style={},epattern/.style={very thick},mred/.style={cred,very thick},mblue/.style={cblue,very thick},mgreen/.style={cdarkgreen,very thick},mpurple/.style={cpurple,very thick,densely dotted}]
        \graphbox{$\awtg$}{84mm}{0mm}{34mm}{10mm}{5mm}{4mm}{
          \node [npattern] (x) at (-6mm,-10mm) {\nodex\nodey};
          \annotate{x}{$1$}
          \node [npattern] (y) at (6mm,-10mm) {\nodez};
          \draw [->] (x) to[thinloop=180,distance=1.5em] node [left,label] {} (x);
        }
\end{tikzpicture}
}
    \end{center}
    over the tropical semiring.
    So $\elements$ consists of a single morphism $e$ mapping $\nodeGraph{}$ onto $x$ in $T$, and $\ww(e) = 1$.
    (See Remark~\ref{rem:traceable:simplegraph} on strong traceability in \SimpleGraph.)
    As the context closure $\flower_\arule$ for $\arule$ and $T$ we use the morphism $\flower_\arule: L \to T$ as indicated by the node names ($\{x\}$ and $\{y\}$ are mapped onto $\{x,y\}$).
    
    Then 
      $\weighti[\mathcal{T}]{t_K}{l} = 2 \ostricti 1 = 3 \oplus 2 \oplus 2 \oplus 1 = \weighti[\mathcal{T}]{t_K}{r}$
    for $t_K = \flower_\arule \circ l$.
    For every other morphism $t_K: K \to T$ it holds that 
    $\set{{-} \circ l = t_K} = \emptyset = \set{{-} \circ r = t_K}$.
    Thus~$\arule$ is uniformly decreasing,
    and Theorem~\ref{thm:termination} yields termination of $\arule$.
\end{example}

\begin{example}[Reconfiguration on simple graphs]
    Example~\ref{ex:monic:triangle} can also be considered on \SimpleGraph with monic matching. Assume that we add in $K$ an edge from $\{x\}$ to $\{y\}$ (this does not change the rewrite relation). Then $l$ is regular monic, and we have strong traceability for the edges along the left pushout squares of rewrite steps.
    Hence all left squares are weighable, and all right squares are bounded above by the type graph in Example~\ref{ex:monic:triangle}; so the same type graph proves also termination on \SimpleGraph.
\end{example}

We now consider some examples in \Graph using unrestricted matching.

\begin{example}[Unrestricted matching]
\label{example:plump:string}
    Consider the following double pushout rewrite system  
    \begin{center}
        $\rho = $ \scalebox{1}{{%
\newcommand{\nodex}{\vertex{x}{cblue!20}}%
\newcommand{\nodey}{\vertex{y}{cgreen!20}}%
\newcommand{\nodez}{\vertex{z}{corange!20}}%
\newcommand{\nodew}{\vertex{w}{cred!20}}%
\newcommand{\nodectx}{\vertex{c}{white}}%
      \begin{tikzpicture}[->,node distance=12mm,n/.style={},epattern/.style={very thick},mred/.style={cred,very thick},mblue/.style={cblue,very thick},mgreen/.style={cdarkgreen,very thick},mpurple/.style={cpurple,very thick,densely dotted},baseline=-5.5mm]
        \graphbox{$L$}{0mm}{0mm}{35mm}{9mm}{0.5mm}{-5mm}{
          \node [npattern] (x) at (-10mm,0mm) {\nodex};
          \node [npattern] (y) at (0mm,0mm) {\nodey};
          \node [npattern] (z) at (10mm,0mm) {\nodez};
          \draw (x) to node [above] {$a$} (y);
          \draw (y) to node [above] {$b$} (z);
        }
        \graphbox{$K$}{44mm}{0mm}{35mm}{9mm}{0.5mm}{-5mm}{
          \node [npattern] (x) at (-10mm,0mm) {\nodex};
          \node [npattern] (z) at (10mm,0mm) {\nodez};
        }
        \graphbox{$R$}{88mm}{0mm}{35mm}{9mm}{0.5mm}{-5mm}{
          \node [npattern] (x) at (-10mm,0mm) {\nodex};
          \node [npattern] (y) at (0mm,0mm) {\nodey};
          \node [npattern] (z) at (10mm,0mm) {\nodez};
          \draw (x) to node [above] {$a$} (y);
          \draw (y) to node [above] {$c$} (z);
        }
        \begin{scope}[nodes={fill=white,label,inner sep=0.5mm}]
        \draw [->] (42.5mm,-5mm) to node {$l$} ++(-6mm,0mm);
        \draw [->] (80.5mm,-5mm) to node {$r$} ++(6mm,0mm);
        \end{scope}
      \end{tikzpicture}
}
    \end{center}
    \begin{center}
        $\tau = $ \scalebox{1}{{%
\newcommand{\nodex}{\vertex{x}{cblue!20}}%
\newcommand{\nodey}{\vertex{y}{cgreen!20}}%
\newcommand{\nodez}{\vertex{z}{corange!20}}%
\newcommand{\nodew}{\vertex{w}{cred!20}}%
\newcommand{\nodectx}{\vertex{c}{white}}%
      \begin{tikzpicture}[->,node distance=12mm,n/.style={},epattern/.style={very thick},mred/.style={cred,very thick},mblue/.style={cblue,very thick},mgreen/.style={cdarkgreen,very thick},mpurple/.style={cpurple,very thick,densely dotted},baseline=-5.5mm]
        \graphbox{$L$}{0mm}{0mm}{35mm}{9mm}{0.5mm}{-5mm}{
          \node [npattern] (x) at (-10mm,0mm) {\nodex};
          \node [npattern] (y) at (0mm,0mm) {\nodey};
          \node [npattern] (z) at (10mm,0mm) {\nodez};
          \draw (x) to node [above] {$c$} (y);
          \draw (y) to node [above] {$d$} (z);
        }
        \graphbox{$K$}{44mm}{0mm}{35mm}{9mm}{0.5mm}{-5mm}{
          \node [npattern] (x) at (-10mm,0mm) {\nodex};
          \node [npattern] (z) at (10mm,0mm) {\nodez};
        }
        \graphbox{$R$}{88mm}{0mm}{35mm}{9mm}{0.5mm}{-5mm}{
          \node [npattern] (x) at (-10mm,0mm) {\nodex};
          \node [npattern] (y) at (0mm,0mm) {\nodey};
          \node [npattern] (z) at (10mm,0mm) {\nodez};
          \draw (x) to node [above] {$d$} (y);
          \draw (y) to node [above] {$b$} (z);
        }
        \begin{scope}[nodes={fill=white,label,inner sep=0.5mm}]
        \draw [->] (42.5mm,-5mm) to node {$l$} ++(-6mm,0mm);
        \draw [->] (80.5mm,-5mm) to node {$r$} ++(6mm,0mm);
        \end{scope}
      \end{tikzpicture}
}
    \end{center}
    from~\cite[Example 3]{plump2018modular} and \cite[Example 3.8]{plump1995on}
    in \Graph with unrestricted matching.
    These are graph transformation versions of the string rewrite rules $ab \to ac$ and $cd \to db$, respectively.
    We prove termination in two steps using the type graphs 
    \begin{center}
        \scalebox{1}{{%
\newcommand{\nodex}{\vertex{x}{cblue!20}}%
\newcommand{\nodey}{\vertex{y}{cgreen!20}}%
\newcommand{\nodez}{\vertex{z}{corange!20}}%
\newcommand{\nodew}{\vertex{w}{cpurple!20}}%
\newcommand{\nodeu}{\vertex{u}{white}}%
\begin{tikzpicture}[baseline=-10mm,->,node distance=12mm,n/.style={},epattern/.style={very thick},mred/.style={cred,very thick},mblue/.style={cblue,very thick},mgreen/.style={cdarkgreen,very thick},mpurple/.style={cpurple,very thick,densely dotted}]
        \graphbox{$\awtg_1$}{84mm}{0mm}{50mm}{14mm}{1mm}{3mm}{
          \node [npattern] (u) at (-16mm,-10mm) {\nodeu};
          \node [npattern] (xz) at (0mm,-10mm) {\nodex\nodey\nodez};
          \draw [->] (u) to[bend left=20] node [above,label] {$b$} (xz);
          \draw [->] (xz) to[bend left=20] node [below,label] {$a^2$} (u);
          \draw [->] (xz) to[thinloop=0,distance=1.5em] node [right,label] {$a,b,c,d$} (xz);
        }
\end{tikzpicture}
}
        \scalebox{1}{{%
\newcommand{\nodex}{\vertex{x}{cblue!20}}%
\newcommand{\nodey}{\vertex{y}{cgreen!20}}%
\newcommand{\nodez}{\vertex{z}{corange!20}}%
\newcommand{\nodew}{\vertex{w}{cpurple!20}}%
\newcommand{\nodeu}{\vertex{u}{white}}%
\begin{tikzpicture}[baseline=-10mm,->,node distance=12mm,n/.style={},epattern/.style={very thick},mred/.style={cred,very thick},mblue/.style={cblue,very thick},mgreen/.style={cdarkgreen,very thick},mpurple/.style={cpurple,very thick,densely dotted}]
        \graphbox{$\awtg_2$}{84mm}{0mm}{50mm}{14mm}{1mm}{3mm}{
          \node [npattern] (u) at (-16mm,-10mm) {\nodeu};
          \node [npattern] (xz) at (0mm,-10mm) {\nodex\nodey\nodez};
          \draw [->] (u) to[bend left=20] node [above,label] {$d$} (xz);
          \draw [->] (xz) to[bend left=20] node [below,label] {$c^2$} (u);
          \draw [->] (xz) to[thinloop=0,distance=1.5em] node [right,label] {$a,b,c,d$} (xz);
        }
\end{tikzpicture}
}
    \end{center}
    over the arithmetic semiring.
    Here the weights of the weighted elements are indicated as superscripts (both type graphs have one weighted element with domain $\edgeGraph$ and weight $2$).
    
    First, we take the weighted type graph $\awtg_1$.
    Consider the rule $\rho$ together with the context closure $\flower_\rho: L \to T_1$ mapping all nodes onto $\{x,y,z\}$ in $T_1$.
    For $t_K = \flower_\rho \circ l$ we have 
    \begin{align*}
      \weighti[\mathcal{T}_1]{t_K}{l} = 1 + 2 \ostricti 1 = \weighti[\mathcal{T}_1]{t_K}{r}. 
    \end{align*}
    Here weight $1$ arises from those morphisms mapping all nodes onto $\{x,y,z\}$ in $T_1$;
    then there are no weighted elements in the image of the morphism and the empty product is $1$.
    For all other $t_K : K \to T$ we have $\weighti[\mathcal{T}_1]{t_K}{l} = 0 \oweaki 0 = \weighti[\mathcal{T}_1]{t_K}{r}$. 
    Thus $\arule$ is closure decreasing.
    
    For the rule $\tau$, $L$ and $R$ admit only one morphism with codomain $T_1$. So it is easy to see that $\weighti[\mathcal{T}_1]{t_K}{l} \oweaki \weighti[\mathcal{T}_1]{t_K}{r}$ for all $t_K : K \to T$, and thus $\tau$ is weakly decreasing. 
    
    As a consequence, by Theorem~\ref{thm:termination}, $\arule$ is terminating relative to $\tau$. 
    Hence it suffices to prove termination of $\tau$ which follows analogously using $\awtg_2$.  
\end{example}

\begin{example}[Tree counter]
\label{example:tree:counter}
\begin{samepage}
    Consider the system 
    {%
\newcommand{\crule}[1]{\scalebox{0.97}{
\begin{tikzpicture}[baseline=-5mm,->,node distance=12mm,n/.style={},epattern/.style={very thick},mred/.style={cred,very thick},mblue/.style={cblue,very thick},mgreen/.style={cdarkgreen,very thick},mpurple/.style={cpurple,very thick,densely dotted}]
#1
\end{tikzpicture}%
}}
\newcommand{\nodex}{\vertex{x}{cblue!20}}%
\newcommand{\nodey}{\vertex{y}{cgreen!20}}%
\newcommand{\nodez}{\vertex{z}{corange!20}}%
\newcommand{\nodew}{\vertex{w}{cred!20}}%
\newcommand{\nodem}{\vertex{}{white}}%
\newcommand{\tree}[3]{
    \node [npattern] (x) at (-7.5mm,4mm) {\nodex};
    \node [npattern] (y) at (-7.5mm,-4mm) {\nodey};
    \node [npattern] (z) at (7.5mm,0mm) {\nodez};
    \node [npattern,circle,minimum size=3mm] (m) at (0mm,0mm) {};
    \draw (z) to node[above] {$#1$} (m);
    \draw (m) to node[above,pos=0.3] {$#2$} (x);
    \draw (m) to node[below,pos=0.3] {$#3$} (y);
    }%
\newcommand{\treerule}[6]{
    \graphbox{}{0mm}{0mm}{23mm}{16mm}{0mm}{-8mm}{
      \tree{#1}{#2}{#3}
    }
    \graphbox{}{26mm}{0mm}{15mm}{16mm}{0mm}{-8mm}{
      \node [npattern] (x) at (-3.5mm,4mm) {\nodex};
      \node [npattern] (y) at (-3.5mm,-4mm) {\nodey};
      \node [npattern] (z) at (3.5mm,0mm) {\nodez};
    }
    \graphbox{}{44mm}{0mm}{23mm}{16mm}{0mm}{-8mm}{
      \tree{#4}{#5}{#6}
    }
    \begin{scope}[nodes={fill=white,label,inner sep=0.5mm}]
    \draw [->] (25.5mm,-8mm) to ++(-2mm,0mm);
    \draw [->] (41.5mm,-8mm) to ++(2mm,0mm);
    \end{scope}
}
\begin{center}
\crule{
    \node at (-4mm,-4mm) {$\rho_1\,{=}$};
        \graphbox{}{0mm}{0mm}{23mm}{8mm}{0mm}{-4.5mm}{
          \node [npattern] (x) at (-6mm,0mm) {\nodex};
          \node [npattern] (y) at (6mm,0mm) {\nodey};
          \draw (y) to node [above] {$0$} (x);
        }
        \graphbox{}{26mm}{0mm}{15mm}{8mm}{0mm}{-4.5mm}{
          \node [npattern] (x) at (0mm,0mm) {\nodex};
        }
        \graphbox{}{44mm}{0mm}{23mm}{8mm}{0mm}{-4.5mm}{
          \node [npattern] (x) at (-6mm,0mm) {\nodex};
          \node [npattern] (y) at (6mm,0mm) {\nodey};
          \draw (y) to node [above] {$1$} (x);
        }
        \begin{scope}[nodes={fill=white,label,inner sep=0.5mm}]
        \draw [->] (25.5mm,-4mm) to ++(-2mm,0mm);
        \draw [->] (41.5mm,-4mm) to ++(2mm,0mm);
        \end{scope}
}%
\crule{
        \node at (-4mm,-4mm) {$\rho_2\,{=}$};
        \graphbox{}{0mm}{0mm}{23mm}{8mm}{0mm}{-4.5mm}{
          \node [npattern] (x) at (-6mm,0mm) {\nodex};
          \node [npattern] (y) at (6mm,0mm) {\nodey};
          \draw (y) to node [above] {$1$} (x);
        }
        \graphbox{}{26mm}{0mm}{15mm}{8mm}{0mm}{-4.5mm}{
          \node [npattern] (x) at (0mm,0mm) {\nodex};
        }
        \graphbox{}{44mm}{0mm}{23mm}{8mm}{0mm}{-4.5mm}{
          \node [npattern] (x) at (-6mm,0mm) {\nodex};
          \node [npattern] (y) at (6mm,0mm) {\nodey};
          \draw (y) to node [above] {$c$} (x);
        }
        \begin{scope}[nodes={fill=white,label,inner sep=0.5mm}]
        \draw [->] (25.5mm,-4mm) to ++(-2mm,0mm);
        \draw [->] (41.5mm,-4mm) to ++(2mm,0mm);
        \end{scope}
}
\crule{
        \node at (-4mm,-8mm) {$\rho_3\,{=}$};
        \treerule{c}{0}{0}{0}{1}{0}
}%
\crule{
        \node at (-4mm,-8mm) {$\rho_4\,{=}$};
        \treerule{c}{1}{0}{0}{c}{0}
}
\crule{
        \node at (-4mm,-8mm) {$\rho_5\,{=}$};
        \treerule{c}{0}{1}{0}{1}{1}
}%
\crule{
        \node at (-4mm,-8mm) {$\rho_6\,{=}$};
        \treerule{c}{1}{1}{0}{c}{1}
}
\end{center}
}%

    \noindent
    from~\cite[Example 6]{bruggink2015} in \Graph with unrestricted matching.
    This system implements a counter on trees.
\end{samepage}%
    We prove termination in two steps using the type graphs 
    \begin{center}
        \scalebox{1}{{%
\newcommand{\nodex}{\vertex{x}{cblue!20}}%
\newcommand{\nodey}{\vertex{y}{cgreen!20}}%
\newcommand{\nodez}{\vertex{z}{corange!20}}%
\newcommand{\nodew}{\vertex{w}{white}}%
\newcommand{\nodeu}{\vertex{u}{white}}%
\begin{tikzpicture}[baseline=-10mm,->,node distance=12mm,n/.style={},epattern/.style={very thick},mred/.style={cred,very thick},mblue/.style={cblue,very thick},mgreen/.style={cdarkgreen,very thick},mpurple/.style={cpurple,very thick,densely dotted}]
        \graphbox{$\awtg_1$}{84mm}{0mm}{45mm}{26mm}{0mm}{-3mm}{
          \node [npattern] (0) at (-6mm,-10mm) {\nodex\nodey};
          \node [npattern] (1) at (8mm,-10mm) {\nodeu};
          
          \draw [->] (0) to[thinloop=120] node [above left,label] {$0$} (0);
          \draw [->] (1) to[bend left=-20] node [above,label] {$0^2$} (0);
          \draw [->] (1) to[thinloop=60] node [above right,label] {$0^2$} (1);
          
          \draw [->] (0) to[thinloop=180] node [left,label] {$1$} (0);          
          \draw [->] (1) to[bend left=20] node [below,label] {$1$} (0);
          \draw [->] (1) to[thinloop=0] node [right,label] {$1^2$} (1);

          \draw [->] (0) to[thinloop=240] node [below left,label] {$c$} (0);
          \draw [->] (1) to[thinloop=-60] node [below right,label,yshift=1mm] {$c^2$} (1);
        }
\end{tikzpicture}%
}
        \scalebox{1}{{%
\newcommand{\nodex}{\vertex{x}{cblue!20}}%
\newcommand{\nodey}{\vertex{y}{cgreen!20}}%
\newcommand{\nodez}{\vertex{z}{corange!20}}%
\newcommand{\nodew}{\vertex{w}{white}}%
\newcommand{\nodeu}{\vertex{u}{white}}%
\begin{tikzpicture}[baseline=-10mm,->,node distance=12mm,n/.style={},epattern/.style={very thick},mred/.style={cred,very thick},mblue/.style={cblue,very thick},mgreen/.style={cdarkgreen,very thick},mpurple/.style={cpurple,very thick,densely dotted}]
        \graphbox{$\awtg_2$}{84mm}{0mm}{30mm}{26mm}{0mm}{-6mm}{
          \node [npattern] (0) at (0mm,-10mm) {\nodeu};
          
          \draw [->] (0) to[thinloop=-30] node [right,label] {$0$} (0);
          
          \draw [->] (0) to[thinloop=90] node [above,label] {$1^2$} (0);

          \draw [->] (0) to[thinloop=210] node [left,label] {$c^3$} (0);
        }
\end{tikzpicture}%
}
    \end{center}
    over the arithmetic semiring. The superscripts indicate the weights of the weighted elements (all weighted elements have domain $\edgeGraph$).
    
    First, we take the weighted type graph $\awtg_1$.
    Consider the rule $\rho_1$ together with the context closure $\flower_\rho: L \to T_1$ mapping all nodes onto $\{x,y\}$ in $T_1$.
    For $t_K = \flower_\rho \circ l$ we have 
    \begin{align*}
      \weighti[\mathcal{T}_1]{t_K}{l} = 1 + 2 \ostricti 1 + 1 = \weighti[\mathcal{T}_1]{t_K}{r}. 
    \end{align*}
    For $t_K$ mapping $\{x\}$ to $\{u\}$, we get 
    $\weighti[\mathcal{T}_1]{t_K}{l} = 2 \oweaki 2 = \weighti[\mathcal{T}_1]{t_K}{r}$.
    Thus $\rho_1$ is closure decreasing.
    Likewise, $\rho_2$ is also closure decreasing and all other rules are weakly decreasing.
    By Theorem~\ref{thm:termination} it suffices to prove termination without the rules $\rho_1$ and $\rho_2$.

    Termination of the remaining rules is established by the weighted type graph $\awtg_2$.
    The rules $\rho_3$, $\rho_4$, $\rho_5$ and $\rho_6$ are all closure decreasing with respect to $\awtg_2$.
    As a consequence, the system is terminating by Theorem~\ref{thm:termination}. 
\end{example}

The following example shows that the technique also can be fruitfully applied for rules $\arule: L \leftarrow K \to R$ for which $L$ is a subgraph of $R$, that is, $L \mono R$.

\begin{example}[Morphism counting]
\label{ex:empty}
    Let $\arule$ be the double pushout rule
    \begin{center}
        \scalebox{1}{{%
\newcommand{\nodex}{\vertex{x}{cblue!20}}%
\newcommand{\nodey}{\vertex{y}{cgreen!20}}%
\newcommand{\nodez}{\vertex{z}{corange!20}}%
\newcommand{\nodew}{\vertex{w}{cred!20}}%
\newcommand{\nodectx}{\vertex{c}{white}}%
\begin{tikzpicture}[baseline=-5mm,->,node distance=12mm,n/.style={},epattern/.style={very thick},mred/.style={cred,very thick},mblue/.style={cblue,very thick},mgreen/.style={cdarkgreen,very thick},mpurple/.style={cpurple,very thick,densely dotted}]
        \graphbox{$L$}{0mm}{0mm}{28mm}{10mm}{1mm}{-5mm}{
          \node [npattern] (x) at (-6mm,0mm) {\nodex};
          \node [npattern] (y) at (6mm,0mm) {\nodey};
        }
        \graphbox{$K$}{36mm}{0mm}{18mm}{10mm}{1mm}{-5mm}{
          \node [npattern] (x) at (0mm,0mm) {\nodex};
        }
        \begin{scope}[opacity=1]        
        \graphbox{$R$}{62mm}{0mm}{28mm}{10mm}{1mm}{-5mm}{
          \node [npattern] (x) at (-6mm,0mm) {\nodex};
          \node [npattern] (y) at (6mm,0mm) {\nodey};
          \draw (x) to (y);
        }
        \end{scope}
        \begin{scope}[nodes={fill=white,label,inner sep=0.5mm}]
        \draw [->] (35mm,-5mm) to node {$l$} ++(-6mm,0mm);
        \draw [->] (55mm,-5mm) to node {$r$} ++(6mm,0mm);
        \end{scope}
    \end{tikzpicture}
}
    \end{center}
    in the category of graphs \Graph with unrestricted matching.
    Intuitively, this rule connects an isolated node with an edge.
    We prove termination using the weighted type graph $\awtg = \wtg$ given by
    \begin{center}
        \scalebox{1}{{%
\newcommand{\nodex}{\vertex{x}{cblue!20}}%
\newcommand{\nodey}{\vertex{y}{cgreen!20}}%
\newcommand{\nodez}{\vertex{z}{corange!20}}%
\newcommand{\nodew}{\vertex{w}{cred!20}}%
\newcommand{\nodectx}{\vertex{c}{white}}%
\begin{tikzpicture}[baseline=-10mm,->,node distance=12mm,n/.style={},epattern/.style={very thick},mred/.style={cred,very thick},mblue/.style={cblue,very thick},mgreen/.style={cdarkgreen,very thick},mpurple/.style={cpurple,very thick,densely dotted}]
        \graphbox{$\awtg$}{84mm}{0mm}{34mm}{9mm}{4mm}{5mm}{
          \node [npattern] (x) at (-6mm,-10mm) {\nodex};
          \node [npattern] (y) at (6mm,-10mm) {\nodey};
          \draw [->] (x) to[thinloop=180,distance=1.5em] node [left,label] {} (x);
          \draw [->] (y) to[thinloop=180,distance=1.5em] node [left,label] {} (y);
        }
\end{tikzpicture}
}
    \end{center}
    over the arithmetic semiring.
    Here $\elements = \emptyset$. This means that every morphism into $T$ has weight $1$ (the empty product) and thus the weight of an object $X \in \ob{\categoryc}$ is $\#\homset{X}{T}$, the number of morphisms from $X$ to $T$.
    As the context closure $\flower_\arule$ for $\arule$ and $T$ we use the inclusion $\flower_\arule: L \mono T$ as indicated by the node names.
    
    For every morphism $t_K : K \to T$ we have 
    \begin{align*}
      \weighti[\mathcal{T}]{t_K}{l} = 1 + 1 = 2 \ostricti 1 = \weighti[\mathcal{T}]{t_K}{r}  
    \end{align*}
    Thus~$\arule$ is closure decreasing,
    and Theorem~\ref{thm:termination} yields termination of $\arule$.
\end{example}

Finally, we consider a hypergraph example with unrestricted matching. Our method can prove relative termination of the first rule, but termination of the second rule does not seem to be provable with our method.

\begin{example}[Limitations]
\label{ex:limitations}
    Consider the following hypergraph rewriting system
    \begin{center}
        $\rho = $\, \scalebox{1}{{%
\newcommand{\context}{
}%
\newcommand{\nodex}{\vertex{x}{cblue!20}}%
\newcommand{\nodey}{\vertex{y}{cgreen!20}}%
\newcommand{\nodez}{\vertex{z}{corange!20}}%
\newcommand{\nodew}{\vertex{w}{cred!20}}%
\newcommand{\nodectx}{\vertex{c}{white}}%
      \begin{tikzpicture}[->,node distance=12mm,n/.style={},epattern/.style={very thick},mred/.style={cred,very thick},mblue/.style={cblue,very thick},mgreen/.style={cdarkgreen,very thick},mpurple/.style={cpurple,very thick,densely dotted},baseline=-28mm]
        \graphbox{$L$}{0mm}{-11mm}{35mm}{33mm}{2mm}{-5mm}{
          \node [npattern] (x) at (0mm,0mm) {\nodex};
          \node [npattern] (y) at (-8mm,-15mm) {\nodey};
          \node [npattern] (z) at (8mm,-15mm) {\nodez};
          \node [rectangle,draw=black,fill=white] (sx) [below of=x,node distance=7.5mm] {$+$};
          \draw [-] (x) to  (sx);
          \draw [->] (sx.south west) to  (y);
          \draw [->] (sx.south east) to  (z);
          \node [rectangle,draw=black,fill=white] (0) [below of=z,node distance=8mm] {$0$}; 
          \draw [-] (z) to  (0);
          \context
        }
        \graphbox{$K$}{44mm}{-11mm}{35mm}{33mm}{2mm}{-5mm}{
          \node [npattern] (x) at (0mm,0mm) {\nodex};
          \node [npattern] (y) at (-8mm,-15mm) {\nodey};
          \node [npattern] (z) at (8mm,-15mm) {\nodez};
          \node [rectangle,draw=black,fill=white] (0) [below of=z,node distance=8mm] {$0$}; 
          \draw [-] (z) to  (0);
          \context
        }
        \begin{scope}[opacity=1]        
        \graphbox{$R$}{88mm}{-11mm}{35mm}{33mm}{2mm}{-5mm}{
          \node [npattern] (x) at (0mm,0mm) {\nodex\nodey};
          \node [npattern] (z) at (0mm,-15mm) {\nodez};
          \node [rectangle,draw=black,fill=white] (0) [below of=z,node distance=8mm] {$0$}; 
          \draw [-] (z) to  (0);
          \context
        }
        \end{scope}
        \begin{scope}[nodes={fill=white,label,inner sep=0.5mm}]
        \draw [->] (42.5mm,-27mm) to node {$l$} ++(-6mm,0mm);
        \draw [->] (80.5mm,-27mm) to node {$r$} ++(6mm,0mm);
        \end{scope}
      \end{tikzpicture}
}
    \end{center}
    \begin{center}
        $\tau = $\, \scalebox{1}{{%
\newcommand{\context}{
}%
\newcommand{\nodex}{\vertex{x}{cblue!20}}%
\newcommand{\nodey}{\vertex{y}{cgreen!20}}%
\newcommand{\nodez}{\vertex{z}{corange!20}}%
\newcommand{\nodew}{\vertex{w}{cred!20}}%
\newcommand{\nodectx}{\vertex{c}{white}}%
      \begin{tikzpicture}[->,node distance=12mm,n/.style={},epattern/.style={very thick},mred/.style={cred,very thick},mblue/.style={cblue,very thick},mgreen/.style={cdarkgreen,very thick},mpurple/.style={cpurple,very thick,densely dotted},baseline=-28mm]
        \graphbox{$L$}{0mm}{-11mm}{35mm}{33mm}{2mm}{-5mm}{
          \node [npattern] (x) at (-8mm,0mm) {\nodex};
          \node [npattern] (y) at (8mm,0mm) {\nodey};
          \node [npattern] (z) at (0mm,-15mm) {\nodez};
          \node [rectangle,draw=black,fill=white] (sx) [below of=x,node distance=7.5mm] {$s$};
          \draw [-] (x) to  (sx);
          \draw [->] (sx.south east) to  (z);
          \node [rectangle,draw=black,fill=white] (sy) [below of=y,node distance=7.5mm] {$s$};
          \draw [-] (y) to  (sy);
          \draw [->] (sy.south west) to  (z);
          \node [rectangle,draw=black,fill=white] (0) [below of=z,node distance=8mm] {$0$};
          \draw [-] (z) to  (0);
          \context
        }
        \graphbox{$K$}{44mm}{-11mm}{35mm}{33mm}{2mm}{-5mm}{
          \node [npattern] (x) at (-8mm,0mm) {\nodex};
          \node [npattern] (y) at (8mm,0mm) {\nodey};
          \node [npattern] (z) at (0mm,-15mm) {\nodez};
          \node [rectangle,draw=black,fill=white] (sx) [below of=x,node distance=7.5mm] {$s$};
          \draw [-] (x) to  (sx);
          \draw [->] (sx.south east) to  (z);
          \node [rectangle,draw=black,fill=white] (0) [below of=z,node distance=8mm] {$0$};
          \draw [-] (z) to  (0);
          \context
        }
        \begin{scope}[opacity=1]        
        \graphbox{$R$}{88mm}{-11mm}{35mm}{33mm}{2mm}{-5mm}{
          \node [npattern] (x) at (-7mm,0mm) {\nodex};
          \node [npattern] (y) at (7mm,0mm) {\nodey};
          \node [npattern] (z) at (-7mm,-15mm) {\nodez};
          \node [npattern,minimum size=5mm] (n) at (7mm,-15mm) {};
          \node [rectangle,draw=black,fill=white] (sx) [below of=x,node distance=7.5mm] {$s$};
          \draw [-] (x) to  (sx);
          \draw [->] (sx) to  (z);
          \node [rectangle,draw=black,fill=white] (0) [below of=z,node distance=8mm] {$0$};
          \draw [-] (z) to  (0);
          \node [rectangle,draw=black,fill=white] (sy) [below of=y,node distance=7.5mm] {$s$};
          \draw [-] (y) to  (sy);
          \draw [->] (sy) to  (n);
          \node [rectangle,draw=black,fill=white] (0b) [below of=n,node distance=8mm] {$0$};
          \draw [-] (n) to  (0b);
          \context
        }
        \end{scope}
        \begin{scope}[nodes={fill=white,label,inner sep=0.5mm}]
        \draw [->] (42.5mm,-27mm) to node {$l$} ++(-6mm,0mm);
        \draw [->] (80.5mm,-27mm) to node {$r$} ++(6mm,0mm);
        \end{scope}
      \end{tikzpicture}
}
    \end{center}
    from~\cite[Example 6]{plump2018modular} with unrestricted matching.

    Termination of $\rho$ relative to $\tau$ follows  using the following weighted type graph 
    \begin{center}
        \scalebox{1}{{%
\newcommand{\context}{
}%
\newcommand{\nodex}{\vertex{x}{cblue!20}}%
\newcommand{\nodey}{\vertex{y}{cgreen!20}}%
\newcommand{\nodez}{\vertex{z}{corange!20}}%
\newcommand{\nodew}{\vertex{w}{cblue!20}}%
\newcommand{\nodectx}{\vertex{c}{white}}%
      \begin{tikzpicture}[->,node distance=12mm,n/.style={},epattern/.style={very thick},mred/.style={cred,very thick},mblue/.style={cblue,very thick},mgreen/.style={cdarkgreen,very thick},mpurple/.style={cpurple,very thick,densely dotted},baseline=-28mm]
        \graphbox{$\awtg$}{0mm}{-11mm}{35mm}{20mm}{2mm}{-4mm}{
          \node [npattern] (x) at (0mm,0mm) {\nodew};
          \node [rectangle,draw=black,fill=white] (sx) [below of=x,node distance=9mm] {$+$}; 
          \node [label,anchor=north] at (sx.south) {$2$};
          \draw [-] (x) to  (sx);
          \draw [->] (sx.south west) to[out=-135,in=-125] (x);
          \draw [->] (sx.south east) to[out=-45,in=-55] (x);
          \node [rectangle,draw=black,fill=white] (0) [below right of=x,xshift=2mm,node distance=9mm] {$0$}; 
          \draw [-] (x) to  (0.north);
          \node [rectangle,draw=black,fill=white] (s) [below left of=x,xshift=-2mm,node distance=9mm] {$s$}; 
          \draw [-] (x) to  (s.north);
          \draw [->] (s.south west) to[out=-145,in=-180,looseness=2]  (x);
        }
      \end{tikzpicture}
}
    \end{center}
    over the arithmetic semiring by Theorem~\ref{thm:termination}.
    Here the `$+$' hyperedge has weight $2$.
    
    However, $\tau$ is an interesting rule for which we could not find a termination proof using weighted type graphs over the arctic, tropical or arithmetic semiring.
    First, note that there exists an epimorphism $e: R \epi L$ with $l = e\circ r$. 
    In such situations, no weighted type graph over the arithmetic semiring can make the rule strictly decreasing.
    Second, it seems that the arctic and tropical semiring are also not applicable here. Intuitively, one might try to use the tropical semiring since the additional freedom in mapping $R$ could lead to a lower weight. 
    However, for this rule, there are morphisms $h_1, h_2: L \to R$ where $h_1$ maps $L$ onto the left component of $R$, and $h_2$ maps $L$ onto the right component of $R$. 
    So, for any mapping $t: R \to T$, either $t \circ h_1$ or $t \circ h_2$ will be at least as `light'.
    
    In~\cite{plump2018modular}, it is argued that this rule is terminating since the sum of the squares of in-degrees is decreasing in each step. This argument can be automated, for instance, using the technique proposed in~\cite[Example~5.9]{overbeek2024termination}: see Section~\ref{discussion:termination:of:graph:transformation} for a related discussion.
\end{example}

\begin{samepage}
Example~\ref{ex:limitations} illustrates a limitation of our technique:
\begin{open}\label{open:epi}
    We conjecture that our technique cannot be applied for rules $\arule: L \stackrel{l}{\leftarrow} K \stackrel{r}{\to} R$ for which there exists an epimorphism $e: R \epi L$ with $l = e\circ r$.
    Can the technique be enhanced to deal with such rules?
\end{open}
\end{samepage}

\section{Implementation and Benchmarks}
\label{sec:benchmarks}

We have implemented our techniques in \textsf{graphTT-wtg} (short for \textbf{Graph} \textbf{T}ermination \textbf{T}ool -- \textbf{w}eighted \textbf{t}ype \textbf{g}raphs) for automatically proving 
termination of graph transformation. 
The tool is written in Scala and employs Z3~\cite{z3} to solve the constraint systems.\footnote{\textsf{graphTT-wtg} is a fork of an early version of \textsf{graphTT}~\cite{overbeek2024graphtt} described in \cite[Section 7]{overbeek2024termination}. The two have since diverged.} %
The source code (including examples) are available for download at
\begin{center}
    {\small\url{http://joerg.endrullis.de/downloads/graphTT_wtg.tar.gz}}
\end{center}

\newcommand{\iop}{I^\text{op}}
The tool allows to specify the underlying category using \emph{copresheaves}, i.e., functors $F: \catname{I} \to \Set$\footnote{We restrict to acyclic copresheaves, i.e., copresheaves that are acyclic if one ignores the identity morphisms.}, which are equivalent to \emph{unary algebras}~\cite{lowe1993algebraic} and $\mathcal{C}$-sets~\cite{julia}. 
The grammar for specifying the index category $\catname{I}$ is
\begin{align*}
    \textit{IndexCategory} = \; &\textit{Object} \;\ldots\; \textit{Object} \\
    \textit{Object} = \; &\textit{Name}[\boldsymbol{[}\underbrace{\textit{Label},\ldots,\textit{Label}}_{\text{labels}}\boldsymbol{]}](\underbrace{\textit{Name},\ldots,\textit{Name}}_{\text{arguments}})[!]
\end{align*}
where \textit{Name} is the name of an object in $\catname{I}$, and the `arguments' are the targets of outgoing arrows. Optionally, an object can be equipped with a set of labels which corresponds to a slice category construction.
\textsf{graphTT-wtg} can thereby be applied to the multigraph examples~\cite{bruggink2014,bruggink2015}, and the hypergraph examples~\cite{plump2018modular}.

Moreover, the tool supports what could be called `partially simple' presheaves in which some of the element sets are simple (the elements are uniquely determined by their `argument values'). To this end, an object declaration can be followed by `!'.
For instance
\begin{align*}
  V &&
  \text{edge}(V,V) &&
  \text{plus}(V,V,V)! &&
  \text{flag}[a,b](V)
\end{align*}
describes presheaves having a set of nodes $V$,
binary edges,
ternary hyperedges `plus' between nodes of $V$, 
and unary hyperedges `flag' labelled with labels from $\{a,b\}$.
The binary edges allow for multiple parallel edges. The ternary `plus' is declared as simple, meaning that there cannot be two `plus' edges between the same tuple of nodes (the order matters).

\begin{table}[t]
\renewcommand{\arraystretch}{1.2}
\centering
\begin{tabular}{|c|c|c|c|c|c|c|c|c|c|c|c|c|c|c|}
\hline
 & \;\;A\;\; & \;\;T\;\; & \;\;N\;\; & \;A,T\; & \;A,N\; & \;T,N\; & \;\textbf{A,T,N}\;\\
\hline
Example~\ref{ex:monic:loop} &  &  & 0.08 &  & 0.06 & 0.06 & \textbf{0.09}\\
Example~\ref{ex:monic:triangle} &  &  & 0.12 &  & 0.11 & 0.11 & \textbf{0.12}\\
\quad{}Example~\ref{ex:monic:simple}\quad{} &  & 0.03 &  & 0.03 &  & 0.03 & \textbf{0.04}\\
Example~\ref{ex:empty} & 0.3 &  & 0.06 & 0.04 & 0.04 & 0.05 & \textbf{0.04}\\
\cite[Example 3.8]{plump1995on} & 0.17 & 0.19 & 0.54 & 0.20 & 0.20 & 0.21 & \textbf{0.21}\\
\cite[Figure 10]{plump2018modular} & 0.43 & 0.42 & 0.49 & 0.46 & 0.51 & 0.48 & \textbf{0.52}\\
\cite[Example 3]{plump2018modular} & 0.11 & 0.09 & 0.26 & 0.1 & 0.12 & 0.11 & \textbf{0.12}\\
\cite[Example 6]{plump2018modular}  &  &  &  &  &  &  &  \\
\cite[Example 4]{bruggink2015} & 0.30 & 0.35 & 0.56 & 0.24 & 0.26 & 0.31 & \textbf{0.27}\\
\cite[Example 5]{bruggink2015} &  &  & 1.11 &  & 0.89 & 0.82 & \textbf{0.83}\\
\cite[Example 6]{bruggink2015} &  &  & 2.44 &  & 1.20 & 1.25 & \textbf{1.34}\\
\cite[Example 1]{bruggink2014} & 0.10 &  & 0.18 & 0.09 & 0.11 & 0.19 & \textbf{0.11}\\
\cite[Example 4]{bruggink2014} & 0.07 & 0.08 & 0.26 & 0.09 & 0.08 & 0.10 & \textbf{0.10}\\
\cite[Example 5]{bruggink2014} & 0.61 & 0.50 & 2.39 & 0.49 & 0.68 & 0.57 & \textbf{0.52}\\
\cite[Example 6]{bruggink2014} &  &  &  & 0.23 &  & 0.47 & \textbf{0.39}\\
\hline
\end{tabular}%
\medskip
\caption{Benchmark for the examples from \cite{bruggink2014,bruggink2015,plump1995on,plump2018modular,overbeek2024termination} and the examples from Section~\ref{sec:examples}. The columns display different semiring configurations: A, T and N stand for arctic, tropical and arithmetic semiring, respectively. Runtimes are indicated in seconds (s). Observe that \cite[Example 6]{bruggink2014} requires either a combination of both arctic with tropical, or tropical with arithmetic semiring, for different iterations of the relative termination proof.}
\label{tabular:benchmarks}
\end{table}

\textsf{graphTT-wtg} supports a simple strategy language of the form
\begin{align*}
    &\hspace*{-7mm}\textit{Strategy} =\\ 
    \; &\textit{Strategy} \mathop{\textbf{;}} \textit{Strategy} &&\text{sequential composition}\\
    \mid \; & \textit{Strategy} \mathop{\boldsymbol{|}} \textit{Strategy} 
    &&\text{parallel composition}\\
    \mid \; & \text{repeat}(\textit{Strategy}) 
    &&\text{repeat while successful}\\
    \mid \; & \text{arctic}(\text{size}=\nat,\text{bits}=\nat,\text{timeout}=\nat)
    &&\text{arctic type graphs}\\ 
    \mid \; & \text{tropical}(\text{size}=\nat,\text{bits}=\nat,\text{timeout}=\nat)
    &&\text{tropical type graphs}\\ 
    \mid \; & \text{arithmetic}(\text{size}=\nat,\text{bits}=\nat,\text{timeout}=\nat)
    &&\text{arithmetic type graphs}
\end{align*}
This enables the user to specify what technique is applied in what order and for how long. 
Each strategy can transform the system by removing rules.
A strategy is considered successful if at least one rule has been removed.
For the basis strategies, that is, arctic, tropical or arithmetic weighted type graphs, `size' specifies the size of the type graph, `bits' the number of bits for encoding the semiring elements, and `timeout' is the the maximum number of seconds to try this technique.

Table~\ref{tabular:benchmarks} shows benchmarks for running \textsf{graphTT-wtg} on the examples from~\cite{bruggink2014,bruggink2015,plump1995on,plump2018modular}\footnote{We have included all uniform termination examples from~\cite{bruggink2014,bruggink2015,plump1995on,plump2018modular}. We did not include \cite[Example 3]{bruggink2014} as this concerns local termination on a particular starting language. We have also not included~\cite[Example 4.1]{plump1995on} since this concerns a category of term graphs (which our tool does not support). The paper~\cite{overbeek2024termination} contains examples of terminating \pbpostrong{} systems. We have included those examples that have DPO equivalents; see Example~\ref{ex:monic:triangle} and Example~\ref{ex:monic:simple}.} under different combinations of the arctic, tropical and arithmetic semiring.\footnote{The benchmarks were obtained on a laptop with an i7-8550U cpu having 4 cores, a base clock of 1.8GHz and boost of 4GHz.}
The techniques proposed in the current paper successfully establish termination of all examples, with the exception of~\cite[Example 6]{plump2018modular}. 
See Example~\ref{ex:limitations} for a discussion about this system.

\section{Related Work}
\label{sec:related}
\label{discussion:termination:of:graph:transformation}

The approaches~\cite{bruggink2014,bruggink2015} by Bruggink et al.\ are the most relevant to our paper. Both employ weighted type graphs for multigraphs to prove termination of graph transformation.
We significantly strengthen the weighted type graphs technique for rewriting with monic matching and we generalize the techniques to arbitrary categories and different variants of DPO.

Plump has proposed two systematic termination criteria~\cite{plump1995on,plump2018modular} for hypergraph rewriting with DPO.
The approach in~\cite{plump1995on} is based on the concept of forward closures. 
The paper~\cite{plump2018modular} introduces a technique for modular termination proofs based on sequential critical pairs. The modular approach can often simplify the termination arguments significantly. However, the smaller decomposed systems still require termination arguments.
Therefore, our techniques and the approach from~\cite{plump2018modular} complement each other perfectly.
Our technique can handle all examples from the papers~\cite{plump1995on,plump2018modular}, except for one (see Example~\ref{ex:limitations} for discussion of this system).

In~\cite{overbeek2024termination}, we have proposed a termination approach based on weighted element counting.
Like the method presented in this paper, the method~\cite{overbeek2024termination} is defined for general categories (more specifically, rm-adhesive quasitoposes), but for \pbpostrong{}~\cite{overbeek2023quasitoposes,overbeek2024phdthesis} instead of DPO. \pbpostrong{} subsumes DPO in the quasitopos setting~\cite[Theorem 72]{overbeek2023quasitoposes} if both the left-hand morphism $l$ of DPO rules and matches $m$ are required to be regular monic.
The weighted element counting approach defines the weight of an object $G$ by means of counting weighted elements of the form $t : T \to G$, meaning it is roughly dual to the approach presented in this paper. 

An example that the weighted type graph approach can prove, but the weighted element counting approach cannot prove, is Example~\ref{example:tree:counter}~\cite[Example 6]{bruggink2015}, which requires measuring a global property rather than a local one. 
Conversely, Example~\ref{ex:limitations}~\cite[Example 6]{plump2018modular} is an example that seems unprovable using weighted type graphs, while the element counting approach can prove it rather straightforwardly~\cite[Example~5.9]{overbeek2024termination}. So it appears the two methods may be complementary even in settings where they are both defined.

Levendovszky et al.~\cite{injective:2007} propose a criterion for termination of DPO rules with respect to injective matching. Roughly speaking, they show that a system is terminating if the size of repeated sequential compositions of rules tends to infinity. It remains to be seen if this criterion can be fruitfully automated.

Bottoni et al.~\cite{hlr2005termination} introduce a framework for proving termination of high-level replacement units (these are systems with external control expressions). They show that their framework can be used for node and edge counting arguments which are subsumed by weighted type graphs (see further~\cite{bruggink2015}).

There are also various techniques that generalize TRS methods to term graphs~\cite{plump1999term,plump97simplification,moser2016kruskal} and drags~\cite{dershowitz2018graph}.

\begin{remark}
The current paper is an extension of~\cite{gwtg2024icgt} with full proofs, additional examples, and a gentle introduction to our technique (in Section~\ref{sec:overview}). Additionally, we repair a mistake in Definition~\ref{def:traceability} concerning the definition of strong traceability.
\end{remark}

\section{Future Work}
\label{sec:future}

There are several interesting directions for future research:
\begin{enumerate}
\item \label{future:frameworks}
    Can the weighted type graphs technique be extended to other algebraic graph transformation frameworks, such as SPO~\cite{lowe1993algebraic}, SqPO~\cite{corradini2006sesqui}, PBPO~\cite{corradini2019pbpo}, AGREE~\cite{corradini2020algebraic} and \pbpostrong{}~\cite{overbeek2023quasitoposes}?
\item 
    Can the technique be enhanced to deal with examples like Example~\ref{ex:limitations}?
    (See also Open Question~\ref{open:epi}.)
\item
    Can the technique be extended to deal with negative application conditions~\cite{dpo:negative}?
\item 
    Are there semirings other than arctic, tropical and arithmetic that can be fruitfully employed for termination proofs with the weighted type graph technique?
\end{enumerate}
Concerning~\eqref{future:frameworks}, an extension to \pbpostrong{} would be particularly interesting.
In~\cite{overbeek2023quasitoposes}, it has been shown that, in the setting of quasitoposes, every rule of the other formalisms can be straightforwardly encoded as a \pbpostrong{} rule that generates the same rewrite relation.\footnote{The subsumption of SPO by \pbpostrong{} in quasitoposes remains a conjecture~\cite[Remark 23]{overbeek2023quasitoposes}, but has been established for \Graph{}.}
However, our proofs (Lemma~\ref{lem:one:side}) crucially depend on a property specific to pushouts (Lemma~\ref{lem:bijection}). It will therefore be challenging to lift the reasoning to the pullbacks in \pbpostrong{} without (heavily) restricting the framework (to those pullbacks that are pushouts).

\bibliographystyle{alphaurl}
\bibliography{main}

\end{document}